\documentclass[onecolumn,a4paper]{quantumarticle}
\pdfoutput=1

\usepackage[utf8]{inputenc}
\usepackage{amsmath,amssymb,amsthm}
\usepackage{mathtools}
\usepackage{bm}
\newcommand{\ket}[1]{\lvert #1 \rangle}

\usepackage{graphicx}
\usepackage{booktabs}
\usepackage{pifont}
\usepackage{tikz}
\usetikzlibrary{arrows.meta,calc,decorations.pathmorphing,patterns}
\usepackage[hidelinks]{hyperref}
\usepackage{cleveref}
\crefname{condition}{Condition}{Conditions}
\Crefname{condition}{Condition}{Conditions}
\crefname{assumption}{Assumption}{Assumptions}
\Crefname{assumption}{Assumption}{Assumptions}
\crefname{estimate}{Numerical estimate}{Numerical estimates}
\Crefname{estimate}{Numerical estimate}{Numerical estimates}

\theoremstyle{plain}
\newtheorem{theorem}{Theorem}
\newtheorem{lemma}{Lemma}
\newtheorem{proposition}{Proposition}
\newtheorem{corollary}{Corollary}
\newtheorem{conjecture}{Conjecture}
\theoremstyle{definition}
\newtheorem{definition}{Definition}
\newtheorem{condition}{Condition}
\newtheorem{assumption}{Assumption}
\theoremstyle{remark}
\newtheorem{remark}{Remark}
\newtheorem{estimate}{Numerical estimate}

\newcommand{\Zt}{\mathbb{Z}_2}
\newcommand{\anyon}[1]{\mathsf{#1}}
\newcommand{\ee}{\anyon{e}}
\newcommand{\mm}{\anyon{m}}
\newcommand{\eps}{\anyon{\epsilon}}
\newcommand{\Xbar}{\overline{X}}
\newcommand{\Ybar}{\overline{Y}}
\newcommand{\Zbar}{\overline{Z}}
\newcommand{\Hxy}{H_{XY}}
\newcommand{\Eacc}{E_{A}}
\newcommand{\Ehat}{\widehat{E}_{A}}
\newcommand{\pacc}{p_{\mathrm{acc}}}
\newcommand{\Dstab}{\Delta_{\mathrm{stab}}}
\newcommand{\Braid}{B}

\newcommand{\Qb}{Q_b}
\newcommand{\Pau}{\mathcal P}
\DeclareMathOperator{\dist}{dist}

\DeclareMathOperator{\Tr}{Tr}

\begin{document}

\title{A conditional no-go for resource-free magic-axis measurement on a static surface code}

\author{Jiachen Shen}
\email{jshen28@cougarnet.uh.edu}
\affiliation{Department of Electrical and Computer Engineering, University of Houston, Houston, Texas 77204, USA}
\author{Hui Zhong}
\email{hzhongzg@gmail.com}
\thanks{Corresponding author.}
\affiliation{Department of Electrical and Computer Engineering, University of Houston, Houston, Texas 77204, USA}

\maketitle

\begin{abstract}
Under stated assumptions, a static surface-code patch that adds no fold or \mbox{self-dual} structure cannot perform the
magic-axis check that magic-state cultivation relies on while still accepting often. This is a conditional no-go.
Fault-tolerant machines spend much of their cost making magic states, and cultivation makes them in place by measuring
the magic axis, which every known construction does through a fold or \mbox{self-dual} patch that it is folklore to call
necessary. We test the folklore. The no-go says that a useful check must pay for the magic axis somewhere. It can add a
charge-converting resource, it can leave the dilute regime of its accepted history, or it can accept only exponentially
rarely. For a single stabilizer-measurement transcript this is proved outright, from a topological reading of the accepted
outcome. For adaptive, post-selected protocols in a bounded-depth (polynomial spacetime-volume) model, it holds under two
structural assumptions plus a subcriticality assumption. We isolate the one open assumption, show that protection alone does not force it, and give the threshold
any resolution must address. What remains is a single conjecture.
\end{abstract}

\section{Introduction}\label{sec:intro}

Fault-tolerant quantum computers built from surface codes are efficient at Clifford operations and pay almost all of
their cost for the non-Clifford gates, which are supplied by magic states such as $\ket{T}$~\cite{BravyiKitaev2005,FowlerMariantoni2012}.
Fault tolerance is one layer of a broader effort to make quantum computing practical, alongside distributed quantum
architectures~\cite{Zhong2025UNIQ,Zhong2024DistributedQDP} and the noise, error-correction, and privacy management of
near-term processors~\cite{Zhong2023TuningQEC,Li2023ProjectionDP,Zhao2024QDPInsights,Ju2023HarnessingNoises}; here we
address the fault-tolerant layer, and within it the magic-state bottleneck.
A large-scale computation needs magic states in enormous numbers, and producing each clean one dominates the qubit and
time budget, so any saving in magic-state production is a saving in the whole machine. The standard way to produce
clean magic states is distillation, a long line of protocols that trade many noisy copies for fewer cleaner
ones~\cite{BravyiKitaev2005,BravyiHaah2012,MeierEastinKnill2013,Jones2013,CampbellTerhalVuillot2017}. Distillation is
expensive, and the reason is quantified by the resource theory of magic. That theory measures how far a state sits
from the free stabilizer set and shows that the distance cannot be created by free
operations~\cite{Veitch2014,HowardWallman2017,HowardCampbell2017,SeddonCampbell2019,BravyiGosset2016,HeinrichGross2019}.
The distance is a genuine resource, and a fault-tolerant architecture must pay to import it from outside the free set.
A recent and much cheaper alternative is magic state cultivation, which grows a $\ket{T}$ state in place on a small
surface-code patch and then enlarges the code distance~\cite{GidneyShuttyJones2024,Chen2026}. Cultivation depends on one delicate step. It must
measure the magic axis $\Hxy=(\Xbar+\Ybar)/\sqrt2$, a non-Pauli logical observable whose eigenstates are the
$\ket{T}$-type states. Every surface-code cultivation performs this check through a special piece of structure, a
fold-transversal Hadamard or a self-dual patch, that turns the code back on itself~\cite{Claes2025,Sahay2025,Vaknin2025}.

It is folklore that this structure is necessary, that a plain surface-code patch cannot measure the magic axis on its
own. The reason usually given is topological. In the surface code each logical Pauli is an anyon string. Here $\Xbar$
carries the charge $\mm$, $\Zbar$ carries $\ee$, and $\Ybar$ carries $\eps=\ee\times\mm$. A sharp check of $\Hxy$
superposes $\Xbar$ and $\Ybar$, hence the charges $\mm$ and $\eps$, and these two braid nontrivially with
$\Braid(\mm,\eps)=-1$. A plain patch cannot coherently carry both charges. A fold supplies the missing conversion. This paper tests the
folklore claim directly. It asks whether post-selection can replace the fold on a resource-free patch.

Existing results do not settle this, because they concern the wrong class of operations. Constant-depth
locality-preserving unitaries on a two-dimensional topological code are Clifford~\cite{BravyiKoenig2013}, which already
forbids a unitary magic gate. But a measurement is not a unitary. Cultivation reaches its check through adaptive
measurement, feed-forward, and post-selection, none of which the unitary result covers. Measurement can do more than
shallow unitaries, and it can implement anyon relabelings and boundary flow that unitaries
cannot~\cite{Aasen2023,WebsterBartlett2018}. The closest measurement-regime result is adaptive constant-depth anyon
charge measurement~\cite{BravyiKimKlieschKoenig2022}, which shows that measurement can read out topological charge in
constant depth. Our theorem must be careful to sit beyond it, because it is exactly the capability a skeptic would
point to. Adaptive charge measurement in the abelian surface code reads Pauli, and hence isotropic, charge data. It
does not superpose $\mm$ and $\eps$, so it does not reach the magic axis.
The gap between ``measurement can read charge'' and ``measurement can coherently convert charge'' is precisely where
the magic of $\Hxy$ lives, and it is what a no-go for the magic check has to close.

The physical picture behind the whole argument is a competition between energy and entropy. A coherent $\Hxy$ effect
needs the two braiding charges $\mm$ and $\eps$ to be related across the patch, and on a resource-free patch the only
way to relate them is a connected chain of cells that converts one into the other. Protection makes any such chain
long, of length at least the code distance, so a single chain is costly. But there are exponentially many long chains,
and their entropy can pay for their cost. Whether it does is a phase-transition question, the same question that
decides whether a gas of defects is dilute or condensed. If the accepted, post-selected ensemble keeps the conversion
chains dilute, the coherent effect is exponentially suppressed and the check fails at good acceptance. If the ensemble
condenses the chains, the polymer bound no longer excludes the check, and whether coherent alignment can then be
established is precisely the remaining issue; condensation is an escape route the bound leaves open, not a construction
that succeeds. The theorem reduces the entire question to which side of that transition the accepted ensemble sits on,
and the one input we cannot supply is exactly that the ensemble stays dilute.

We prove a conditional no-go. Under two structural assumptions on the accepted history and one subcriticality
assumption, a static resource-free patch cannot measure the magic axis sharply while accepting often. Subcriticality
is the irreducible core of the problem. The result is a no-go theorem in the standard sense~\cite{Galley2022}. We identify five
conditions, each satisfied by some natural protocol, and prove they cannot hold together. A good cultivation check
wants sharp magic, resource-freeness, protection, and frequent acceptance. The
fifth is subcriticality, a statement that the accepted history does not fill with charge-converting defects. A check
that keeps the first four must give up the fifth: its accepted-effect expansion is forced into a critical or
supercritical patch-spanning conversion sector. This is the effect-level analogue of supplying an anyon-permuting
resource, and the measurement pays for the magic axis in one currency or the other. \Cref{fig:patch} shows the patch and the object at the heart of the argument, and
\cref{app:example} works the whole story on the smallest patch, the distance-3 code.

\begin{figure}[t]
\centering
\begin{tikzpicture}[scale=1.0,>=Latex]
  \draw[thick] (0,0) rectangle (4,4);
  \draw[very thick] (0,0)--(0,4);
  \draw[very thick] (4,0)--(4,4);
  \draw[very thick,dash pattern=on 2pt off 2pt] (0,0)--(4,0);
  \draw[very thick,dash pattern=on 2pt off 2pt] (0,4)--(4,4);
  \node[anchor=south] at (2,4.05) {\footnotesize rough boundary};
  \node[anchor=north] at (2,-0.05) {\footnotesize rough boundary};
  \node[rotate=90,anchor=south] at (-0.15,2) {\footnotesize smooth boundary};
  \node[rotate=-90,anchor=south] at (4.15,2) {\footnotesize smooth boundary};
  \draw[red,very thick] (0,2.6)--(4,2.6);
  \node[red,anchor=south east] at (4,2.62) {\footnotesize $\Xbar\ (\mm)$};
  \draw[blue,very thick] (1.2,0)--(1.2,4);
  \node[blue,anchor=south west] at (1.22,0.1) {\footnotesize $\Zbar\ (\ee)$};
  \draw[teal,very thick,decorate,decoration={snake,amplitude=1.2pt,segment length=6pt}] (0,1.2)--(4,1.2);
  \node[teal,anchor=south,align=center] at (2.6,1.32) {\footnotesize conversion polymer\\$(\mm\!\to\!\eps)$};
\end{tikzpicture}
\caption{The distance-$d$ surface-code patch. The red and blue lines are the native logical strings $\Xbar$ (charge
$\mm$) and $\Zbar$ (charge $\ee$). The teal line marks the missing ingredient. A magic-axis check must turn an $\mm$
string into an $\eps$ string across the patch, and the resource-free patch has no local way to do that.}
\label{fig:patch}
\end{figure}

We are explicit about what is proved and what is assumed, because the honesty of a conditional no-go is the whole of
its value. The unconditional core is two statements. A single transcript of resource-free stabilizer measurements has
isotropic anyon charge, so it cannot implement the check; and the magic of an accepted effect equals its acceptance
probability times the distance of its normalised effect from the stabilizer octahedron, a normalisation identity we
prove outright. The first is close to the statement that stabilizer operations cannot measure a non-Pauli observable,
dressed in the language of annular charge that the rest of the argument needs. Neither statement by itself constrains a
decoded family that mixes exponentially many transcripts. The conditional part extends the obstruction to that family.
It combines the normalisation identity with two structural inputs we assume, a local charged-polymer expansion of the
effect and a minimum-length property, together with subcriticality, which makes the sum over exponentially many
patch-spanning polymers converge. Whether subcriticality itself follows from resource-freeness and protection is the
one question we leave open. It does not follow by the natural argument: post-selection reweights a spanning defect
only weakly, while subcriticality needs a constant suppression per unit length. The constructions that would refute it
either use the forbidden resource or leave a coherent-alignment step unestablished. We also compute a spatial-strip
proxy for the entropy scale the assumption must clear, so that it is a concrete target and not a hope, while noting that
the operative spacetime connective constant is not itself computed.

The paper makes three contributions. The first is a framework. We give a topological order parameter for a logical
measurement, the distance of its accepted effect from the stabilizer octahedron, which tells a coherent magic check
apart from a classical imitation that raw charge support cannot. The second is the no-go itself. Under the named
assumptions, a sharp protected check with good acceptance cannot stay both resource-free and subcritical, so it must
add a charge-converting resource, leave the subcritical regime, or accept only exponentially rarely. We state it with
its assumptions named, its unconditional core isolated, and a classification that places every known construction by
the condition it gives up. The third is an open problem. A single line-tension conjecture would remove the one open
assumption, subcriticality, and leave the result resting on the two structural assumptions, and we give a spatial-strip
proxy for the entropy scale any resolution must clear, the operative spacetime threshold itself remaining uncomputed.

The result matters for two reasons beyond the specific check. The first is practical. Cultivation is the cheapest
known way to make magic states, and its whole saving comes from doing the magic measurement in place on a small patch
instead of importing a distilled state. Under the stated assumptions a useful in-place magic
measurement must pay for charge conversion somehow, and known protocols pay by using a fold or a self-dual patch. This
tells a designer that the structure is not an artifact of the current protocols to be optimised away, so effort is
better spent making the fold cheap than making it disappear. The
second reason is conceptual. Most no-go theorems in fault tolerance are about unitaries, and the honest situation for
measurements has been unclear, because measurement is genuinely more powerful and the naive obstructions do not
transfer. Our result shows that a measurement no-go is available, that it takes the form of a resource-necessity
statement with an order parameter and a phase-transition assumption, and that the assumption is the real physics of the
problem. This is a template one can carry to other non-Clifford operations and other codes, and the method, reading a
logical operation against the free polytope and asking which defect sectors an accepted history can support, is not
special to $\Hxy$.

The paper is organised as follows. \Cref{sec:prelim} fixes the surface-code block, the description of an adaptive
protocol as an accepted effect, and the two definitions the theorem rests on, the magic witness and the resource-free
class. \Cref{sec:results} states the five conditions, proves the fine-grained isotropy theorem and the conditional
suppression theorem, assembles them into the no-go, and records what each condition contributes. \Cref{sec:classification}
reads the theorem as a classification of prior constructions and places it relative to the earlier no-go theorems for
topological codes. \Cref{sec:open} examines the one open assumption, states the line-tension conjecture, and explains
why the natural arguments for it are circular. \Cref{sec:discussion} separates what is proved from what is assumed and
from what is not claimed. The appendices carry the full proofs. \Cref{app:thm1} proves the isotropy theorem and its annular-charge lemmas.
\Cref{app:thm2} proves the conditional suppression theorem with its normalisation and polymer bounds.
\Cref{app:linetension} analyses the line-tension conjecture and the closest would-be counterexample. \Cref{app:numerics}
gives the spatial-strip proxy for the entropy scale. A worked distance-3 example runs through \cref{app:example}. \Cref{app:localclass}
carries out the local-noise reduction, deriving the charged-polymer and cleaning content of the structural assumptions
from locality and isolating the residual subcriticality hypothesis as a checkable property of the decoder.

\section{Setup and definitions}\label{sec:prelim}

This section fixes the object we study and states the two definitions the no-go rests on. We work with one fixed
piece of hardware, a single planar surface-code patch that stores one logical qubit and is never deformed. On this
patch we ask whether a magic logical measurement can be carried out using only operations that add no topological
structure. To make that question precise we need three ingredients. First, we fix the anyon content of the patch and
the logical operator we want to measure (\cref{sec:prelim-block}). Second, we describe an arbitrary adaptive,
post-selected protocol as a single accepted effect (\cref{sec:prelim-effects}). Third, we define a witness that tells
a genuine magic measurement apart from a classical imitation (\cref{sec:prelim-witness}). We close with the definition
of the resource-free operations the theorem is about (\cref{sec:prelim-resourcefree}).

We use standard surface-code facts without reproving them. The stabilizer formalism describes the code and its logical
operators~\cite{Gottesman1997,AaronsonGottesman2004,DehaeneDeMoor2003}. The toric-code phase and its gapped boundaries
describe the anyons, the boundaries, and the edge condensates~\cite{Kitaev2003,KitaevKong2012,BeigiShorWhalen2011,Barkeshli2019}. Protection and the acceptance decision
rest on the topological-memory and decoding literature~\cite{Dennis2002,DuclosCianciPoulin2010,DelfosseNickerson2021}.
The operations the theorem excludes are exactly the ones that other fault-tolerant constructions use to move logical
information, namely lattice surgery~\cite{Horsman2012,RaussendorfHarringtonGoyal2007,LitinskiVonOppen2018,Litinski2019}
and the anyon-permuting dynamics of Floquet and dynamical codes~\cite{HastingsHaah2021,Vuillot2021,Davydova2023}; the
resource-free class is defined against these, and \cref{sec:classification} places each of them relative to the
theorem.

\begin{table}[t]
\centering
\caption{Notation used throughout.}
\label{tab:notation}
\resizebox{\textwidth}{!}{%
\begin{tabular}{ll}
\toprule
Symbol & Meaning \\
\midrule
$d$ & code distance of the planar surface-code patch \\
$\{1,\ee,\mm,\eps\}$ & toric-code anyons, $\eps=\ee\times\mm$; braiding $\Braid$, $\Braid(\mm,\eps)=-1$ \\
$\Xbar,\Ybar,\Zbar$ & logical Paulis; Wilson strings of charge $\mm,\eps,\ee$ \\
$\Hxy=(\Xbar+\Ybar)/\sqrt2$ & target magic axis; L\"uders projectors $(I\pm\Hxy)/2$ \\
$K_\tau,\ E_\tau=K_\tau^\dagger K_\tau$ & Kraus operator and effect of a fine transcript $\tau$ \\
$A,\ \Eacc=\sum_{\tau\in A}K_\tau^\dagger K_\tau$ & accepted family and its coarse-grained effect \\
$\pacc,\ \Ehat$ & acceptance probability and normalised logical effect \\
$\Dstab(\Ehat)$ & magic witness: distance to the stabilizer-effect octahedron \\
$\Qb$ & code-preserving annular charge content of the effect \\
$L(d)$ & instrument-level fault distance \\
$G_d,\ |V(G_d)|$ & accepted spacetime cell graph and its (polynomial) volume \\
$\mu_G,\ z,\ z_c=1/\mu_G,\ \tau$ & spacetime conversion-polymer connective constant, activity, critical activity, line tension \\
\bottomrule
\end{tabular}}
\end{table}

\subsection{The surface-code block and the target measurement}\label{sec:prelim-block}

We fix a planar rotated surface code encoding one logical qubit, with fixed rough and smooth boundaries and code
distance $d$~\cite{Kitaev2003,BravyiKitaev1998,FowlerMariantoni2012,BombinMartinDelgado2007}. Data qubits sit on the
faces of a rotated square lattice, and the stabilizer group is generated by weight-four $X$-type and $Z$-type checks in
the bulk, with weight-two checks along the boundaries. The two rough boundaries absorb $\ee$ charge and the two smooth
boundaries absorb $\mm$ charge, and this choice of condensates is what pins the logical operators to definite anyon
type. The distance $d$ is the least weight of a logical operator, and it is the length of the shortest string
connecting a pair of like boundaries.

The bulk realises the toric-code phase $D(\Zt)$, whose anyons are $\{1,\ee,\mm,\eps\}$ with $\eps=\ee\times\mm$. The
braiding form $\Braid$ is nondegenerate, and the only nontrivial values we use are
$\Braid(\ee,\mm)=\Braid(\mm,\eps)=\Braid(\ee,\eps)=-1$; every charge braids trivially with itself and with the vacuum.
The logical Pauli operators are Wilson lines of definite anyon type. $\Xbar$ is an $\mm$-string between the two smooth
boundaries, $\Zbar$ is an $\ee$-string between the two rough boundaries, and $\Ybar=i\Xbar\Zbar$ is an $\eps$-string
along the diagonal. The single logical anticommutation $\Xbar\Zbar=-\Zbar\Xbar$ is the lattice image of the toric-code
braiding $\Braid(\ee,\mm)=-1$, and it is the one algebraic fact the whole obstruction turns on.

The logical measurement we target is the nondemolition L\"uders check of the magic axis
\begin{equation}\label{eq:hxy}
  \Hxy=\frac{\Xbar+\Ybar}{\sqrt2},\qquad
  \text{with L\"uders projectors}\quad \frac{I\pm\Hxy}{2}.
\end{equation}
This is the check that a surface-code cultivation of the $\ket{T}$ state needs~\cite{GidneyShuttyJones2024,Claes2025,Sahay2025}.
The hard part is coherence. An $\Hxy$ eigenstate keeps a fixed phase between the $\mm$-string sector and the
$\eps$-string sector. A classical choice between them is not enough. Because $\Braid(\mm,\eps)=-1$, these two sectors braid nontrivially, so a coherent
$\Hxy$ effect requires the patch to carry $\mm$ and $\eps$ charge at once. A patch that can only carry a set of
mutually transparent charges cannot host it. \Cref{app:example} works the whole story on the smallest patch, the
distance-3 code, and is a concrete instance to keep in mind while reading the definitions.

\subsection{Adaptive protocols as accepted effects}\label{sec:prelim-effects}

A cultivation protocol measures, adapts, and then accepts or rejects. It interleaves measurements, classical
feed-forward, and a final acceptance decision, and only accepted runs produce output. We describe it at two grains. A
single \emph{fine transcript} $\tau$ is one complete record of measurement outcomes and feed-forward choices. It acts
on the code space by a Kraus operator $K_\tau$, and its accepted effect is $E_\tau=K_\tau^{\dagger}K_\tau$, the positive
operator for which $\langle\psi|E_\tau|\psi\rangle$ is the probability that the run follows transcript $\tau$ and is
accepted, starting from $|\psi\rangle$. A decoder groups fine transcripts into an \emph{accepted family} $A$, the set
of transcripts the protocol declares successful. The object the reader should keep in mind throughout is the
coarse-grained effect of that family,
\begin{equation}\label{eq:accepted-effect}
  \Eacc=\sum_{\tau\in A}K_\tau^{\dagger}K_\tau ,
\end{equation}
together with its acceptance probability $\pacc=\tfrac12\Tr(P_{\text{code}}\Eacc P_{\text{code}})$ and its
normalised logical effect $\Ehat=P_{\text{code}}\Eacc P_{\text{code}}/(2\pacc)$. We must work with $\Eacc$, not
any single $K_\tau$. A decoder can collapse exponentially many transcripts into one high-acceptance event, and a
statement proved transcript by transcript can fail for the family they form.

To read the anyon content of an accepted branch we use the annular charge of its effect. Let $b$ be an annulus
that wraps the logical string. Each Pauli operator $P$ supported near $b$ has an annular charge
$d_b(P)\in\{1,\ee,\mm,\eps\}$, read from the homology classes of its $X$- and $Z$-strings after cleaning by code
stabilizers and contractible neutral dressings (\cref{app:thm1}). The \emph{annular charge content} of the effect is
the set $\Qb(A)=\{d_b(s)\}$ of charges carried by its code-preserving Pauli support.
We call the annular charge content \emph{isotropic} when all its charges braid trivially with one another, so that
$\Qb(A)$ lies in a transparent subgroup like $\{1,\mm\}$ or $\{1,\ee\}$. The coherent $\Hxy$ effect of
\cref{eq:hxy} needs the non-isotropic pair $\mm,\eps$. A subtlety the proof must respect is that the charge is read on
\emph{cleaned, code-preserving} representatives, not on the raw Pauli support. \Cref{app:thm1} shows that raw support
can carry a non-isotropic pair while acting trivially on the code space, so $\Qb(A)$ is defined through the code
projector.

One more point fixes the level at which we work. A protocol defines a full quantum instrument, the map that sends the
input to the accepted output together with the classical acceptance flag. The instrument carries more information than
its effect $\Eacc$. It records the post-measurement state, whereas $\Eacc$ records only the accepted outcome
statistics, so the effect does not by itself determine the output state. We prove an obstruction at the level of the
effect. This is a \emph{necessary condition} for the instrument: a nondemolition $\Hxy$ L\"uders measurement has a
definite accepted effect, the projector $(I+\Hxy)/2$, so ruling out that effect rules out any instrument that would
realise the measurement. The obstruction is therefore weaker than a full instrument-level characterisation, and we do
not claim the effect captures the output state that cultivation ultimately wants. It captures exactly the part a
coherent check must get right, and the witness below acts at the same level.

\subsection{The magic witness}\label{sec:prelim-witness}

Not every effect with $\Xbar$ and $\Ybar$ support is a magic measurement, so we need a witness that is not fooled
by classical imitations. The right object is the distance of the normalised logical effect from the set of effects
that native Pauli measurements can build. For one logical qubit, write $\Ehat=\tfrac12(I+\bm v\cdot\bm\sigma)$ with
Bloch vector $\bm v=(v_x,v_y,v_z)$. The native operations are the six Pauli-Wilson-measurement projectors
$\tfrac12(I\pm\Xbar)$, $\tfrac12(I\pm\Ybar)$, $\tfrac12(I\pm\Zbar)$, whose Bloch vectors are the six unit vectors
$\pm\hat e_x,\pm\hat e_y,\pm\hat e_z$. A protocol that measures Pauli axes and mixes the outcomes classically produces
a convex combination of these, so its normalised effect has $\bm v$ in their convex hull, the stabilizer-effect
octahedron $\{|v_x|+|v_y|+|v_z|\le 1\}$. This octahedron is the one-qubit face of the resource theory of
magic~\cite{BravyiKitaev2005,Veitch2014,HowardCampbell2017}, the same polytope of free effects whose extreme points are
the stabilizer measurements. An effect outside it cannot be assembled from Pauli measurements and classical mixing, so
a positive distance to it is a certificate that the check does something no classical combination of Pauli readouts
can.

\begin{definition}[Magic witness]\label{def:dstab}
The \emph{magic} of a normalised one-qubit logical effect $\Ehat=\tfrac12(I+\bm v\cdot\bm\sigma)$ is its
$\ell_1$-distance to the stabilizer-effect octahedron,
\begin{equation}
  \Dstab(\Ehat)=\dist\!\bigl(\bm v,\ \{\,\bm w:\|\bm w\|_1\le1\,\}\bigr).
\end{equation}
\end{definition}

\begin{figure}[t]
\centering
\begin{tikzpicture}[scale=2.2,>=Latex]
  \draw[->] (-1.35,0)--(1.45,0) node[right] {\footnotesize $v_x$};
  \draw[->] (0,-1.35)--(0,1.45) node[above] {\footnotesize $v_y$};
  \draw[thick,fill=gray!12] (1,0)--(0,1)--(-1,0)--(0,-1)--cycle;
  \node[gray!60!black,align=center] at (-1.0,-0.95) {\footnotesize stabilizer\\octahedron};
  \draw[gray!60!black,thin] (-0.72,-0.78)--(-0.42,-0.42);
  \fill (0.5,0.5) circle (0.02);
  \node[anchor=west] at (0.53,0.46) {\footnotesize $\tfrac12P_{\Xbar,+}\!+\!\tfrac12P_{\Ybar,+}$};
  \fill[red] (0.707,0.707) circle (0.022);
  \node[red,anchor=south west] at (0.72,0.72) {\footnotesize $\Hxy$};
  \draw[red,<->] (0.5,0.5)--(0.707,0.707);
  \node[red] at (0.40,0.92) {\footnotesize $\Dstab$};
\end{tikzpicture}
\caption{The $v_z=0$ slice of the stabilizer-effect octahedron. The sharp $\Hxy$ projector (red) lies outside it at
distance $\Dstab=\sqrt2-1$; the classical mixture $\tfrac12P_{\Xbar,+}+\tfrac12P_{\Ybar,+}$ sits on the boundary
edge, with $\Dstab=0$.}
\label{fig:octahedron}
\end{figure}

The witness does exactly the job we need. The octahedron $\{\|\bm w\|_1\le1\}$ has eight facets, one for each choice of
signs $(\pm,\pm,\pm)$, and in the octant of the sharp check the nearest facet is $w_x+w_y+w_z=1$. The $\ell_1$-distance
of a point $\bm v$ outside it is the excess of the $\ell_1$ norm over one,
\begin{equation}
  \Dstab\bigl(\tfrac12(I+\bm v\cdot\bm\sigma)\bigr)=\max\bigl(0,\|\bm v\|_1-1\bigr).
\end{equation}
The sharp $\Hxy$ projector has $\bm v=(1/\sqrt2,1/\sqrt2,0)$, hence
\begin{equation}
  \|\bm v\|_1=\tfrac{1}{\sqrt2}+\tfrac{1}{\sqrt2}=\sqrt2>1,\qquad \Dstab=\sqrt2-1>0,
\end{equation}
so it lies strictly outside the octahedron (\cref{fig:octahedron}). A coin flip between an $\Xbar$ measurement and a
$\Ybar$ measurement gives $\tfrac12P_{\Xbar,+}+\tfrac12P_{\Ybar,+}$ with $\bm v=(\tfrac12,\tfrac12,0)$ and
$\|\bm v\|_1=1$, so it sits on the octahedron boundary and $\Dstab=0$. The two effects have the same Pauli support,
along $\Xbar$ and $\Ybar$, and the witness still tells them apart, because it sees the $\ell_1$ length of the Bloch
vector and not merely which axes are present. This is the property the first condition of the no-go will name.

The witness is built from the same free set as the resource theory of magic. The stabilizer-effect octahedron is the
convex hull of the native Pauli-measurement effects, and $\Dstab$ is the distance to it, so $\Dstab$ vanishes exactly
on that free set and is positive only for an effect no classical mixture of Pauli measurements can build. It is the
effect-side analogue of the state quantities that measure how far a state is from the stabilizer
polytope~\cite{Veitch2014,HowardCampbell2017,SeddonCampbell2019,HeinrichGross2019}. We use it as a witness and order
parameter, not as a proven monotone: we do not claim $\Dstab$ is contractive under the full instrument class of
\cref{def:resourcefree}, only that it separates a coherent check from a classical imitation. That separation is what
makes it non-circular. It is defined by the native operations, not by the conclusion we want to reach, and a protocol
whose accepted effect lands outside the octahedron has demonstrably done more than measure and mix Pauli axes.

\subsection{Resource-free operations}\label{sec:prelim-resourcefree}

The theorem concerns protocols that add no anyon-permuting structure to the patch. We state this as a property of
the whole spacetime history, not of any single gate, because a protocol can smuggle in structure through ancillas,
selectors, or the decoder. The definition is geometric: it constrains the topology and the ancilla content, and it
allows arbitrary quantum instruments as long as no single cell converts charge on its own.

\begin{definition}[Resource-free protocol]\label{def:resourcefree}
A protocol on the fixed distance-$d$ patch is \emph{resource-free} when its accepted spacetime history is a product
cobordism of the same $D(\Zt)$ phase with fixed rough/smooth condensates, and it satisfies all of:
\begin{enumerate}
  \item[(i)] \emph{No anyon-permuting structure.} No $\ee\!\leftrightarrow\!\mm$ domain wall, twist, fold,
        cross-cap, graft, code switch, self-dual patch, or Floquet excursion; the bulk measurement-quantum-cellular-automaton
        index is trivial and the boundary carries no anomalous GNVW flow~\cite{BravyiKoenig2013,Aasen2023,WebsterBartlett2018}.
  \item[(ii)] \emph{Free ancillas.} Ancillas are product stabilizer states; no magic states, catalysts, or encoded
        ancilla patches are supplied.
  \item[(iii)] \emph{Charge-non-converting cells.} Microscopic instruments may be arbitrary completely positive maps
        with post-selection, but each single cell preserves the local charge sectors. Formally, writing
        $\{\Pi^{(x)}_c\}_{c\in\{1,\ee,\mm,\eps\}}$ for the local charge-sector projectors on the neighbourhood of a cell
        $x$, every Kraus operator of the cell commutes with each $\Pi^{(x)}_c$, so it has no matrix element between
        distinct sectors and no cell implements a coherent $\mm\!\leftrightarrow\!\eps$ conversion (a magic selector).
        This allows non-Clifford charge-preserving filters and native Pauli-Wilson measurements; only the coherent
        cross-sector element is excluded.
\end{enumerate}
\end{definition}

Protection is a separate property, stated as its own condition later, so that ``resource-free'' names only the three
clauses above. We call a resource-free protocol \emph{protected to $L(d)$} when its whole accepted operation, including
selectors, acceptance logic, and decoding, has spacetime fault distance $L(d)$.

Two comments fix the reading of \cref{def:resourcefree}. First, clause (iii) is the setting in which the no-go is a
statement, not a triviality. If we forbade all nonstabilizer instruments, native Pauli operations plus stabilizer
ancillas could not build a nonstabilizer effect at all, and the whole question would collapse. If we allowed a single
cell to convert $\mm$ into $\eps$, that cell would already be the magic measurement we are asking about.
Charge-non-converting broad instruments are the honest middle. Each cell is free of the resource, and the question is
whether many such cells plus post-selection can nonetheless assemble a coherent conversion across the patch. The
class is genuinely broad: it contains every constant-depth Clifford circuit, every local Pauli measurement pattern,
arbitrary classical feed-forward and randomness, and any post-selection on the measured record, all composed to any
depth. What it excludes is a single physical primitive, the local coherent charge converter, and the theorem is the
claim that no amount of the allowed operations reconstructs that primitive globally without paying in acceptance or
distance. \Cref{def:resourcefree} itself allows arbitrary finite depth; the no-go of \cref{thm:tradeoff,thm:main},
however, applies to the bounded-degree, polynomial-volume subfamily $|V(G_d)|\le\mathrm{poly}(d)$ defined in
\cref{app:thm2}, so a superpolynomial-depth protocol is outside its reach. We flag this here so the depth restriction of
the theorem is not mistaken for a restriction on the class.

We make the local charge-sector projectors of clause~(iii) concrete at the lattice level. Around each cell $x$ fix a
small disk $D_x$ and let $\ell_x=\partial D_x$ be its boundary loop. The closed Wilson and parity loop operators on
$\ell_x$, equivalently the product of the stabilizer syndromes enclosed by $\ell_x$, commute, and their joint
eigenvalue is a projective measurement of the total anyon charge enclosed, with outcomes labelled by
$c\in\{1,\ee,\mm,\eps\}$; $\Pi^{(x)}_c$ is the projector onto outcome $c$. These are the standard superselection-sector
projectors of the $D(\Zt)$ phase, and ``charge-preserving'' means each Kraus operator of the cell commutes with the
syndrome of every such enclosing loop, so it maps each sector to itself. For a cell whose neighbourhood meets a gapped
boundary, where the condensate makes the full $\{1,\ee,\mm,\eps\}$ labelling degenerate, $\Pi^{(x)}_c$ is understood
after thickening $D_x$ to a bulk disk before condensation, equivalently as the relative half-annular charge-sector
projector compatible with the boundary condensate; boundary-changing maps are separately excluded by clause~(i). This is why our notion is
\emph{topological-charge-conversion-freeness}, not the stabilizer resource theory's notion of a free operation on the
logical qubit: a cell may be an arbitrary, even non-Clifford, completely positive map, provided it does not connect
distinct anyon sectors. Such a filter is block-diagonal in the sector labels, so on its own it cannot create the
off-diagonal $\mm\!\leftrightarrow\!\eps$ matrix element a magic effect needs, and by \cref{thm:iso} each accepted
stabilizer transcript it takes part in has isotropic annular charge. Whether many such cells with post-selection can
nonetheless assemble that coherence globally is exactly the question the theorem answers, and it is what
\cref{ass:decomp,ass:cleaning} and \cref{cond:subcrit} control.

Second, the clauses are not redundant. Clause (i) is a statement about the spacetime topology and the boundary flow,
clause (ii) about the ancilla resource, and clause (iii) about the microscopic instruments. A protocol can respect the
topology and still smuggle in the resource through a magic ancilla, which clause (ii) blocks, or through a coherent
local filter, which clause (iii) blocks. The three together are what make ``resource-free'' a property of the whole
history and not just of its coarse geometry.

\section{The no-go: five incompatible conditions}\label{sec:results}

We now state the result. It is a no-go theorem in the standard sense~\cite{Galley2022}. The theorem names five natural
conditions and proves that they cannot all hold at once. Four of the five are properties one wants a
good magic measurement to have. The fifth is a statistical-mechanical property of the accepted history. The theorem
says that a resource-free, protected, high-acceptance protocol which measures the magic axis sharply must give up
the fifth. We first name the conditions (\cref{sec:results-conditions}), then state the incompatibility and the two
theorems that establish it (\cref{sec:results-theorems}).

\emph{Main conditional theorem (informal statement of \cref{thm:tradeoff,thm:main}).} Fix a bounded-degree accepted
spacetime cell graph $G_d$ of polynomial volume $|V(G_d)|\le\mathrm{poly}(d)$. Suppose the accepted effect admits the
cone-compatible charged expansion of \cref{ass:decomp}, satisfies the effect-cleaning property of \cref{ass:cleaning},
and its accepted conversion sector is subcritical, $\sum_{|\Gamma_\gamma|=\ell}\|B_\gamma\|_c\le C_0|V(G_d)|(\mu_G
z)^\ell$ with $\mu_G z<1$. Then
\begin{equation*}
  \Dstab(\Ehat)\,\pacc\le C\,|V(G_d)|\,e^{-\tau L(d)},\qquad \tau=-\log(\mu_G z)>0 .
\end{equation*}
Hence if $L(d)=\Omega(d)$ and $|V(G_d)|=\mathrm{poly}(d)$, sharp magic $\Dstab(\Ehat)\ge c>0$ forces
$\pacc\le e^{-\Omega(d)}$: a resource-free, protected, high-acceptance sharp magic-axis check must give up subcriticality.
The associated \emph{line-tension conjecture} (\cref{conj:linetension}) is that, for resource-free, protected,
non-negligibly accepting families in this bounded spacetime model, the accepted conversion sector is indeed subcritical,
$\mu_G z<1$; we do not prove it, and \cref{sec:open,app:linetension} isolate it as the one open input, with
\cref{app:localclass} giving its concrete form for local noise. The unconditional core, needing none of the assumptions,
is the single-transcript isotropy theorem \cref{thm:iso} with the normalisation identity \cref{lem:norm}.

\subsection{The five conditions}\label{sec:results-conditions}

Each condition is a property of a coarse-grained accepted family $A$ on the fixed distance-$d$ patch. We state each
one plainly before the formal line.

This condition says the accepted effect is genuinely magic, not a random Pauli check.
\begin{condition}[Sharp magic]\label{cond:magic}
The normalised accepted effect stays a fixed distance outside the stabilizer-effect octahedron: there is a constant
$c>0$, independent of $d$, with $\Dstab(\Ehat)\ge c$. The sharp $\Hxy$ check is the motivating case, with
$c=\sqrt2-1$.
\end{condition}

This condition says the protocol adds no anyon-permuting structure.
\begin{condition}[Resource-free]\label{cond:free}
The protocol is resource-free in the sense of \cref{def:resourcefree}.
\end{condition}

This condition says the whole protocol is protected to linear order.
\begin{condition}[Protection]\label{cond:protect}
The accepted operation, including selectors, acceptance logic, and decoding, has spacetime fault distance
$L(d)=\Omega(d)$.
\end{condition}

This condition says the protocol succeeds often enough to be useful.
\begin{condition}[Acceptance]\label{cond:accept}
The family is accepted with non-negligible probability: $\pacc\ge 1/\mathrm{poly}(d)$, counted honestly over the
whole family, with retries and discards included in the count.
\end{condition}

This condition says accepted histories do not fill with a dense gas of charge-conversion paths.
\begin{condition}[Subcriticality]\label{cond:subcrit}
The accepted charge-conversion sector has positive line tension. Writing the accepted weight of $\mm\!\leftrightarrow\!\eps$
conversion polymers in the accepted spacetime cell graph $G_d$ as a gas of self-avoiding conversion spines with
renormalized per-cell activity $z$ (into which the branch and decoration multiplicity off each spine is summed) and
connective constant $\mu_G$, the gas is subcritical. That is $\mu_G z<1$, equivalently there is a constant $\tau>0$ with
per-length weight bounded by $e^{-\tau\ell}$.
\end{condition}

\Cref{cond:magic,cond:free,cond:protect,cond:accept} are the properties a useful cultivation check is built to have.
\Cref{cond:subcrit} is different in kind. It is a free-energy statement about the ensemble of accepted histories, and
whether it follows from the others is the one question we cannot settle. We treat it as a named condition here and
return to its status in \cref{sec:open}.

\begin{remark}[Scope of the magic condition]\label{rem:magic-scope}
\Cref{cond:magic} asks only for a constant distance $\Dstab(\Ehat)\ge c$ from the stabilizer octahedron, not for
closeness to the specific $\Hxy$ L\"uders projector. The no-go is therefore about any accepted effect with a constant
octahedron violation, and the $\Hxy$ check is the motivating instance with $c=\sqrt2-1$. This is a strength: the
obstruction suppresses \emph{every} order-one nonstabilizer logical effect on a resource-free protected block, not just
the one axis. If one wants the statement specifically about the $\Hxy$ measurement, add the target-accuracy hypothesis
$\|\Ehat-(I+\Hxy)/2\|_c\le\varepsilon$ in the coefficient norm for a fixed $\varepsilon<\sqrt2-1$; since $\Dstab$ is
$1$-Lipschitz in $\|\cdot\|_c$ and $\Dstab((I+\Hxy)/2)=\sqrt2-1$, this gives $\Dstab(\Ehat)\ge(\sqrt2-1)-\varepsilon>0$,
so \cref{cond:magic} holds with $c=(\sqrt2-1)-\varepsilon$.
\end{remark}

\subsection{The incompatibility}\label{sec:results-theorems}

We prove two results and then combine them. The first is unconditional and rules out the naive route, where each
accepted branch is a single sequence of stabilizer measurements. The second extends the obstruction to the full
adaptive, post-selected family, and it is where subcriticality enters.

The first result reads the charge of a single measurement transcript.
\begin{theorem}[Fine-grained isotropy]\label{thm:iso}
Let $E_b$ be the accepted effect of a single fine transcript built from resource-free Pauli and stabilizer
measurements. Then $E_b$ is proportional to a stabilizer projector onto a commuting subgroup, so its code-preserving
annular charge content $\Qb$ is isotropic. Consequently no such transcript implements the L\"uders check
$(I\pm\Hxy)/2$ of \cref{eq:hxy}.
\end{theorem}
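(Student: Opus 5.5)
The plan is to run the standard stabilizer-formalism argument on a single fine transcript and then read off its annular charge. Fix a transcript $\tau$: an ordered list of Pauli or stabilizer measurements $M_1,\dots,M_T$ with outcomes $s_1,\dots,s_T$, interleaved with Pauli corrections chosen by feed-forward. Ancillas start in product stabilizer states by \cref{def:resourcefree}(ii). The Kraus operator is then, up to a Pauli frame and a scalar, $K_\tau\propto U\,\Pi_T\cdots\Pi_1$, where $\Pi_j=(I+s_jM_j)/2$ and $U$ is a Pauli. Projecting the ancillas onto their stabilizer inputs keeps this form, so the data-qubit operator is still a product of Pauli projectors.

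\textbf{Step 1 (structure of $E_\tau$).} I will prove by induction on $T$ that a product of Pauli projectors acting on a stabilizer-group projector $P_S$ gives either zero or $\lambda\,V P_{S'}$, with $V$ Pauli and $S'$ a commuting Pauli group. There are three cases for each new projector $\Pi_j$:
\begin{itemize}
\item $M_j$ commutes with all of $S$: then $S'=\langle S,s_jM_j\rangle$ (or the product is zero if $-s_jM_j\in S$).
\item $M_j$ anticommutes with some $g\in S$: then $\Pi_jP_S=\tfrac12\,\Pi_jP_{S''}$, where $S''$ replaces $g$ by the generators that commute with $M_j$. This is the usual Gottesman update, and the factor $\tfrac12$ is absorbed into $\lambda$.
\item Starting point: take $P_S=I$.
\end{itemize}
With this in hand, $E_\tau=K_\tau^\dagger K_\tau=|\lambda|^2P_{S'}V^\dagger VP_{S'}=|\lambda|^2P_{S'}$, which is proportional to a stabilizer projector onto a commuting subgroup. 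This is the first claim.

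\textbf{Step 2 (isotropy of $\Qb$).} Compress to the code space: $P_{\text{code}}E_\tau P_{\text{code}}\propto P_{\text{code}}P_{S'}P_{\text{code}}$. I will expand $P_{S'}=|S'|^{-1}\sum_{g\in S'}g$. Terms $g$ that anticommute with some code stabilizer are killed by the sandwich. The survivors are logical Paulis times stabilizers, and all of them lie in the commuting group $S'$. Cleaning by code stabilizers and neutral contractible dressings, as in the annular-charge lemmas of \cref{app:thm1}, sends each survivor to a Wilson string with charge $d_b(g)$. Because the $d_b(g)$ for $g\in S'$ come from mutually commuting logical representatives, and logical (anti)commutation is the lattice image of $\Braid$, we get $\Braid(d_b(g),d_b(h))=+1$ for all pairs. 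So $\Qb$ lies in a transparent subgroup $\{1,c\}$, which is isotropic. The main obstacle is right here. The map $g\mapsto d_b(g)$ has to be well defined on the cleaned code-preserving representatives and has to intertwine commutation with braiding. This is exactly the subtlety flagged before the theorem: raw support can look non-isotropic. I would lean on the fact that the logical action of $g$ on the code space is a single Pauli, whose class determines the charge.

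\textbf{Step 3 (no L\"uders check).} From Step 2, the logical part of $E_\tau$ is $\propto\tfrac12(I+\sigma)$ with $\sigma$ one of $\pm\Xbar,\pm\Ybar,\pm\Zbar$, or is just $\propto I$. Its Bloch vector is therefore a vertex of the octahedron or the origin, so it has $\Dstab=0$. By contrast $(I\pm\Hxy)/2$ has Bloch vector $\pm(1,1,0)/\sqrt2$ and $\Dstab=\sqrt2-1>0$. No positive multiple of a Pauli projector equals it: its support $\{\Xbar,\Ybar\}$ contains the non-isotropic pair $\mm,\eps$ with $\Braid(\mm,\eps)=-1$. Hence no single transcript implements the check.
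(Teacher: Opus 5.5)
\textbf{The gap is in Step 1.} Your overall route is the paper's: collapse the transcript to a commuting stabilizer projector, discard non-normaliser terms under code compression, use that commuting normaliser elements have commuting logical classes and hence trivially braiding charges, and compare with the non-isotropic content $\{\mm,\eps\}$ of $(I\pm\Hxy)/2$. Steps 2 and 3 are fine.

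In the anticommuting case, the identity $\Pi_jP_S=\tfrac12\Pi_jP_{S''}$ is false. More generally, a product $\Pi_T\cdots\Pi_1$ admits no form $\lambda VP_{S'}$ with $V$ Pauli. On one qubit, $\Pi_{Z,+}\Pi_{X,+}=\tfrac{1}{\sqrt2}\ket{0}\langle +\rvert$. Your rule (with $S=\langle X\rangle$, $g=X$, $M_j=Z$) instead returns $\tfrac12\Pi_{Z,+}=\tfrac12\ket{0}\langle 0\rvert$. Since no Pauli maps $\ket{+}$ to $\ket{0}$, no Pauli $V$ can work.

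Your update tracks the output-side (post-measurement) group, which says where $K_\tau$ maps. But $E_\tau=K_\tau^\dagger K_\tau$ is governed by the input side. In the example your bookkeeping gives $E_\tau=\tfrac14\Pi_{Z,+}$, whereas the true effect is $\tfrac12\Pi_{X,+}$. Likewise, on the paper's $d=3$ example $K=P_{\Ybar,+}P_{\Xbar,+}$ your rule gives $E\propto P_{\Ybar,+}$, but $K^\dagger K=\tfrac12P_{\Xbar,+}$. The qualitative claim ``proportional to a commuting stabilizer projector'' happens to be true, but your derivation does not establish it and picks out the wrong projector.

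\textbf{Two repairs.} The paper's repair is to reduce the palindrome $E_\tau=\Pi_1\cdots\Pi_T\cdots\Pi_1$ from the centre outward. For $g$ anticommuting with $R$ one has $\Pi_{tR}\,g\,\Pi_{tR}=\tfrac14 g(I-tR)(I+tR)=0$, so conjugation kills the anticommuting coset and leaves $\tfrac12\Pi_{\langle S_0,tR\rangle}$. This works on $E_\tau$ directly and never needs the left structure of $K_\tau$.

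Alternatively, fix your forward induction. For a Pauli projector $\Pi=(I+sM)/2$ and $g\in S$ anticommuting with $M$, set $S_0=S\cap C(M)$, so that $P_S=P_{S_0}(I+g)/2$. Then $\Pi P_S=\tfrac{1}{\sqrt2}\,W P_S$ with $W=(g+sM)/\sqrt2$, a Hermitian Clifford unitary commuting with $P_{S_0}$. In the anticommuting case the input group therefore stays $S$ and only the scalar changes. Allow the accumulated $V$ to be Clifford and apply the case analysis to $V^\dagger\Pi_jV$, which is still a signed Pauli projector. The induction then closes, and $V^\dagger V=I$ gives $E_\tau=|\lambda|^2P_S$.

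With either repair, Steps 2 and 3 go through as written. Your finish is slightly more direct than the paper's charge-content comparison: the compressed effect is $0$, $\propto P_{\text{code}}$, or $\propto P_{\text{code}}(I\pm\sigma)/2$ with $\sigma$ a single logical Pauli, and so has $\Dstab=0$.
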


A fixed sequence of stabilizer projectors collapses to a single stabilizer projector, whose Pauli support is an
abelian group. Read charge only after cleaning and projecting to the code space. There, two commuting logical strings
must carry charges that braid trivially. A raw dressed pair that commutes only because of local dressing does not act
on the code space, so it carries no logical charge. A commuting stabilizer projector therefore has isotropic charge
content, while $\Hxy$ needs the pair $\mm,\eps$ with $\Braid(\mm,\eps)=-1$. \Cref{app:thm1} gives the full argument,
including the cleaning that makes the annular charge well defined. \Cref{thm:iso} is unconditional but narrow. It
speaks only about one transcript, not about a decoded family that mixes exponentially many of them.

The second result controls the decoded family, and this is where subcriticality does its work.
\begin{theorem}[Conditional suppression]\label{thm:tradeoff}
Let $A$ be a resource-free (\cref{cond:free}), $L(d)$-protected (\cref{cond:protect}) coarse-grained accepted family
satisfying subcriticality (\cref{cond:subcrit}), carried by a bounded-degree accepted spacetime cell graph $G_d$ of
polynomial volume $|V(G_d)|\le C_{\mathrm{vol}}d^{\,r}$, and assume the two structural assumptions on its accepted-effect
expansion stated in \cref{app:thm2}. Then there are constants $C,\tau>0$, independent of $d$, with
\begin{equation}\label{eq:tradeoff}
  \Dstab(\Ehat)\;\pacc\;\le\;C\,|V(G_d)|\,e^{-\tau L(d)}\;\le\;C\,C_{\mathrm{vol}}\,d^{\,r}\,e^{-\tau L(d)} .
\end{equation}
In particular, with $L(d)=\Omega(d)$ and polynomial spacetime volume, sharp magic $\Dstab(\Ehat)\ge c$ forces
$\pacc\le e^{-\Omega(d)}$. The polynomial-volume hypothesis bounds the protocol depth; without it the prefactor need not
be subexponential and the bound can fail.
\end{theorem}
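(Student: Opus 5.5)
The plan is to reduce the magic witness to a single off-octahedron coefficient of the unnormalised effect, and then to bound that coefficient by summing over patch-spanning conversion polymers. Three steps do this: a normalisation identity, an expansion of the unnormalised effect, and a polymer sum.

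\emph{Step 1: normalisation identity (this is \cref{lem:norm}).} Write the projected accepted effect as
\[
P_{\text{code}}\Eacc P_{\text{code}}=\pacc\,(I+\bm v\cdot\bm\sigma)
\]
in logical Pauli coordinates. Since $\Dstab(\Ehat)=\max(0,\|\bm v\|_1-1)$, multiplying by $\pacc$ expresses $\Dstab(\Ehat)\,\pacc$ as the amount by which the unnormalised coefficient vector $\pacc\bm v$ exceeds $\pacc$ in $\ell_1$. With the coefficient norm $\|\cdot\|_c$ and the $1$-Lipschitz property of $\Dstab$ from \cref{rem:magic-scope}, this gives
\[
\Dstab(\Ehat)\,\pacc\le\|\Eacc-F\|_c
\]
for any $F$ whose normalised logical part lies in the octahedron. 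It therefore suffices to exhibit a decomposition $\Eacc=F+R$ with $F$ free in this sense and $\|R\|_c$ exponentially small.

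\emph{Step 2: charged expansion and isotropic part.} Invoke the cone-compatible charged-polymer expansion (\cref{ass:decomp}). It writes $\Eacc$ as a sum over collections of polymers $\Gamma_\gamma$ in $G_d$ with coefficients $B_\gamma$. Split the sum into two sectors. The first consists of terms with no $\mm\!\leftrightarrow\!\eps$ conversion polymer connecting the relevant boundaries. The second consists of terms containing at least one spanning conversion polymer. Clause (iii) of \cref{def:resourcefree} says every cell commutes with the local charge projectors $\Pi^{(x)}_c$. Combined with \cref{thm:iso} applied cell-cluster by cell-cluster, this should show that each first-sector term has isotropic code-preserving charge content, i.e.\ it lives in a transparent subgroup such as $\{1,\mm\}$. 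Its logical part then has Bloch vector supported on a single Pauli axis, plus the identity. Positivity of the effect bounds that single coefficient by the identity coefficient, so the sum of these terms is in the octahedron; call it $F$. The effect-cleaning property (\cref{ass:cleaning}) is used here, so that charge is read on cleaned code-preserving representatives, and so that any term producing the non-isotropic pair must contain a conversion polymer whose cleaned support is homologically nontrivial. This is the minimum-length property: every such polymer has $|\Gamma_\gamma|\ge L(d)$, by the definition of fault distance in \cref{cond:protect}.

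\emph{Step 3: polymer sum.} Bound $\|R\|_c\le\sum_{\ell\ge L(d)}\sum_{|\Gamma_\gamma|=\ell}\|B_\gamma\|_c$. By \cref{cond:subcrit} the inner sum is at most $C_0|V(G_d)|(\mu_G z)^\ell$. Summing the geometric tail gives
\[
\|R\|_c\le \frac{C_0}{1-\mu_G z}\,|V(G_d)|\,e^{-\tau L(d)},\qquad \tau=-\log(\mu_G z).
\]
The factor $|V(G_d)|$ comes from the choice of root vertex of the spine, and the polynomial volume bound then gives \cref{eq:tradeoff}. The corollary is immediate: $\Dstab\ge c$ gives $\pacc\le c^{-1}C C_{\mathrm{vol}}d^r e^{-\tau L(d)}=e^{-\Omega(d)}$. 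The main obstacle is Step 2. One must show that the no-spanning-polymer sector sums to something genuinely inside the octahedron, not merely that each term does. Convexity settles this only if every term is itself a positive free effect, and the expansion coefficients need not be positive. I would first try to reorganise the expansion so that the isotropic sector is a sum of positive Kraus contributions, $K^\dagger K$ restricted to charge-preserving blocks, and so lies in the convex cone of free effects. If that fails, I would absorb the non-positive remainder into $R$, at the cost of a worse constant.
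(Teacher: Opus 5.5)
Your Steps~1 and~3 are the paper's argument. \Cref{lem:norm} turns $\pacc\,\Dstab(\Ehat)$ into a lower bound on $\|F_{\text{bad}}\|_c$ for any split whose other part lies in the cone $\mathcal N$. The geometric tail over spine lengths $\ell\ge L(d)$, with the volume factor coming from the choice of root, is the paper's final sum. One small point in Step~1: $\Dstab$ is a function of normalised effects, so its Lipschitz property only handles a native comparison effect with the same trace as $F$. Since $F_{\text{native}}$ may have a different trace $q$, you also need the estimate $|p-q|+\|p\bm v-q\bm w\|_1\ge(q-p)+(p\|\bm v\|_1-q)=p(\|\bm v\|_1-1)$, which is exactly what \cref{lem:norm} supplies.

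The gap is Step~2. There you try to \emph{derive} that the no-spanning-polymer sector lies in $\mathcal N$, and this cannot work.

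First, \cref{thm:iso} is about the effect of one transcript of Pauli and stabilizer measurements. It says nothing about individual terms of a cluster expansion, nor about the non-Clifford charge-preserving filters that clause~(iii) of \cref{def:resourcefree} permits. The paper says explicitly that it does not supply the structural assumptions.

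Second, and decisively, isotropic charge content of individual terms carries no information without positivity. The sharp check itself splits as $\tfrac12(I+\Hxy)=\bigl[\tfrac14 I+\tfrac{1}{2\sqrt2}\Xbar\bigr]+\bigl[\tfrac14 I+\tfrac{1}{2\sqrt2}\Ybar\bigr]$. These are two terms with isotropic content $\{1,\mm\}$ and $\{1,\eps\}$, and neither is in $\mathcal N$.

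Your fallback of moving the non-positive remainder into $R$ does not cost a constant; it costs the theorem. That remainder is not a sum of spanning conversion terms, so \cref{cond:subcrit} gives it no $e^{-\tau L(d)}$ bound. In the example above, \cref{lem:norm} forces whatever you move into $R$ to have $\|\cdot\|_c\ge(\sqrt2-1)\pacc$. Your claim that $|\Gamma_\gamma|\ge L(d)$ follows ``by the definition of fault distance'' is likewise not something the paper derives.

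What you are missing is that none of this is meant to be derived:
\begin{itemize}
  \item \cref{ass:decomp} (hence ``cone-compatible'') directly posits a split $F=F_{\text{native}}+F_{\text{bad}}$ with $F_{\text{native}}\in\mathcal N$. Here $F_{\text{bad}}=\sum_\gamma B_\gamma$ is absolutely convergent over conversion-labelled terms, and every purely native contribution is already absorbed into $F_{\text{native}}$.
  \item \cref{ass:cleaning} posits that every such $B_\gamma$ carries a self-avoiding spine of cleaned length at least $L(d)$. Shorter conversion objects are declared native.
\end{itemize}
Replace Step~2 by these two citations, and your Steps~1 and~3 close the proof as in the paper.
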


The bound \cref{eq:tradeoff} trades magic against acceptance. Under subcriticality the two cannot both stay large. The proof, in
\cref{app:thm2}, decomposes the accepted effect into a part that lives in the native stabilizer-effect cone and a
part that carries charge conversion, uses that every conversion term must span the protected patch, and sums the
conversion polymers against their per-length weight. We are explicit about the labels. The normalisation identity
that turns the magic witness into a bound on the conversion part is proved. The decomposition into charged polymers
and the minimum-length property of each conversion term are two structural assumptions on the accepted effect, since a
broad post-selected map need not expand into local charged polymers for free. Subcriticality is the physical input
that makes the sum of exponentially many spanning polymers decay. \Cref{sec:open} examines why it remains an
assumption. The unconditional core of the no-go carries none of these: it is \cref{thm:iso} together with the
normalisation identity \cref{lem:norm}, the structural assumptions and subcriticality entering only for the polymer
extension to the coarse-grained family.

The two theorems play complementary roles, and it helps to see why both are needed. \Cref{thm:iso} is sharp but
narrow. It rules out the obvious attempt, a fixed sequence of stabilizer measurements, and it does so with no
assumption beyond resource-freeness, because a single transcript cannot escape the annular-charge obstruction. But a
real protocol is not a single transcript. It is a decoder that accepts a large set of transcripts and coarse-grains
them, and the coarse-grained effect can leave the stabilizer-branch structure entirely, as the classical
$\Xbar$/$\Ybar$ mixture already shows. \Cref{thm:tradeoff} is what covers this general case, at the cost of the
structural and physical assumptions. It does not argue transcript by transcript. It bounds the charge-converting part
of the whole accepted effect directly, using that any such part is built from long conversion strings and that a
subcritical gas of such strings has small total weight. The price of working with the whole effect is that one must
assume the effect has a local charged expansion in the first place, which is where \cref{ass:decomp,ass:cleaning}
enter. So the fine-grained theorem gives certainty on the easy case and the conditional theorem gives reach on the hard
case, and the no-go needs both.

Combining the two gives the no-go.
\begin{theorem}[Conditional no-go]\label{thm:main}
Assume the two structural assumptions of \cref{app:thm2} (\cref{ass:decomp,ass:cleaning}). Then no
coarse-grained accepted family on the fixed distance-$d$ surface-code patch, carried by a bounded-degree accepted
spacetime cell graph of polynomial volume $|V(G_d)|\le\mathrm{poly}(d)$, satisfies
\cref{cond:magic,cond:free,cond:protect,cond:accept,cond:subcrit} at once.
\end{theorem}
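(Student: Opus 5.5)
The plan is a proof by contradiction that takes \cref{thm:tradeoff} as the engine and lets the other conditions feed its hypotheses. Suppose a coarse-grained accepted family $A$ on the fixed patch, carried by a bounded-degree cell graph $G_d$ with $|V(G_d)|\le C_{\mathrm{vol}}d^{\,r}$, satisfies \cref{cond:magic,cond:free,cond:protect,cond:accept,cond:subcrit} together with \cref{ass:decomp,ass:cleaning}. Then \cref{cond:free}, \cref{cond:protect}, \cref{cond:subcrit}, the volume bound and the two structural assumptions are exactly the hypotheses of \cref{thm:tradeoff}. So \cref{eq:tradeoff} applies, with constants $C,\tau>0$ independent of $d$. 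Here $\tau=-\log(\mu_G z)$ comes from the subcritical activity.

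Next I combine this with the two remaining conditions. \Cref{cond:magic} gives $\Dstab(\Ehat)\ge c$, and \cref{cond:accept} gives $\pacc\ge 1/(c'd^{\,s})$ for some constants $c',s$. Inserting both into \cref{eq:tradeoff} yields
\begin{equation*}
  \frac{c}{c'\,d^{\,s}}\;\le\;\Dstab(\Ehat)\,\pacc\;\le\;C\,C_{\mathrm{vol}}\,d^{\,r}\,e^{-\tau L(d)} .
\end{equation*}
By \cref{cond:protect} there is $\kappa>0$ with $L(d)\ge\kappa d$ for all large $d$. The right-hand side is therefore at most $C C_{\mathrm{vol}}d^{\,r}e^{-\tau\kappa d}$. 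Comparing the two sides gives $e^{\tau\kappa d}\le (C C_{\mathrm{vol}}c'/c)\,d^{\,r+s}$, which fails for all sufficiently large $d$. This contradiction proves the theorem in the asymptotic sense in which the conditions are stated, since all of them are conditions on a family indexed by $d$.

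One point needs care. Condition~\cref{cond:magic} must apply to the same normalised effect $\Ehat=P_{\mathrm{code}}\Eacc P_{\mathrm{code}}/(2\pacc)$ that appears in \cref{thm:tradeoff}. In particular $\pacc>0$, so $\Ehat$ is well defined, and this is guaranteed by \cref{cond:accept}. I would also remark that \cref{thm:iso} is not logically needed for this implication. It serves only to show that the single-transcript case is excluded unconditionally, so I would cite it to explain why the family-level statement is the nontrivial one.

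The main obstacle is not in this assembly, which is a few lines of arithmetic. It lies in the uniformity of constants. \Cref{thm:tradeoff} must deliver $C$ and $\tau$ independent of $d$, and in the statement I would explicitly require that the subcriticality constants $C_0,\mu_G,z$ and the degree bound of $G_d$ be uniform in $d$. Without that uniformity, $\tau$ could shrink like $1/d$ and the exponential would no longer beat the polynomial. I would state this uniformity as part of reading \cref{cond:subcrit} for a $d$-indexed family before invoking \cref{thm:tradeoff}.
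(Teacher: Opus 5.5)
Your proposal is correct and matches the paper's proof: the paper also feeds the five conditions into \cref{thm:tradeoff} with $L(d)=\Omega(d)$, combines the bound with $\Dstab(\Ehat)\ge c$ to get $\pacc\le e^{-\Omega(d)}$, and contradicts $\pacc\ge 1/\mathrm{poly}(d)$. Your extra care about $C_0,\mu_G,z$ being uniform in $d$ only makes explicit what the paper already builds into \cref{cond:subcrit} (a constant $\tau>0$) and into the $d$-independence of $C,\tau$ in \cref{thm:tradeoff}.
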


A family meeting all five conditions would meet the hypotheses of \cref{thm:tradeoff} with $L(d)=\Omega(d)$, so
\cref{eq:tradeoff} and $\Dstab(\Ehat)\ge c$ give $\pacc\le e^{-\Omega(d)}$, which contradicts $\pacc\ge1/\mathrm{poly}(d)$.
Equivalently, any resource-free, protected, sharp, high-acceptance check must fail subcriticality, so its accepted
charge-conversion sector must be critical or supercritical. This is a statement about the accepted-effect expansion, not
a claim that a microscopic anyon-permuting wall is present: failing subcriticality forces the effect into a critical or
supercritical patch-spanning conversion sector, which is the effect-level analogue of supplying an anyon-permuting
resource. \Cref{app:example} shows this on the distance-3 patch, where the resource-free check collapses to a
classical coin flip and the fold supplies the missing conversion.

The resource-necessity reading is a corollary. We are careful to keep subcriticality visible in both directions.
\begin{corollary}[Resource necessity]\label{cor:necessity}
Assume the two structural assumptions of \cref{app:thm2}, and that the protocol is carried by a bounded-degree accepted
spacetime cell graph $G_d$ with $|V(G_d)|\le\mathrm{poly}(d)$. If in addition subcriticality (\cref{cond:subcrit}) holds,
then any resource-free (\cref{cond:free}), $\Omega(d)$-protected (\cref{cond:protect}) protocol that measures the magic
axis sharply (\cref{cond:magic}) accepts with probability $\pacc\le e^{-\Omega(d)}$. Without assuming subcriticality,
the contrapositive holds: any resource-free, protected, sharp-magic check with $\pacc\ge 1/\mathrm{poly}(d)$ must fail
subcriticality, so its accepted charge-conversion sector is critical or supercritical.
\end{corollary}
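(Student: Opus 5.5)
The corollary is a repackaging of \cref{thm:tradeoff} and \cref{thm:main}, so the plan is to derive both halves directly from the suppression bound \cref{eq:tradeoff} rather than re-enter the polymer expansion. The first half is the direct implication. Under the two structural assumptions (\cref{ass:decomp,ass:cleaning}), subcriticality (\cref{cond:subcrit}), resource-freeness (\cref{cond:free}), protection $L(d)\ge \kappa d$ for some $\kappa>0$ and all large $d$ (\cref{cond:protect}), and $|V(G_d)|\le C_{\mathrm{vol}}d^{\,r}$, all hypotheses of \cref{thm:tradeoff} are met. That theorem therefore gives constants $C,\tau>0$ independent of $d$ with $\Dstab(\Ehat)\,\pacc\le C\,C_{\mathrm{vol}}d^{\,r}e^{-\tau L(d)}$.

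Sharp magic (\cref{cond:magic}) supplies $\Dstab(\Ehat)\ge c>0$, with $c$ independent of $d$. Dividing by $c$ yields
\begin{equation*}
\pacc\le \frac{C\,C_{\mathrm{vol}}}{c}\,d^{\,r}e^{-\tau\kappa d}.
\end{equation*}
The polynomial factor is then absorbed into the exponential: for any $0<\tau'<\tau\kappa$ the right-hand side is at most $e^{-\tau' d}$ for all $d$ beyond some $d_0$. This establishes $\pacc\le e^{-\Omega(d)}$. One point needs care here. The acceptance in the corollary must be the same honestly counted $\pacc$ that appears in \cref{thm:tradeoff}, namely $\tfrac12\Tr(P_{\text{code}}\Eacc P_{\text{code}})$ over the whole coarse-grained family, so that retries and discards do not enter differently in the two statements.

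The second half is the contrapositive, and for it subcriticality is no longer assumed. Take a resource-free, $\Omega(d)$-protected, sharp-magic protocol with $\pacc\ge 1/\mathrm{poly}(d)$, and suppose for contradiction that \cref{cond:subcrit} holds. Then all five conditions hold simultaneously, which \cref{thm:main} forbids; equivalently, the first half gives $\pacc\le e^{-\Omega(d)}$, and this is eventually below any inverse polynomial. Hence \cref{cond:subcrit} fails, meaning there is no $\tau>0$ bounding the per-length weight by $e^{-\tau\ell}$. In the language of that condition this reads $\mu_G z\ge 1$, so the accepted conversion sector is critical or supercritical.

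The only step I expect to need attention is the uniformity of constants. The $\tau$ and $C$ of \cref{thm:tradeoff} must not degrade with $d$. Likewise, ``fails subcriticality'' must be read asymptotically, along the family $d\to\infty$, since subcriticality is a statement about a $d$-independent line tension. I would therefore state the negation precisely: there is no pair $(\tau,C_0)$ valid for all large $d$. Equivalently, along some subsequence of $d$ the effective $\mu_G z$ is not bounded away below $1$. The contradiction above is then run along that same subsequence.
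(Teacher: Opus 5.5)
Your proposal is correct and matches the paper's proof: the first half is \cref{thm:tradeoff} specialised to $\Dstab(\Ehat)\ge c$ and $L(d)=\Omega(d)$, with the polynomial volume absorbed into the exponential, and the contrapositive is \cref{thm:main} forcing the one remaining condition, \cref{cond:subcrit}, to fail. Your extra remarks go beyond the paper, which leaves two points implicit: that $C,\tau$ must be $d$-uniform, and that ``fails subcriticality'' means failure of a $d$-independent line tension (a $\limsup$ statement rather than literally $\mu_G z\ge1$ at each $d$); both remarks are consistent with the paper's reading.
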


\begin{proof}
The first statement is \cref{thm:tradeoff} with $\Dstab(\Ehat)\ge c$ and $L(d)=\Omega(d)$, whose hypotheses include
\cref{cond:subcrit}. For the contrapositive, a check keeping \cref{cond:magic,cond:free,cond:protect,cond:accept}
would, by \cref{thm:main}, be forced to fail the one remaining condition, \cref{cond:subcrit}.
\end{proof}

\subsection{The role of each condition}\label{sec:results-roles}

The five conditions are not interchangeable, and each contributes a different part of the argument. \Cref{cond:magic}
enters only through the constant $c$; the proof needs $\Dstab(\Ehat)\ge c>0$, and the exact value $c=\sqrt2-1$ for the
sharp $\Hxy$ check is the motivating case, not a hypothesis of the theorem. \Cref{cond:protect} forces a conversion
term to span the patch, and this is where distance enters. The strong exponential form uses $L(d)=\Omega(d)$, but
$L(d)=\omega(\log d)$ already rules out polynomial acceptance, since the sum over conversion strings of length at least
$L(d)$ is then already smaller than any inverse polynomial. \Cref{cond:accept} is the target. The tight statement is
that any $\pacc\ge e^{-o(d)}$ is excluded, and the inverse-polynomial form is a convenient special case.

The two load-bearing conditions are \cref{cond:free} and \cref{cond:subcrit}, and they are the two that a construction
actually gives up. \Cref{cond:free} is what makes the question a theorem instead of a triviality, as discussed after
\cref{def:resourcefree}, and \cref{cond:subcrit} is the open one examined in \cref{sec:open}. The theorem is also tight
in the resource variable, because the converse holds. Dropping \cref{cond:free}, the fold-transversal Hadamard realises
the $\Hxy$ check at constant acceptance, so the anyon-permuting resource is sufficient. Under the two structural
assumptions and subcriticality it is also necessary, in the precise sense that a resource-free protected high-acceptance
sharp check is excluded; we state necessity in that conditional form, not as an unqualified claim that a fold is the
only route.

\section{Classification of prior constructions}\label{sec:classification}

A no-go theorem is most useful as a classification tool. Since the five conditions cannot hold together, every
construction that measures a magic axis, or that looks like it should, must give up one of them, and naming which one
places the construction on the map. This section reads the conditions one at a time, saying what a protocol looks
like when it fails each one, and then tabulates where the known surface-code constructions sit
(\cref{tab:classification}). The upshot is that every existing route to a magic logical measurement gives up
\cref{cond:free}: it uses a self-dual, folded, or otherwise anyon-permuting resource. That is the necessity the
theorem certifies, under the assumptions of \cref{thm:main}.

\paragraph{Giving up \cref{cond:magic} (sharp magic).} A protocol can keep everything else by not actually measuring
the magic axis. The classical coin flip $\tfrac12P_{\Xbar,+}+\tfrac12P_{\Ybar,+}$ is resource-free, protected, and
accepts with probability one, but it has $\Dstab=0$. It biases the $\Xbar$ and $\Ybar$ outcomes without ever
superposing them, so it is not the coherent check. This row is where a first attempt often lands by accident. A
protocol that measures $\Xbar$ on some runs and $\Ybar$ on others, or that measures a random Pauli axis and reweights,
produces an effect with support along both axes and can look like a magic check to a test that only inspects Fourier
support. The magic witness is precisely the test that is not fooled, and it places all such protocols here, outside
\cref{cond:magic}, because their normalised effect never leaves the stabilizer octahedron. Adaptive charge measurement in the style of
Bravyi--Kim--Kliesch--Koenig~\cite{BravyiKimKlieschKoenig2022} sits here too, and it is the construction a skeptic
would point to, so we are explicit about why it does not reach the magic axis. Its constant-depth adaptive circuits
read out anyon charge, which in the abelian $D(\Zt)$ code is a superselection label: the measured observable is
diagonal in the anyon basis and commutes with the charge, so its accepted effect is a classical function of the
$\{1,\ee,\mm,\eps\}$ sectors. An effect diagonal in the charge basis has isotropic annular holonomy and cannot carry
the fixed-phase $\mm$--$\eps$ coherence of $\Hxy$; producing that coherence would need an off-diagonal
$\mm\!\leftrightarrow\!\eps$ matrix element, which is exactly the charge conversion \cref{cond:free}(iii) forbids a
free cell to supply. Adaptive charge measurement reads charge. It does not convert charge, so it does not reach
$\Dstab>0$. It is consistent with the theorem, and not directly comparable to it, because it never attempts the sharp
$\Hxy$ effect.

\paragraph{Giving up \cref{cond:free} (resource-free).} This is the route real cultivation takes, and each
construction supplies the missing conversion in its own way. Surface-code $\ket{T}$ cultivation performs the $\Hxy$
check through a fold-transversal Hadamard or a self-dual code patch~\cite{GidneyShuttyJones2024,Claes2025,Sahay2025,Vaknin2025}.
The fold supplies the self-dual $\ee\!\leftrightarrow\!\mm$ wall, an anyon-permuting domain wall that \cref{cond:free}
forbids a free cell to source. Together with the logical Clifford-frame relabeling described in \cref{app:example}, this
supplies the $\mm$-to-$\eps$ relation needed to express the cultivated check as $\Hxy$; the wall itself is the
topological resource, and the $\Zbar\mapsto\Ybar$ relabeling is a logical Clifford-frame step, not a further anyon
conversion. Cross-cap constructions
place a non-orientable identification on the code, and the anyon transported around the cross-cap returns
relabelled~\cite{KobayashiZhu2023}, so the cross-cap is again an anyon-permuting defect outside the product $D(\Zt)$
cobordism. Code-switch and hybrid lattice-surgery constructions leave the abelian surface-code phase entirely, routing
the logical information through a different code or a non-abelian interface where the non-Clifford operation is
native~\cite{HybridLatticeSurgery2025,CliffordHierarchyCodes2025}. In every case the construction is doing the same
thing at the level of anyons, importing a relation between $\mm$ and $\eps$ that the resource-free class cannot build
locally. The theorem says this is not a coincidence of the particular constructions. Under its assumptions, a resource
of this kind is required, and the fold, the cross-cap, and the code switch are three ways of paying the same price.

\paragraph{Giving up \cref{cond:protect} (protection).} A low-distance seed measured before the code is grown can
carry the magic axis cheaply, because on a small patch the conversion string is short and the check is easy. But a
constant-size fault on the seed becomes a logical fault after growth, so the seed stage has fault distance $O(1)$, not
$L(d)=\Omega(d)$. This is exactly how real cultivation works: it performs the delicate check on a small, unprotected
seed and then grows the distance around the accepted state. The theorem is consistent with this, and it locates the
cost precisely. The magic axis is measured where protection is absent, and the growth that follows cannot undo a fault
that already corrupted the seed. Protection and a cheap check are traded against each other, and the seed stage is
where the trade is made.

\paragraph{Giving up \cref{cond:accept} (acceptance).} Nothing forbids building the coherent check at exponentially
small acceptance, since the trade-off of \cref{thm:tradeoff} is satisfied with room to spare when
$\pacc\le e^{-\Omega(d)}$. A protocol that post-selects hard enough can in principle reach any effect, at the price of
discarding all but an exponentially small fraction of its runs. This is the escape the theorem explicitly leaves open,
and it is why the statement is about acceptance at least $1/\mathrm{poly}(d)$, not about feasibility in principle. The
line the theorem draws is between a check that is useful at scale, where acceptance must stay inverse-polynomial so
that a computation can afford to repeat it, and one that is possible only in principle.

\paragraph{Giving up \cref{cond:subcrit} (subcriticality).} A protocol that keeps
\cref{cond:magic,cond:free,cond:protect,cond:accept} must, by \cref{thm:main}, have a critical or supercritical
accepted conversion sector. The percolation family of \cref{app:linetension}, which post-selects on a spanning gas of
conversion strings, is the concrete attempt to live here. It reaches high acceptance and $\Omega(d)$ selector
distance, but it powers the conversion with local charge-converting cells, so it gives up \cref{cond:free}(iii),
and does not genuinely satisfy all of \cref{cond:magic,cond:free,cond:protect,cond:accept} at critical activity.
Whether any protocol truly lives in this row is the open question of \cref{sec:open}.

\begin{table}[t]
\centering
\caption{Where surface-code constructions sit relative to the five conditions. A check means the construction
satisfies the condition; a cross means it gives it up; ``n/a'' means the construction does not attempt a coherent
magic measurement, so \cref{cond:magic} does not apply; a dash means the condition is left unevaluated because the
construction already gives up an earlier one; a question mark means the construction does not establish the condition,
as for the disputed attack surface of \cref{app:linetension}, whose coherent $\Hxy$ effect is unproven. Every route
gives up or leaves unestablished at least one condition, and every route that targets the sharp check gives up
\cref{cond:free}.}
\label{tab:classification}
\resizebox{\textwidth}{!}{%
\begin{tabular}{lccccc}
\toprule
Construction & \cref{cond:magic} & \cref{cond:free} & \cref{cond:protect} & \cref{cond:accept} & \cref{cond:subcrit} \\
\midrule
Classical $\Xbar/\Ybar$ mixture & \ding{55} & \ding{51} & \ding{51} & \ding{51} & \ding{51} \\
Adaptive charge measurement~\cite{BravyiKimKlieschKoenig2022} & n/a & \ding{51} & \ding{51} & \ding{51} & \ding{51} \\
Fold-transversal / self-dual cultivation~\cite{GidneyShuttyJones2024,Claes2025,Sahay2025} & \ding{51} & \ding{55} & \ding{51} & \ding{51} & --- \\
Cross-cap / code switch~\cite{KobayashiZhu2023,HybridLatticeSurgery2025} & \ding{51} & \ding{55} & \ding{51} & \ding{51} & --- \\
Low-distance seed then grow & \ding{51} & \ding{51} & \ding{55} & \ding{51} & --- \\
Percolation family (\cref{app:linetension}) & $?$ & \ding{55} & \ding{51} & \ding{51} & \ding{55} \\
\bottomrule
\end{tabular}}
\end{table}

The pattern in \cref{tab:classification} is the content of the theorem stated as a table. Reading down the
\cref{cond:free} column, every construction that achieves the sharp check (\cref{cond:magic}) carries a cross. None of
them measures the magic axis without an anyon-permuting resource. The constructions with a check in that column are
exactly the ones that do not measure the magic axis at all. This is what a necessary resource means in practice, and
the theorem is what forbids the empty cell, a check in every column at once.

\subsection{Relation to prior no-go theorems}\label{sec:classification-prior}

The result sits in a line of no-go theorems for topological codes, and we place it precisely. The oldest is
Bravyi and Koenig, that a constant-depth locality-preserving unitary on a two-dimensional topological stabilizer code
implements only a logical Clifford~\cite{BravyiKoenig2013}. That already forbids a unitary magic gate. It does not
speak to a measurement. Our target is a nondemolition L\"uders check, a nonunitary accepted-family effect with a
post-selection flag, and the whole point of cultivation is that measurement and feed-forward reach maps that a shallow
unitary cannot. So the unitary no-go is silent here, and the fine-grained part of our result (\cref{thm:iso}) is the
measurement analogue for a single stabilizer transcript, proved through annular charge, not through the group structure of
constant-depth circuits.

The measurement regime has its own prior art, and it cuts both ways. Aasen, Haah, Li, and Mong show that measurement
circuits carry an index and can drive boundary flow and anyon relabelings that unitaries cannot~\cite{Aasen2023}, and
Webster and Bartlett tie locality-preserving logical operators to gapped domain walls~\cite{WebsterBartlett2018}.
These say that measurement is powerful enough that a no-go cannot be taken for granted. They are also why the
resource-free class must exclude anomalous boundary flow by hypothesis, since that flow is one way to realise the
excluded conversion. The closest positive result is Bravyi, Kim, Kliesch, and Koenig, that adaptive constant-depth
circuits measure anyon charge~\cite{BravyiKimKlieschKoenig2022}. In the abelian surface code that charge is a
superselection label, so the measured data is diagonal in the anyon basis and isotropic. Reading charge is not
converting it, and the gap between the two is exactly the magic of $\Hxy$. Our theorem lives in that gap.

The shape of the argument follows the no-go paradigm used for physical principles, where a set of individually natural
conditions is shown to be jointly inconsistent~\cite{Galley2022}. The contribution beyond the paradigm is the order
parameter that makes the conditions checkable, the distance of the accepted effect from the stabilizer octahedron, and
the reduction of the whole obstruction to one statistical-mechanical assumption whose status is then isolated.

One further contrast, with the resource-theory results that motivate the witness, completes the picture. The resource
theory of magic quantifies how much magic a \emph{state} has, and its monotones decrease under free operations on
states~\cite{Veitch2014,HowardWallman2017,HeinrichGross2019}. Our object is a \emph{measurement}, and the quantity we
track is the magic of the accepted effect, which is the measurement's output as seen by a downstream user. The
resource theory tells us that the octahedron is the free set and that a free operation cannot leave it, and this is
what makes the witness non-circular, but it does not by itself say anything about whether a resource-free spacetime can
produce an effect outside the octahedron, because a spacetime is not a single free operation on the effect. It is a
whole history whose accepted branch can, in principle, be anything a completely positive map allows. The work of the
theorem is exactly to bridge that gap, from the local freedom of each cell to the global magic of the assembled effect,
and the bridge is the annular-charge obstruction for a single transcript and the polymer bound for the family. The
resource theory supplies the ruler. The theorem supplies the argument that the ruler cannot be beaten without a
resource.

The relation to the unitary no-go can be stated as a slogan. Bravyi and Koenig say that a shallow local unitary is
Clifford, so it cannot rotate a logical state into a magic one. We say that a resource-free local measurement cannot
\emph{read} the magic axis, which is the measurement counterpart, and the two together suggest a general principle:
neither shallow unitary evolution nor resource-free measurement can create or detect magic at the logical level
without importing it, and the magic must come from outside the free set in one form or another. The unitary half is a
theorem. Our half is a theorem for one transcript and a conjecture-backed theorem for the general family, and closing
that last gap is what would make the principle complete for measurements.

\section{The open input: subcriticality of the accepted conversion sector}\label{sec:open}

Four of the five conditions are operational: a protocol either measures the magic axis sharply, keeps the patch
resource-free, protects the code, and accepts often, or it does not. Each is a property one can read off the protocol
and check. The fifth, subcriticality (\cref{cond:subcrit}), is different in kind. It is a statistical-mechanical
property of the whole ensemble of accepted histories, and it is the one place where the theorem rests on an assumption
we cannot discharge. A conditional no-go is only as honest as its treatment of that assumption, so we devote this
section to it. This section states what removing the assumption would require. It also says what we can and cannot prove. The details
are in \cref{app:linetension,app:numerics}, and here we state the status plainly.

Removing \cref{cond:subcrit} means deriving it from the other conditions. That is \cref{conj:linetension}, the
statement that a resource-free, protected, high-acceptance family automatically has a subcritical conversion sector. If
it held, the no-go would rest on only the two structural assumptions, and for any family in which those structural
assumptions hold, any resource-free protected measurement of the magic axis would be forced to exponentially small
acceptance. We
are not able to prove it, and we are also not able to
refute it.

The obstacle to a proof is that distance and subcriticality are different kinds of statement. Protection fixes the
minimum length of a conversion string across the patch. Subcriticality is different. It asks whether the free energy
of exponentially many such strings stays controlled, and the minimum length does not decide that. The distinction is
the ordinary one between a ground-state energy and a finite-temperature free energy. A single long conversion string costs energy that
grows with its length, and distance guarantees that cost is at least of order $d$. But there are of order
$\mu_G^{\,\ell}$ strings of length $\ell$, and their entropy can pay for their energy. If the per-cell activity $z$
exceeds $1/\mu_G$, the entropy
wins, the conversion strings condense into a dense gas, and the polymer bound no longer excludes coherent conversion,
though coherent alignment across the patch would still have to be established. That is a genuine phase transition, and
whether the accepted, post-selected ensemble sits below it or above
it is not fixed by the minimum-length statement. The only new handle beyond distance is acceptance, and acceptance is
too weak by the wrong order. Conditioning on a $1/\mathrm{poly}(d)$ accepted event reweights a patch-spanning
conversion string by at most a polynomial factor, which is subexponential in the string length, while subcriticality
needs a constant suppression per unit length. \Cref{prop:obstruction} makes this precise and shows that the two
arguments that appear to supply the missing suppression, a duality mapping to a high-temperature sector and a
Dobrushin uniqueness condition, each reduce to subcriticality itself.

The obstacle to a refutation is the resource-free clause together with the magic witness. A construction that fills the
accepted history with a supercritical gas of conversion strings, and so would violate subcriticality, has to power the
conversion somehow. A local charge-converting cell gives up \cref{cond:free}. Charge-non-converting cells with
post-selection alone drive the acceptance down to $e^{-\Omega(d^2)}$ or collapse the effect back onto the octahedron.
And even at high acceptance, a supercritical gas supplies only large absolute conversion weight, while a magic effect
needs the conversion strings to add coherently along the $\Hxy$ axis. That coherent alignment is exactly what
\cref{thm:iso} forbids on the fine branches the construction coarse-grains. \Cref{app:linetension} works this out, and
the construction is the best available attack surface, not a counterexample.

The two obstacles are two sides of the same coin, and seeing this sharpens the conjecture. A proof of subcriticality
would rule out a supercritical accepted gas, and a refutation would build one, so the conjecture is exactly the
question of whether the resource-free class can host such a gas. What the failure of the attack shows is that a gas
built the obvious way is either not resource-free, because it needs a local converter, or not accepted often enough,
because the post-selection that aligns it is exponentially costly, or not coherent, because absolute weight is not
coherent weight. A genuine counterexample would have to evade all three at once, and the fine-grained theorem says the
third obstacle bites at the level of every single stabilizer transcript, so a counterexample can only live in the
coherent alignment of exponentially many stabilizer transcripts that are each forbidden on their own. That is a strong constraint on what a
counterexample could look like, and it is the same constraint, read the other way, that a proof could try to turn into
a contradiction.

We can, however, make the assumption concrete, not just qualitative. The operative object is the connective constant
$\mu_G$ of the accepted spacetime gas, and subcriticality reads $z_A<z_c=1/\mu_G$. We do not compute $\mu_G$ exactly;
what we measure directly is a spatial-strip proxy $\mu_{\perp}(d)$ from a two-dimensional self-avoiding-walk enumeration
(\cref{prop:zc}), giving $z_c^{\perp}(d)=1/\mu_{\perp}(d)\approx0.40$--$0.52$ for $d=3$ to $9$ as an illustrative entropy
scale. Local positivity gives only $z_A\le1$, so it does not reach any such threshold, which is why the assumption is
load-bearing. The proxy threshold also decreases with $d$, so it tightens for larger codes. A proof of
\cref{conj:linetension} must show the accepted activity stays below $1/\mu_G$. A refutation must exceed
$1/\mu_G$ with a genuinely resource-free family.

It is natural to ask whether restricting to a local-noise stabilizer model discharges the assumption, since that is
where one would expect a cluster expansion to apply. \Cref{app:localclass} carries the reduction out. Local stochastic
noise does supply the charged-polymer structure of \cref{ass:decomp} and the minimum-length content of
\cref{ass:cleaning}, with per-cell conversion activity of order $p$. What it does not supply is subcriticality of the
\emph{accepted} sector: the reduction terminates at a single explicit condition on the decoder, that conditioning on
acceptance reweight a fixed conversion cell by at most a constant. This conditional bounded-likelihood condition is
\cref{cond:subcrit} in local form, and it fails for a decoder that accepts on conversion, which enhances the per-cell
activity from $p$ to order one while keeping the noise strictly local. So even in the local model the obstruction is not
removed, only made concrete: the open input is a checkable property of the acceptance rule, not a hidden clause of the
noise model.

It is fair to ask how likely the conjecture is to be true, and the honest answer is that the evidence points both
ways. On the side of truth, every construction we tried to build a supercritical accepted gas with either used the
forbidden resource or paid an exponential acceptance cost, and the fine-grained theorem forbids the coherent piece at
the level of each fine stabilizer transcript, so a counterexample would need a genuinely new mechanism. On the side of doubt, nothing
we have forces the activity below the threshold, the threshold shrinks as the code grows, and adaptive post-selection
is a powerful and poorly understood tool that has surprised the community before. We therefore state the result
conditionally and resist the temptation to guess. What we are confident of is the reduction: the entire question is
whether a resource-free, protected, post-selected ensemble can be tuned to the critical activity, a spatial-strip proxy
for the entropy scale is computed, and the remaining work is a single statistical-mechanical bound on the accepted
spacetime activity against the operative $\mu_G$. A
conditional no-go with its one open assumption isolated this precisely is, we think, more useful than an unconditional
claim that hid the same assumption inside a definition.

\paragraph{Code availability.} The deterministic enumeration that produces the connective constants and thresholds of
\cref{prop:zc} is a short self-contained script, publicly available at
\url{https://github.com/Mercury0828/surface-code-magic-nogo-numerics}.

\section{Discussion}\label{sec:discussion}

We separate what is proved from what is assumed and from what we do not claim.

\textbf{Proved.} Two statements are unconditional. A single transcript of resource-free stabilizer measurements has
isotropic code-preserving charge content, so it cannot check the magic axis (\cref{thm:iso}). And the magic of an
accepted effect equals its acceptance probability times the distance of its normalised effect from the stabilizer
octahedron (\cref{lem:norm}). The second identity is what turns the witness into a bound on the charge-converting part
of the effect.

\textbf{Conditional.} The obstruction extends to an adaptive, post-selected family as a trade-off. Magic and
acceptance cannot both be large, so a resource-free, protected, sharp check is forced to exponentially small
acceptance (\cref{thm:tradeoff}). This half rests on two structural assumptions about how the accepted effect expands
into charge-conversion polymers (\cref{ass:decomp,ass:cleaning}) and on subcriticality (\cref{cond:subcrit}). The
structural assumptions are the standard locality inputs of a polymer analysis. For a local-noise stabilizer model they
follow from locality and cleaning, up to the accepted-weight positivity that subcriticality itself controls
(\cref{app:localclass}); we do not establish them for an arbitrary post-selected effect.

\textbf{Conjecture.} Whether subcriticality follows from resource-freeness and protection is open
(\cref{conj:linetension}). It is the one physical input we cannot discharge. \Cref{sec:open} shows the difficulty is
structural: acceptance controls the accepted conversion weight only weakly, and the arguments that would close the gap
reduce to the conjecture itself.

\textbf{Not claimed.} We do not prove a quantum advantage or a hardness result. The theorem is a structural
impossibility for a defined class of operations, not a statement about cost. We do not prove an unconditional no-go.
The conjecture is exactly the remaining gap. We do not forbid folds, twists, code switches, or magic ancillas. The
theorem places them, and they are how cultivation actually works. And we do not rule out an exponentially rare check.
The trade-off constrains acceptance only down to inverse-polynomial.

Two limits of the setting deserve a plain statement, since naming them keeps the claim honest. The result is about a
single fixed patch of one logical qubit, and it does not by itself address a growing code or a multi-qubit magic
measurement, though the annular-charge method is not tied to one qubit and we expect it to extend. And the resource-free
class, while broad in its instruments, is still a specific model. The patch keeps the same phase and the same boundary
types throughout, so a construction on a different code or a different phase is outside the theorem's reach and is
placed by it only through the classification. These are not hidden caveats. They are the boundary of the model, and the model is
chosen to be exactly the setting in which the folklore claim about cultivation lives, a static surface-code patch with
no added structure. Within that setting the result is as strong as an honest treatment of the one open assumption
allows, and outside it the same method, applied to the appropriate free class, is what one would use.

Read as a resource-necessity statement, the result says the fold-transversal or self-dual deformation that
surface-code cultivation uses is not a convenience. Within the resource-free class, and under the stated assumptions,
a coherent magic check at good acceptance has only three routes. It can add a charge-converting resource. It can leave
the subcritical regime. Or it can accept only exponentially rarely. The classification of \cref{sec:classification} places each
known construction by the condition it gives up, and every route to the sharp check gives up resource-freeness. The
unconditional part of the result is close to the familiar statement that stabilizer operations cannot measure a
non-Pauli observable. What is new is the annular-charge order parameter and the extension to the coarse-grained family.

The open direction is a single, sharply stated problem. \Cref{conj:linetension} asks whether resource-freeness and
protection force the accepted charge-conversion gas to be subcritical. A proof would remove the subcriticality assumption. The
resource-necessity theorem would then rest on only the two structural assumptions, which the local-noise reduction of
\cref{app:localclass} ties to locality and cleaning. Refuting it would exhibit a genuinely new capability of adaptive
measurement, a resource-free coherent charge conversion that no single stabilizer transcript can produce. Either way, the
conjecture is where the physics of the problem now sits.

The conjecture also has a clear shape as a statistical-mechanics problem, and that shape suggests where to look. The
question is whether a post-selected ensemble of a defect gas, protected so that every defect is long, can be tuned to
criticality by conditioning alone. Ordinary conditioning cannot, since it reweights each configuration by a bounded
factor. The freedom the adaptive setting adds is a decoder that chooses the accepted set adaptively on the measured
syndrome, and the open question is whether that adaptive choice can concentrate the accepted measure on the critical
surface without paying an acceptance or a distance price. A resolution would either supply a monotonicity that forbids
this, which proves the conjecture, or a decoder that achieves it, which refutes it. The spatial-strip proxy of
\cref{prop:zc} gives a concrete illustrative scale for the target activity, so the problem is sharp enough to attack
directly, though the operative spacetime connective constant $\mu_G$ that sets the true threshold is not itself
computed.

Beyond the specific check, the result suggests a general lens. A logical measurement carries a topological order
parameter, its accepted effect read against the free stabilizer polytope, and a resource-necessity statement for that
measurement becomes a statement about which defect sectors the accepted history can support. The magic axis is the
first non-trivial case because it needs two braiding charges at once. The framework may also apply to other
non-Clifford checks and other codes, wherever a logical operation demands a charge relation the free operations cannot
build. That extension is not proved here.

\section*{Author contributions and disclosure}
J.S. and H.Z. jointly developed the framework, the theorem statements and proofs, and the manuscript; H.Z. is the
corresponding author. Quantum requires a statement of any use of AI tools: this work was developed with substantial
assistance from large language models (used for literature search, cross-checking of derivations, adversarial review of
proofs, and drafting), under human direction and verification. All theorem statements, proofs, and the honesty labels
were checked by the authors, who take full responsibility for the content. The reduction to the line-tension conjecture
and its open status are stated as such.

\bibliographystyle{quantum}
\bibliography{refs}

\appendix
\section{Proof of Theorem~\ref{thm:iso} (fine-grained isotropy)}\label{app:thm1}

The proof has three steps. First, the accepted effect of a fixed stabilizer transcript is proportional to a projector
onto a commuting group of Pauli operators (\cref{lem:collapse}). Second, we define the annular charge of a Pauli
operator on cleaned, code-preserving representatives and establish the commutator--braiding identity that holds there
(\cref{lem:charge,lem:braiding}); this is the step a naive treatment gets wrong, because raw Pauli commutation is not a
function of annular charge. Third, we show that the code-preserving charge content of a commuting stabilizer projector
is isotropic (\cref{prop:isotropy}), which rules out the $\Hxy$ check.

Throughout, a resource-free Pauli or stabilizer measurement is a projective measurement of a signed Pauli operator
$sQ$ ($s=\pm1$) whose outcome projectors are $\Pi_{sQ}=(I+sQ)/2$. A fine transcript is a fixed sequence
$K_b=\Pi_{k}\cdots\Pi_{1}$, and $E_b=K_b^{\dagger}K_b=\Pi_{1}\cdots\Pi_{k}\cdots\Pi_{1}$. We write $\mathcal G$ for the
surface-code stabilizer group, $N(\mathcal G)$ for its normaliser, and $P_{\text{code}}$ for the code projector.

\subsection{Projector collapse}\label{sec:appA-collapse}

\begin{lemma}[Projector collapse]\label{lem:collapse}
Let $K_b=\Pi_{k}\cdots\Pi_{1}$ be a product of signed Pauli projectors. Then $E_b=c_b\,\Pi_S$ for some constant
$c_b\ge0$ and some projector $\Pi_S=\lvert S\rvert^{-1}\sum_{g\in S}g$ onto the joint $+1$ eigenspace of a commuting
group $S$ of signed Pauli operators with $-I\notin S$. In particular the Pauli support of $E_b$ generates the abelian
group $S$.
\end{lemma}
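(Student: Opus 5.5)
The plan is to prove a stronger invariant by induction on the number $k$ of projectors, and to read off the statement about $E_b$ at the end. Write $K_j=\Pi_j\cdots\Pi_1$. The invariant I would carry is
\begin{equation*}
  K_j=\kappa_j\,\Pi_{T_j}\,C_j ,
\end{equation*}
with $\kappa_j\ge0$, $T_j$ a commuting group of signed Paulis with $-I\notin T_j$, $\Pi_{T_j}=|T_j|^{-1}\sum_{g\in T_j}g$ its joint $+1$ projector, and $C_j$ a Clifford unitary. The base case is $K_0=I$, with $T_0=\{I\}$, $C_0=I$ and $\kappa_0=1$.

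Given the invariant, the lemma follows at once. Since $C_j$ is unitary and $\Pi_{T_j}$ is a projector,
\begin{equation*}
  E_b=K_k^\dagger K_k=\kappa_k^2\,C_k^\dagger\Pi_{T_k}C_k=\kappa_k^2\,\Pi_{S},\qquad S=C_k^\dagger T_k C_k .
\end{equation*}
Clifford conjugation maps signed Paulis to signed Paulis and preserves commutation, so $S$ is a commuting group of signed Paulis. It satisfies $-I\notin S$ because $-I$ is fixed by conjugation. If at some step the projector becomes zero, we set $c_b=0$ and the statement is trivial. Expanding $\Pi_S$ shows that the Pauli support of $E_b$ is exactly $S$, which is abelian.

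For the inductive step, let $\Pi_{j+1}=\Pi_Q=(I+Q)/2$ with $Q$ a signed Pauli. I would split into three cases according to how $Q$ sits relative to $T=T_j$.
\begin{enumerate}
  \item[(a)] If $Q\in T$, then $\Pi_Q\Pi_T=\Pi_T$, and nothing changes. If $-Q\in T$, then $\Pi_Q\Pi_T=0$, and $\kappa_{j+1}=0$.
  \item[(b)] If $Q$ commutes with all of $T$ but $\pm Q\notin T$, then $\Pi_Q\Pi_T=\Pi_{\langle T,Q\rangle}$. The enlarged group still excludes $-I$, so $T_{j+1}=\langle T,Q\rangle$ and $C_{j+1}=C_j$.
  \item[(c)] If $Q$ anticommutes with some element of $T$, I choose generators of $T$ so that exactly one of them, $g$, anticommutes with $Q$; this is possible by multiplying the other anticommuting generators by $g$. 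Let $T'$ be the subgroup generated by the remaining generators. Then $\Pi_T=\Pi_{T'}\Pi_g$, and $Q$ commutes with $T'$.
\end{enumerate}

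The main step is case (c). Using $\Pi_Q Q=\Pi_Q$, we get
\begin{equation*}
  \Pi_Q\Pi_g=\tfrac12\Pi_Q(I+g)=\tfrac12\Pi_Q(I+Qg)=\tfrac1{\sqrt2}\,\Pi_Q U,\qquad U=\tfrac{I+Qg}{\sqrt2}.
\end{equation*}
Since $Q$ and $g$ anticommute, $(Qg)^\dagger=gQ=-Qg$ and $(Qg)^2=-I$. Hence $U$ is unitary: it is $\exp(\tfrac\pi4 Qg)$, a Clifford rotation generated by the anti-Hermitian Pauli product $Qg$. It follows that
\begin{equation*}
  \Pi_Q\Pi_{T}C_j=\Pi_Q\Pi_{T'}\Pi_gC_j=\tfrac1{\sqrt2}\,\Pi_{\langle T',Q\rangle}\,U C_j .
\end{equation*}
Here I used that $\Pi_Q$ commutes with $\Pi_{T'}$, and absorbed $\Pi_{T'}\Pi_Q=\Pi_{\langle T',Q\rangle}$ by case (a) or (b) applied to $T'$. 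This gives $\kappa_{j+1}=\kappa_j/\sqrt2$ (or $0$), $T_{j+1}=\langle T',Q\rangle$ and $C_{j+1}=UC_j$.

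The only delicate points are the generator choice that isolates a single anticommuting generator $g$, and checking that $-I$ never enters $T_{j+1}$ except in the zero case; both are routine stabilizer-formalism bookkeeping. I expect case (c), namely writing the non-commuting product $\Pi_Q\Pi_g$ as a projector times a Clifford unitary, to be the main obstacle, and the identity $\Pi_Q g=\Pi_Q Qg$ is what resolves it.
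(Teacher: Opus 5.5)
Your proof is correct, but it takes a different route from the paper's. The paper never works with $K_b$ itself. It reduces the palindrome $E_b=\Pi_1\cdots\Pi_{k-1}\Pi_k\Pi_{k-1}\cdots\Pi_1$ from the centre outward, keeping the block a multiple of a stabilizer projector. In the anticommuting case it splits $S=S_0\sqcup S_1$ with $S_0=S\cap C(R)$ and uses the two-sided identity $\Pi_{tR}\,g\,\Pi_{tR}=0$ for every $g\in S_1$, which gives $\Pi_{tR}\Pi_S\Pi_{tR}=\tfrac12\Pi_{\langle S_0,tR\rangle}$.

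You instead work one-sidedly on the Kraus operator and carry the normal form $K_j=\kappa_j\Pi_{T_j}C_j$. The one-sided identity $\Pi_Q g=\Pi_Q Qg$ and the $\pi/4$ Pauli rotation $U=(I+Qg)/\sqrt2$ do the work. Your $T'$ is exactly the paper's $S_0=T\cap C(Q)$, and your factor $1/\sqrt2$ per anticommuting step squares to the paper's $\tfrac12$.

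Your route buys a strictly stronger, instrument-level conclusion. Since $K_b=\kappa_k C_k\Pi_S$, you get a polar decomposition whose unitary part is Clifford, so the post-measurement map is fixed up to a Clifford. The paper explicitly notes that the effect alone does not capture this. The cost is extra machinery: you need that $U$ is Clifford, and that Clifford conjugation preserves signed Paulis, commutation, and the exclusion of $-I$. The paper's argument stays entirely inside the algebra of signed Pauli sums and works directly at the effect level that the rest of the appendix uses.

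Two small points deserve an explicit line rather than being left as routine bookkeeping:
\begin{enumerate}
  \item[(1)] The factorisation $\Pi_T=\Pi_{T'}\Pi_g$ needs $g\notin T'$. This holds because every element of $T'$ commutes with $Q$ while $g$ does not.
  \item[(2)] In case (c) the ``(or $0$)'' never occurs. Since $Q$ anticommutes with $g\in T$ and $T$ is abelian, $\pm Q\notin T\supseteq T'$, so case (b) always applies and $\kappa_{j+1}=\kappa_j/\sqrt2$ exactly.
\end{enumerate}
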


\begin{proof}
Reduce the palindrome $E_b=\Pi_1\cdots\Pi_{k-1}\Pi_k\Pi_{k-1}\cdots\Pi_1$ from the centre outward, keeping the
invariant that the central block is a nonnegative multiple of a stabilizer projector $\Pi_S$ onto a commuting group
$S$ with $-I\notin S$. The innermost factor $\Pi_k=\Pi_{sQ}$ is such a projector, with $S=\langle sQ\rangle$. Assume
the block equals $c\,\Pi_S$ and conjugate by the next projector $\Pi_{tR}=(I+tR)/2$. If $R$ commutes with every
element of $S$, then $\Pi_{tR}\Pi_S\Pi_{tR}=\Pi_S\Pi_{tR}$, which is either $0$ (if $-tR\in S$) or the projector onto
$\langle S,tR\rangle$. If $R$ anticommutes with some element of $S$, split $S=S_0\sqcup S_1$ into the part
$S_0=S\cap C(R)$ commuting with $R$ and its coset $S_1=hS_0$ for any fixed $h\in S$ anticommuting with $R$. Write
$\Pi_S=\lvert S\rvert^{-1}\sum_{g\in S}g=\lvert S\rvert^{-1}\bigl(\sum_{g\in S_0}g+\sum_{g\in S_1}g\bigr)$. Since $R$
anticommutes with every $g\in S_1$, conjugation by $\Pi_{tR}=(I+tR)/2$ kills the $S_1$ sum, because for such $g$
\begin{equation}
  \Pi_{tR}\,g\,\Pi_{tR}=\tfrac14(I+tR)g(I+tR)=\tfrac14\,g(I-tR)(I+tR)=0,
\end{equation}
using $Rg=-gR$ and $R^2=I$; while for $g\in S_0$, which commutes with $R$, $\Pi_{tR}g\Pi_{tR}=g\Pi_{tR}$. Hence
\begin{equation}\label{eq:collapse-anticomm}
  \Pi_{tR}\,\Pi_S\,\Pi_{tR}=\frac{1}{\lvert S\rvert}\sum_{g\in S_0}g\,\Pi_{tR}
  =\frac{\lvert S_0\rvert}{\lvert S\rvert}\,\Pi_{S_0}\Pi_{tR}
  =\tfrac12\,\Pi_{\langle S_0,tR\rangle},
\end{equation}
since $\lvert S_0\rvert=\tfrac12\lvert S\rvert$ and $\Pi_{S_0}\Pi_{tR}=\Pi_{\langle S_0,tR\rangle}$ when $tR$ commutes
with $S_0$ and is consistent with it. In both cases the block stays a nonnegative multiple of a projector onto a
commuting group with $-I$ excluded.
Iterating outward proves $E_b=c_b\Pi_S$. The Pauli operators appearing in $\Pi_S=\lvert S\rvert^{-1}\sum_{g\in S}g$
are exactly $S$, an abelian group.
\end{proof}

\Cref{lem:collapse} holds for one fixed transcript; it fails for a coarse-grained sum, where
$\tfrac12P_{\Xbar,+}+\tfrac12P_{\Ybar,+}$ has both $\Xbar$ and $\Ybar$ in its support. The coarse-grained family is the
domain of \cref{thm:tradeoff}.

\subsection{Annular charge on cleaned representatives}\label{sec:appA-charge}

The four anyon charges form $\{1,\ee,\mm,\eps\}\cong\mathbb F_2^2$ with $\mm=(1,0)$, $\ee=(0,1)$, $\eps=(1,1)$, fusion
by addition, and braiding $\Braid(\mm^\alpha\ee^\beta,\mm^{\alpha'}\ee^{\beta'})=(-1)^{\alpha\beta'+\beta\alpha'}$, so
$\Braid(\ee,\mm)=\Braid(\mm,\eps)=-1$. Let $A_b$ be an annular collar around $b$ with the fixed boundary labels.

\begin{lemma}[Annular charge]\label{lem:charge}
For a Pauli operator $P$ supported in $A_b$, write its $X$- and $Z$-string parts as relative $\mathbb F_2$-chains and,
after multiplying by code stabilizers and deleting contractible neutral string-pair components, let $[x(P)]_b$ and
$[z(P)]_b$ be their relative homology classes in $H_1(A_b;\mathbb F_2)\cong\mathbb F_2$. Then
\begin{equation}
  d_b(P):=\mm^{[x(P)]_b}\,\ee^{[z(P)]_b}\in\{1,\ee,\mm,\eps\}
\end{equation}
is well defined on the quotient of $\Pau(A_b)$ by code stabilizers and null-homologous neutral dressings, and
$d_b(PQ)=d_b(P)\,d_b(Q)$.
\end{lemma}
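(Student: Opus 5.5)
The plan is to prove \cref{lem:charge} in three steps, handling the $X$- and $Z$-parts separately and then combining them. First, fix the representation. A Pauli operator $P$ on $A_b$ is, up to phase, a pair of $\mathbb F_2$-chains: the $X$-support $x(P)$ and the $Z$-support $z(P)$. On the rotated lattice these are relative $1$-chains on the two dual sublattices (the $\mm$-string lattice and the $\ee$-string lattice), taken relative to the boundary segments where the corresponding charge condenses. The map $P\mapsto(x(P),z(P))$ is a group homomorphism from $\Pau(A_b)/\langle iI\rangle$ to $C_1\oplus C_1'$ with addition, because multiplying Paulis adds supports mod $2$ and phases are discarded. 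So once the classes are well defined, the product rule $d_b(PQ)=d_b(P)d_b(Q)$ follows at once from additivity of homology classes and the identification of charges with $\mathbb F_2^2$ under fusion.

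Second, show that the classes are well defined. Within the collar, I would restrict to operators whose chains are relative cycles, meaning they create no charge in the interior of $A_b$; only for these does a homology class make sense. Multiplying by an $X$-type stabilizer adds to $x(P)$ the boundary of a plaquette, a relative boundary. Multiplying by a $Z$-type stabilizer adds the boundary of a dual plaquette to $z(P)$. Both leave the classes unchanged. Boundary weight-two stabilizers are relative boundaries in the relative complex, so they are covered too. Deleting a contractible neutral string-pair component changes a chain by a null-homologous relative cycle, again by definition. Hence $[x(P)]_b$ and $[z(P)]_b$ depend only on the class of $P$ in the stated quotient.

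Third, compute $H_1(A_b;\mathbb F_2)\cong\mathbb F_2$ for the collar with its fixed boundary labels. The collar deformation retracts onto its core circle, or onto an arc between like boundaries in the relative case, which gives a single $\mathbb F_2$. The generator for each sublattice is the $\mm$- or $\ee$-string wrapping $b$.

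The main obstacle is the consistency of the quotient. The dressings must be removable without ever deleting a homologically nontrivial piece, and the relative homology must be taken with respect to the correct condensing boundary for each string type. I would handle this by defining the quotient as exactly the subgroup generated by stabilizers and null-homologous relative cycles, and checking that this subgroup lies in the kernel of both class maps. Then $d_b$ factors through the quotient as a homomorphism into $\mathbb F_2^2\cong\{1,\ee,\mm,\eps\}$.
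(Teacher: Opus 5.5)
Your proposal is correct and follows essentially the same route as the paper. Multiplying by a code stabilizer adds a (relative) boundary to $x(P)$ or $z(P)$, deleting a contractible neutral dressing adds a null-homologous chain, and mod-$2$ additivity of supports together with additivity of homology gives $d_b(PQ)=d_b(P)\,d_b(Q)$. Your explicit restriction to operators whose chains are relative cycles, and your care with the weight-two boundary stabilizers and with taking each string type relative to its own condensing boundary, make precise what the paper's proof leaves implicit and match the later restriction to $S\cap N(\mathcal G)$ in \cref{prop:isotropy}.
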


\begin{proof}
The $X$-part of $P$ is a $\mathbb F_2$-chain $x(P)$ on the qubits of $A_b$, and the annulus has first relative homology
$H_1(A_b,\partial A_b;\mathbb F_2)\cong\mathbb F_2$ generated by a cycle that wraps the annulus once. Multiplying $P$
by a star stabilizer $A_v=\prod_{e\ni v}X_e$ changes $x(P)$ by the coboundary of $v$,
\begin{equation}
  x(P)\mapsto x(P)+\partial^\ast v,
\end{equation}
which is null-homologous, so $[x(P)]_b$ is unchanged; the same holds for the $Z$-part under plaquette stabilizers. A
contractible neutral dressing is a product of a local $X$-loop and $Z$-loop bounding a disk, so it adds a
null-homologous chain to each of $x(P)$ and $z(P)$ and again leaves the classes fixed. Hence $[x(P)]_b$ and $[z(P)]_b$,
and with them $d_b(P)$, descend to the quotient. For the homomorphism property, the $X$-part of a product satisfies
$x(PQ)=x(P)+x(Q)$ in $\mathbb F_2$, and homology is additive, so
\begin{equation}
  [x(PQ)]_b=[x(P)]_b+[x(Q)]_b,\qquad [z(PQ)]_b=[z(P)]_b+[z(Q)]_b,
\end{equation}
which is $d_b(PQ)=d_b(P)d_b(Q)$ under the identification with $\mathbb F_2^2$.
\end{proof}

The content of \cref{lem:charge} is that the charge is a homological quantity. It ignores the local, contractible part
of a Pauli operator and keeps only how the operator wraps the annulus, which is why two operators with the same charge
can differ by an arbitrary local dressing. That is the exact loophole the naive commutator argument fell into, and the
homological definition is what closes it.

Fix cleaned canonical representatives $C_\mm=X_b$, $C_\ee=Z_b$, $C_\eps=X_bZ_b$, where $X_b,Z_b$ are annular strings
with a single transverse intersection, and $C_a=X_b^\alpha Z_b^\beta$ for $a=\mm^\alpha\ee^\beta$.

\begin{lemma}[Commutator--braiding on cleaned representatives]\label{lem:braiding}
For cleaned canonical representatives, $C_aC_{a'}=\Braid(a,a')\,C_{a'}C_a$. This identity does \emph{not} extend to
arbitrary dressed representatives: if $P=C_aL$ and $Q=C_{a'}M$ with local neutral dressings $L,M$, then
$PQ=\Braid(a,a')\,\chi(L,M)\,QP$, where $\chi(L,M)\in\{\pm1\}$ is the microscopic Pauli commutation sign contributed
by the dressings.
\end{lemma}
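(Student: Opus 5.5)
The plan is to reduce the first claim to the standard symplectic intersection form on $\mathbb F_2$-chains and then track the dressings separately. Recall that for Pauli operators $P=X(x)Z(z)$ and $Q=X(x')Z(z')$, written with $X$- and $Z$-parts as $\mathbb F_2$-chains on the qubits, one has
\[
PQ=(-1)^{|x\cap z'|+|z\cap x'|}\,QP .
\]
Here $|x\cap z'|$ counts shared qubits mod $2$, i.e. the $\mathbb F_2$ pairing of the chains. So the first step is to express the commutation sign of any two Pauli operators as this bilinear form on their $X$- and $Z$-chains.

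Next I would evaluate the form on the canonical representatives $C_a=X_b^\alpha Z_b^\beta$. By construction the annular strings $X_b$ and $Z_b$ have exactly one transverse intersection, so the pairing of $x(X_b)$ with $z(Z_b)$ is $1$. The pairing of $X_b$ with itself, and of $Z_b$ with itself, contributes nothing, since the form only pairs $X$-chains against $Z$-chains. Hence for $a=\mm^\alpha\ee^\beta$ and $a'=\mm^{\alpha'}\ee^{\beta'}$ the sign is $(-1)^{\alpha\beta'+\beta\alpha'}$. This is exactly $\Braid(a,a')$ in the $\mathbb F_2^2$ presentation fixed before \cref{lem:charge}, which gives $C_aC_{a'}=\Braid(a,a')C_{a'}C_a$.

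For the dressed statement, I expand $PQ=C_aLC_{a'}M$ and commute factors pairwise. Bilinearity of the form makes the total sign the product of four contributions: $(C_a,C_{a'})$, $(C_a,M)$, $(L,C_{a'})$ and $(L,M)$. I then define $\chi(L,M)$ as the product of the three sign contributions other than $(C_a,C_{a'})$. This is itself a Pauli commutation sign coming purely from the dressings, so it lies in $\{\pm1\}$, and $PQ=\Braid(a,a')\chi(L,M)QP$ follows.

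To back up the claim that the identity does \emph{not} extend, I would exhibit one explicit example: a single-qubit $Z$ dressing $M$ placed on a qubit of the $X_b$ string. It is neutral but flips the sign, so $\chi=-1$ while the charges are unchanged.

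The main obstacle is interpretive rather than computational. "Local neutral dressing" must be read so that $\chi$ is genuinely a local quantity. In particular, the cross terms $(C_a,M)$ and $(L,C_{a'})$ need not vanish, because a contractible loop can still cross a string an odd number of times at a boundary or at the collar edge. I would therefore state the lemma with $\chi$ absorbing all non-canonical cross terms, rather than trying to prove the cross terms vanish. This is enough for the downstream argument, which only needs the identity on cleaned, code-projected representatives.
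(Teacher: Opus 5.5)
Your argument is correct, and its first half is the paper's argument. The paper also obtains $(-1)^{\alpha\beta'+\beta\alpha'}=\Braid(a,a')$ by collecting the crossing signs of $X_b^\alpha Z_b^\beta$ against $X_b^{\alpha'}Z_b^{\beta'}$ from the single transverse intersection, which is your symplectic-form evaluation stated in words.

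The genuine difference is the dressed case. The paper asserts that the extra factor is just the commutation sign of $L$ against $M$, ``independent of the charges $a,a'$,'' and illustrates this with $P=X_bX_i$, $Q=Z_bZ_i$. That tacitly assumes the dressings sit off the canonical strings. Otherwise the cross signs you keep in your bicharacter expansion, of $C_a$ against $M$ and of $L$ against $C_{a'}$, can be $-1$. Your own example shows this: take $M=Z_j$ with $j$ on $X_b$, $L=I$ and $a=a'=\mm$. Then $L$ and $M$ commute, yet $PQ=-QP$ while $\Braid(\mm,\mm)=1$.

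So your version holds for arbitrary local dressings, whereas the paper's holds only under that off-string restriction. The price of yours is that $\chi$ now depends on $a,a'$ through the cross terms, so the label $\chi(L,M)$ is a mild abuse. Neither version affects the downstream use, which needs only the cleaned identity.

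Two small remarks. First, the paper's example is the more pertinent witness for \cref{prop:isotropy}. It exhibits a non-isotropic pair ($\mm$, $\ee$) that commutes physically, which is exactly what an abelian $S$ could present. Yours shows the converse failure, isotropic charges anticommuting. That proves non-extension, but it is not the case the isotropy argument must rule out. Second, the reason cross terms can survive is not really loops crossing at a boundary. It is that the dressings here are short open strings such as single-qubit Paulis, exactly as in your own example.
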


\begin{proof}
The single transverse intersection gives $X_bZ_b=-Z_bX_b$, and the identity string commutes with everything, so
$C_aC_{a'}=(-1)^{\alpha\beta'+\beta\alpha'}C_{a'}C_a=\Braid(a,a')C_{a'}C_a$ by collecting the crossing signs of
$X_b^\alpha Z_b^\beta$ against $X_b^{\alpha'}Z_b^{\beta'}$. For dressed operators the extra factor is the commutation
sign of $L$ against $M$, which is independent of the charges $a,a'$. This is the referee-style example
$P=X_bX_i$, $Q=Z_bZ_i$: both carry $d_b=\mm$ and $\ee$ with $\Braid(\mm,\ee)=-1$, but the local $X_iZ_i=-Z_iX_i$
contributes $\chi=-1$, so $P$ and $Q$ commute. Raw commutation is therefore not a function of $d_b$; only the cleaned
commutator is.
\end{proof}

\subsection{Isotropy of the code-preserving charge content}\label{sec:appA-isotropy}

\begin{proposition}[Isotropy]\label{prop:isotropy}
Let $\Pi_S$ be the projector of \cref{lem:collapse}, with $S$ abelian and $-I\notin S$. Its code-preserving annular
charge content
\begin{equation}
  \Qb(S)=\{\,d_b(s):s\in S\cap N(\mathcal G)\ \text{cleanable into }A_b\,\}
\end{equation}
is isotropic: $\Braid(a,a')=1$ for all $a,a'\in\Qb(S)$. Hence $\Qb(S)$ is contained in one of $\{1,\ee\}$,
$\{1,\mm\}$, $\{1,\eps\}$.
\end{proposition}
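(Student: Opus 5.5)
The plan is to pass from the abelian group $S$ to its image in the logical Pauli group, and to read charges only there, where \cref{lem:braiding} applies without dressing signs. Take two elements $s,s'\in S\cap N(\mathcal G)$ that are cleanable into $A_b$. Since each lies in the normaliser, it acts on the code space as a logical Pauli: $P_{\text{code}}\,s\,P_{\text{code}}=\pm\,\overline{L}(s)P_{\text{code}}$ with $\overline{L}(s)\in\{I,\Xbar,\Ybar,\Zbar\}$. I will then show that the annular charge $d_b(s)$ of \cref{lem:charge} equals the charge of $\overline{L}(s)$ under the dictionary $\Xbar\leftrightarrow\mm$, $\Zbar\leftrightarrow\ee$, $\Ybar\leftrightarrow\eps$. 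The argument runs as follows. Two normaliser elements with the same logical action differ by an element of $\mathcal G$ (up to sign). By \cref{lem:charge}, multiplying by code stabilizers leaves $d_b$ unchanged. So $d_b(s)$ depends only on $\overline{L}(s)$, and it can be computed on the canonical representatives $C_a$. For these, $d_b(C_a)=a$ by construction.

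Next, commutation passes to the quotient. The elements $s,s'$ commute as operators because $S$ is abelian. Hence their logical images commute: $\overline{L}(s)\overline{L}(s')=\overline{L}(s')\overline{L}(s)$. One way to see this is to write $s=\pm C_a g$ and $s'=\pm C_{a'}g'$ with $g,g'\in\mathcal G$. Elements of $\mathcal G$ commute with everything in $N(\mathcal G)$, so the commutation sign of $s$ with $s'$ equals that of $C_a$ with $C_{a'}$. By \cref{lem:braiding} that sign is $\Braid(a,a')$. So $\Braid(d_b(s),d_b(s'))=1$ for all such pairs.

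This is precisely where the dressing loophole of \cref{lem:braiding} must be closed. The point is that a cleanable normaliser element differs from $C_a$ by a stabilizer, not by an arbitrary local neutral dressing. A general dressing $L$ that is not in $\mathcal G$ but is null-homologous and neutral would have to be a local logical-trivial Pauli outside the normaliser. Such an $L$ cannot appear while $s$ stays in $N(\mathcal G)$, because the neutral null-homologous elements of $N(\mathcal G)$ are exactly $\mathcal G$, given that the distance exceeds the dressing size. This identification is the step I expect to be the main obstacle. It needs the standard fact that a normaliser element supported in the annulus, with trivial homology class, is a stabilizer. I would prove it from the exactness of the chain complex of the patch, $\ker\partial/\operatorname{im}\partial\cong H_1$, restricted to the collar $A_b$ with its fixed boundary labels. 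I will only sketch that step.

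Finally, I will finish with the algebra of $\mathbb F_2^2$. The braiding form $\Braid$ is the nondegenerate symplectic form on $\mathbb F_2^2$. A subset on which it is pairwise trivial generates, by bilinearity and the homomorphism property of $d_b$ from \cref{lem:charge}, a subgroup that is isotropic for $\Braid$. Any isotropic subgroup of a two-dimensional symplectic space has dimension at most one. So $\Qb(S)$ lies in one of $\{1,\ee\}$, $\{1,\mm\}$, $\{1,\eps\}$. The hypothesis $-I\notin S$ is used only to ensure that $\Pi_S\neq0$ and that the logical images are well defined up to sign.
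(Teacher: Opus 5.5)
Your proposal is correct and follows the paper's own route. You restrict to $S\cap N(\mathcal G)$ and use that code stabilizers commute with every normaliser element, so cleaning preserves both logical action and commutation sign; hence the cleaned canonical representatives commute, \cref{lem:braiding} gives $\Braid(a,a')=1$, and isotropic subgroups of $\mathbb F_2^2$ have order at most two. The homology-exactness step you flag as the main obstacle is not needed: $s=\pm C_a g$ with $g\in\mathcal G$ already follows from the standard fact that normaliser elements with the same logical action differ by a stabilizer, and the paper's proof does not invoke it either.
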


\begin{proof}
Take $s,t\in S\cap N(\mathcal G)$ with $d_b(s)=a$, $d_b(t)=a'$. Since $S$ is abelian, $st=ts$. Because $s$ and $t$
normalise $\mathcal G$, cleaning by code stabilizers changes neither their logical action nor their mutual commutation
sign, so their cleaned canonical representatives commute. By \cref{lem:braiding} applied to cleaned representatives,
$1=\Braid(a,a')$. Thus $\Qb(S)$ is isotropic, and the only isotropic subgroups of $\{1,\ee,\mm,\eps\}$ are the three
listed transparent pairs.

The restriction to $S\cap N(\mathcal G)$ is essential and is what disposes of the dressed counterexample. If a Pauli
$p\in S$ does not normalise $\mathcal G$, some stabilizer $g\in\mathcal G$ anticommutes with $p$, and then
\begin{equation}
  P_{\text{code}}\,p\,P_{\text{code}}=P_{\text{code}}\,gp\,P_{\text{code}}=-P_{\text{code}}\,pg\,P_{\text{code}}
  =-P_{\text{code}}\,p\,P_{\text{code}}=0 ,
\end{equation}
so $p$ contributes nothing to the effect on the code space. The operators $X_bX_i$ and $Z_bZ_i$ of \cref{lem:braiding}
are of exactly this kind: they commute as physical Paulis, but neither is a code-preserving logical annular operator,
so their non-isotropic raw charges are not part of $\Qb(S)$.
\end{proof}

\begin{proof}[Proof of \cref{thm:iso}]
By \cref{lem:collapse}, a resource-free fine stabilizer transcript has $E_b=c_b\Pi_S$ with $S$ abelian. By
\cref{prop:isotropy}, the code-preserving charge content $\Qb(S)$ is isotropic. The L\"uders projectors
$(I\pm\Hxy)/2=\tfrac12 I\pm\tfrac{1}{2\sqrt2}\Xbar\pm\tfrac{1}{2\sqrt2}\Ybar$ of \cref{eq:hxy} have code-preserving
charge content $\{\mm,\eps\}$, with $\Braid(\mm,\eps)=-1$, which is non-isotropic. A non-isotropic content cannot equal
an isotropic one, so no resource-free fine stabilizer transcript has $E_b$ proportional to a code-space representative
of $(I\pm\Hxy)/2$; the check is not implemented.
\end{proof}

The proof uses clause~(iii) of \cref{def:resourcefree}, and only it: a cell that converted $\mm$ into $\eps$ would
attach an $\ee$-string to an $\mm$-string representative, changing $d_b$ and breaking the isotropy. It uses nothing
about acceptance or distance, which is why \cref{thm:iso} is unconditional. The minimal hypotheses are the fixed
rough/smooth boundaries, the absence of twists or boundary-changing operations (so annular cleaning is well defined),
and that equality is read on the code space. What the argument does not reach is the decoded family that mixes many
transcripts, the subject of \cref{app:thm2}.

The naive version of this argument is wrong for an instructive reason, and the error recurs, so we isolate it.
The tempting shortcut is to say that a stabilizer projector has abelian Pauli support, that abelian support means
commuting charges, and that commuting charges braid trivially, so the support is isotropic. The middle step is the
false one. Two Pauli operators can commute for a reason that has nothing to do with their charges, namely a local
dressing whose own anticommutation cancels the charge anticommutation, and \cref{lem:braiding} exhibits exactly this.
The repair is to stop reading the charge off the raw operator and read it off the homology class of the cleaned,
code-preserving representative, where dressing has been quotiented away. On those representatives, and only on those,
commutation is braiding. The role of the normaliser restriction in \cref{prop:isotropy} is the same repair from the
other side: an operator that fails to preserve the code contributes nothing to the effect on the code space, so its
charge, whatever it is, is not part of the charge content the theorem constrains. The two repairs are the reason the
theorem is a statement about the logical action of the effect and not about its microscopic Pauli support, and keeping
that distinction is what makes the whole annular-charge method work for measurements as opposed to unitaries.

\section{Proof of Theorem~\ref{thm:tradeoff} (conditional suppression)}\label{app:thm2}

This appendix proves \cref{thm:tradeoff} and, in doing so, states its hypotheses at full strength. The argument has
four parts: a normalisation identity that turns the magic witness into a bound on the norm of the charge-converting
part of the accepted effect (\cref{lem:norm}, proved); two structural assumptions on how the accepted effect expands
into charged polymers (\cref{ass:decomp,ass:cleaning}); and the polymer sum under subcriticality that closes the bound
(\cref{sec:appB-assembly}). We are explicit about which parts are theorems and which are assumptions, because the
honest reading of the result depends on it: only the normalisation is a theorem, while the decomposition and the
cleaning step are assumptions that a coarse-grained, post-selected effect need not satisfy for free.

Write the code-space part of the accepted effect on the logical qubit as
\begin{equation}
  F=P_{\text{code}}\Eacc P_{\text{code}}=p\,(I+\bm v\cdot\bm\sigma),\qquad p=\pacc=\tfrac12\Tr F,
\end{equation}
and let the native stabilizer-effect cone be
\begin{equation}
  \mathcal N=\{\,q(I+\bm w\cdot\bm\sigma):q\ge0,\ \|\bm w\|_1\le1\,\},
\end{equation}
so that $F\in\mathcal N$ exactly when the normalised effect lies in the stabilizer-effect octahedron. Recall
$\Dstab(\Ehat)=\max(0,\|\bm v\|_1-1)$.

\subsection{Normalisation (a theorem)}\label{sec:appB-norm}

We measure distances with the coefficient $\ell_1$ norm $\|aI+\bm x\cdot\bm\sigma\|_c=|a|+\|\bm x\|_1$ (written with
subscript $c$ to distinguish it from the quantum diamond norm) and set $\mathrm{magic}_c(F)=\dist_c(F,\mathcal N)$.

\begin{lemma}[Normalisation]\label{lem:norm}
$\mathrm{magic}_c(F)=\pacc\,\Dstab(\Ehat)$. Consequently, for any decomposition $F=F_{\text{native}}+F_{\text{bad}}$
with $F_{\text{native}}\in\mathcal N$,
\begin{equation}\label{eq:norm-bound}
  \pacc\,\Dstab(\Ehat)\le\|F_{\text{bad}}\|_c .
\end{equation}
\end{lemma}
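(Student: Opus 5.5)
The plan is to compute $\dist_c(F,\mathcal N)$ exactly. I will show a matching upper and lower bound equal to $p\max(0,\|\bm v\|_1-1)$, with $p=\pacc$ and $F=p(I+\bm v\cdot\bm\sigma)$. The degenerate case $p=0$ is trivial: there $F=0\in\mathcal N$ and both sides vanish, so I assume $p>0$. Once the identity is established, the consequence \cref{eq:norm-bound} is immediate. For any decomposition with $F_{\text{native}}\in\mathcal N$ we have $\|F_{\text{bad}}\|_c=\|F-F_{\text{native}}\|_c\ge\dist_c(F,\mathcal N)=\pacc\Dstab(\Ehat)$.

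First, I record the one-qubit fact behind the formula $\Dstab=\max(0,\|\bm v\|_1-1)$. The $\ell_1$-distance from $\bm v$ to the unit $\ell_1$ ball is at least $\|\bm v\|_1-1$ by the triangle inequality. When $\|\bm v\|_1>1$ it is attained at $\bm w^\ast=\bm v/\|\bm v\|_1$, since then $\|\bm v-\bm w^\ast\|_1=\|\bm v\|_1-1$.

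\emph{Upper bound.} Take the competitor $G=p(I+\bm w^\ast\cdot\bm\sigma)\in\mathcal N$, or $G=F$ if $\|\bm v\|_1\le1$. Its identity coefficient matches that of $F$, so $\|F-G\|_c=p\|\bm v-\bm w^\ast\|_1=p\,\Dstab(\Ehat)$.

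\emph{Lower bound.} Take an arbitrary $G=q(I+\bm w\cdot\bm\sigma)$ with $q\ge0$ and $\|\bm w\|_1\le1$. Then
\begin{equation*}
\|F-G\|_c=|p-q|+\|p\bm v-q\bm w\|_1\ge|p-q|+p\|\bm v\|_1-q\|\bm w\|_1\ge|p-q|-q+p\|\bm v\|_1 .
\end{equation*}
I split on the sign of $p-q$. If $q\ge p$, the right side is $p\|\bm v\|_1-p$. If $q<p$, it is $p-2q+p\|\bm v\|_1$, which is at least $p\|\bm v\|_1-p$ because $q<p$ gives $p-2q>-p$. In both cases $\|F-G\|_c\ge p(\|\bm v\|_1-1)$. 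Combined with nonnegativity of the norm, this gives $\|F-G\|_c\ge p\max(0,\|\bm v\|_1-1)$. Taking the infimum over $G$ matches the upper bound.

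The only step needing care is the lower bound. The competitor may rescale $q$ away from $p$, trading identity-coefficient error against Bloch-vector error. The case split above is exactly what shows that rescaling never helps: every unit of $\ell_1$ Bloch length gained by raising $q$ is paid for at least one-for-one in $|p-q|$. No earlier result of the paper is needed beyond the definitions of $\mathcal N$, $\|\cdot\|_c$ and $\Dstab$.
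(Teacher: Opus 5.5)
Your proof is correct and follows essentially the same route as the paper. Both reduce the distance to a one-parameter optimisation over the rescaling $q\ge0$, show that $q=p$ is optimal so the value is $p(\|\bm v\|_1-1)$, and then read off \cref{eq:norm-bound} from $\|F_{\text{bad}}\|_c=\|F-F_{\text{native}}\|_c\ge\mathrm{magic}_c(F)$; the paper computes the exact inner distance $\max(p\|\bm v\|_1-q,0)$ and minimises it piecewise, whereas you use the reverse triangle inequality with an explicit competitor and also treat the degenerate case $p=0$ explicitly.
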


\begin{proof}
If $\|\bm v\|_1\le1$ then $F\in\mathcal N$ and both sides vanish. Assume $s=\|\bm v\|_1>1$. For $q\ge0$, the closest
point $q\bm w$ with $\|\bm w\|_1\le1$ to $p\bm v$ in $\ell_1$ distance is at distance $\max(ps-q,0)$, so
\begin{equation}
  \mathrm{magic}_c(F)=\inf_{q\ge0}\Bigl(|p-q|+\max(ps-q,0)\Bigr).
\end{equation}
For $q\le p$ the bracket is $p-q+ps-q\ge p(s-1)$; for $p\le q\le ps$ it equals $q-p+ps-q=p(s-1)$; for $q\ge ps$ it
equals $q-p\ge p(s-1)$. The infimum $p(s-1)$ is attained at $q=p$, giving
$\mathrm{magic}_c(F)=p(s-1)=\pacc\Dstab(\Ehat)$. Since $F_{\text{native}}\in\mathcal N$,
$\|F_{\text{bad}}\|_c=\|F-F_{\text{native}}\|_c\ge\mathrm{magic}_c(F)$, which is
\cref{eq:norm-bound}.
\end{proof}

The same normalisation holds in trace norm, up to fixed constants, so the choice of norm is immaterial.

\begin{lemma}[Trace-norm normalisation]\label{lem:trace-norm-normalisation}
Let $\mathrm{magic}_1(F)=\dist_1(F,\mathcal N)=\inf_{G\in\mathcal N}\|F-G\|_1$. Then
\begin{equation}
  \frac{2}{1+\sqrt3}\,\pacc\,\Dstab(\Ehat)\le \mathrm{magic}_1(F)\le 2\,\pacc\,\Dstab(\Ehat).
\end{equation}
\end{lemma}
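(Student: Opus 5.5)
The plan is to reduce \cref{lem:trace-norm-normalisation} to \cref{lem:norm} by a pointwise comparison of the trace norm and the coefficient norm $\|\cdot\|_c$ on Hermitian one-qubit operators. Since both $\mathrm{magic}_1$ and $\mathrm{magic}_c$ are infima of a norm of $F-G$ over the same cone $\mathcal N$, two-sided constants relating the norms transfer directly to the infima. \Cref{lem:norm} then replaces $\mathrm{magic}_c(F)$ by $\pacc\,\Dstab(\Ehat)$. No new geometry of the octahedron is needed.

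\emph{Step one: the trace norm of a one-qubit operator.} For a Hermitian $M=aI+\bm x\cdot\bm\sigma$ with real $a,\bm x$, the eigenvalues are $a\pm\|\bm x\|_2$, because $(\bm x\cdot\bm\sigma)^2=\|\bm x\|_2^2 I$. Hence
\begin{equation}
  \|M\|_1=|a+\|\bm x\|_2|+|a-\|\bm x\|_2|=2\max\bigl(|a|,\|\bm x\|_2\bigr).
\end{equation}
Every $F-G$ with $G\in\mathcal N$ is of this form, since $F=p(I+\bm v\cdot\bm\sigma)$ and $G=q(I+\bm w\cdot\bm\sigma)$ with $p,q,\bm v,\bm w$ real.

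\emph{Step two: comparing the norms.} On $\mathbb R^3$ we have $\|\bm x\|_2\le\|\bm x\|_1\le\sqrt3\,\|\bm x\|_2$. For the upper comparison,
\begin{equation}
  \|M\|_1=2\max(|a|,\|\bm x\|_2)\le 2\bigl(|a|+\|\bm x\|_1\bigr)=2\|M\|_c .
\end{equation}
For the lower comparison,
\begin{equation}
  \|M\|_c=|a|+\|\bm x\|_1\le|a|+\sqrt3\,\|\bm x\|_2\le(1+\sqrt3)\max(|a|,\|\bm x\|_2)=\tfrac{1+\sqrt3}{2}\|M\|_1 .
\end{equation}
So $\tfrac{2}{1+\sqrt3}\|M\|_c\le\|M\|_1\le2\|M\|_c$ for every such $M$.

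\emph{Step three: passing to the infima.} Apply the comparison to $M=F-G$ and take $\inf_{G\in\mathcal N}$ on each side. This gives
\begin{equation}
  \tfrac{2}{1+\sqrt3}\,\mathrm{magic}_c(F)\le\mathrm{magic}_1(F)\le2\,\mathrm{magic}_c(F).
\end{equation}
Substituting $\mathrm{magic}_c(F)=\pacc\,\Dstab(\Ehat)$ from \cref{lem:norm} yields the claim.

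The only point needing care is step one: one must check that $F-G$ is Hermitian with real coefficients, so that the spectral formula applies. This holds because $\mathcal N$ consists of real combinations of $I$ and the Paulis. The rest is elementary norm equivalence in $\mathbb R^3$.
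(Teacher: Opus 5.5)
Your proof is correct and follows the paper's argument essentially step for step. Both use the spectral formula $\|M\|_1=2\max(|a|,\|\bm x\|_2)$, then the comparison $\tfrac{2}{1+\sqrt3}\|M\|_c\le\|M\|_1\le2\|M\|_c$ obtained from $\|\bm x\|_2\le\|\bm x\|_1\le\sqrt3\,\|\bm x\|_2$, then passage to the infimum over $\mathcal N$, and finally substitution of \cref{lem:norm}. The only cosmetic difference is that you get the lower comparison in one line from $|a|+\sqrt3\,\|\bm x\|_2\le(1+\sqrt3)\max(|a|,\|\bm x\|_2)$, whereas the paper splits into the cases $|a|\ge\|\bm x\|_2$ and $|a|\le\|\bm x\|_2$ and additionally notes that both constants are sharp.
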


\begin{proof}
We first prove the one-qubit norm comparison. Let $A=aI+\bm x\cdot\bm\sigma$ with $a\in\mathbb R$, $\bm x\in\mathbb R^3$,
and write $\alpha=|a|$, $r=\|\bm x\|_2$, $s=\|\bm x\|_1$. Since $(\bm x\cdot\bm\sigma)^2=r^2I$, the eigenvalues of $A$
are $\lambda_\pm=a\pm r$, so $\|A\|_1=|a+r|+|a-r|=2\max(\alpha,r)$, while $\|A\|_c=\alpha+s$. The upper bound is
immediate, $\|A\|_1=2\max(\alpha,r)\le2(\alpha+s)=2\|A\|_c$. For the lower bound use $s\le\sqrt3\,r$. If $\alpha\ge r$,
$(\alpha+s)/2\alpha\le(\alpha+\sqrt3 r)/2\alpha\le(1+\sqrt3)/2$; if $\alpha\le r$,
$(\alpha+s)/2r\le(r+\sqrt3 r)/2r=(1+\sqrt3)/2$. Hence $\|A\|_1\ge\tfrac{2}{1+\sqrt3}\|A\|_c$. Both constants are sharp:
the factor $2$ at $\bm x=0$, and $2/(1+\sqrt3)$ at $|a|=r$, $\bm x=\tfrac{r}{\sqrt3}(\pm1,\pm1,\pm1)$, where
$s=\sqrt3 r$ and $\|A\|_1=2r$. Applying the comparison to the Hermitian differences $F-G$ with $G\in\mathcal N$ and
taking the infimum gives $\tfrac{2}{1+\sqrt3}\mathrm{magic}_c(F)\le\mathrm{magic}_1(F)\le2\,\mathrm{magic}_c(F)$;
\cref{lem:norm} then yields the claim.
\end{proof}

\Cref{lem:norm} is what makes \cref{thm:tradeoff} a genuine $\Dstab\cdot\pacc$ trade-off. It also fixes the object the
rest of the argument must control: the norm of the charge-converting part $F_{\text{bad}}$.

\subsection{The structural assumptions}\label{sec:appB-assumptions}

The frozen data give only the positive operator $F$. They do not, on their own, give a canonical absolutely-summable
expansion of $F$ into charge-converting polymers, and they do not force every non-native term to span the patch. Both
are needed and neither is free, so we state them as assumptions. \Cref{thm:iso} does not supply them: it controls fine
\emph{stabilizer} transcripts, whereas a broad completely positive map with post-selection need not have each fine
$K_\tau^\dagger K_\tau$ native.

\begin{assumption}[Cone-compatible charged decomposition]\label{ass:decomp}
For every accepted family $A$ there is a decomposition $F=F_{\text{native}}+F_{\text{bad}}$ with
$F_{\text{native}}\in\mathcal N$ and $F_{\text{bad}}=\sum_{\gamma\in\mathcal C_A}B_\gamma$ absolutely convergent in
$\|\cdot\|_c$, where each $B_\gamma$ carries a nontrivial annular $\mm\!\leftrightarrow\!\eps$ conversion label.
Every purely native contribution, including classical mixtures of $\Xbar,\Ybar,\Zbar$ effects that lie inside the
octahedron, is absorbed into $F_{\text{native}}$. The decomposition, the polymers $\Gamma_\gamma$, and the per-cell
activity $z$ are all fixed by a single prescribed local cluster-expansion scheme for the accepted effect, so $z$ is a
determinate functional of the effect and is not a free parameter one may tune. The scheme is fixed in advance and
uniformly across the family; \cref{cond:subcrit} is imposed on it. When more than one cone-compatible split is
consistent with the scheme we take one minimising $\|F_{\text{bad}}\|_c$, which sharpens the bound
\cref{eq:norm-bound} without changing the definition of $z$.
\end{assumption}

\Cref{ass:decomp} is what separates magic from raw non-isotropic support. The classical mixture
$\tfrac12P_{\Xbar,+}+\tfrac12P_{\Ybar,+}=\tfrac12I+\tfrac14(\Xbar+\Ybar)$ has both $\mm$ and $\eps$ support yet Bloch
vector $(\tfrac12,\tfrac12,0)$ with $\|\bm v\|_1=1$ and zero magic; the assumption places it in $F_{\text{native}}$,
where it belongs, so that $F_{\text{bad}}$ genuinely tracks the magic. The distinction the assumption draws is between
a term that carries charge conversion as a coherent operator and a term whose non-isotropic Bloch components come from
adding two native effects. The first is what a magic check needs and the second is what a coin flip produces, and the
Fourier support alone does not tell them apart, which is why the decomposition, and not the support, is the right
object. A convergent charged expansion is the natural home for this distinction, because in it the coherent conversions
appear as the terms $B_\gamma$ with a nontrivial label and the incoherent mixtures do not appear at all, having been
summed into the native cone. The assumption is that such an expansion exists for the accepted effect; for a local
stabilizer circuit it does, by the cluster expansion, and the only reason it is an assumption is that a broad
post-selected instrument need not admit one.

\begin{assumption}[Effect-cleaning for conversion terms]\label{ass:cleaning}
In the decomposition of \cref{ass:decomp}, every bad term $B_\gamma$ contains a connected conversion polymer in the
spacetime cell-adjacency graph, and we single out within it, by a fixed choice, a \emph{self-avoiding conversion spine}
$\Gamma_\gamma$: a self-avoiding path of conversion cells threading the logical annulus, of cleaned length
$|\Gamma_\gamma|$ (the number of spine cells after local neutral bubbles are removed). The branches and local decorations
of the polymer off the spine are not controlled here; their control is deferred to the renormalized-activity form of
subcriticality in \cref{eq:c5-spine,eq:c5-absolute}, the first of which bounds the total weight of all bad terms sharing
a given spine and is where the exponential multiplicity of decorations is summed. If $B_\gamma$ carries $\mm\!\leftrightarrow\!\eps$
conversion on the logical annulus then the spine crosses the protected patch, so $|\Gamma_\gamma|\ge L(d)$; equivalently,
every conversion object whose spine has
cleaned length $<L(d)$ has native code projection and is absorbed into $F_{\text{native}}$. The counting of
\cref{lem:polymer-counting} is then over spines, which are self-avoiding paths, not over the full branched polymers.
\end{assumption}

\Cref{ass:cleaning} is a minimum-length statement, not a free-energy statement, and it is independent of subcriticality:
it says a conversion that connects the boundary condensates across the protected patch cannot be built from fewer than
$L(d)$ cells without being a locally correctable artifact. It uses resource-freeness (each cell is charge-non-converting,
so a conversion must be a connected chain) and protection (a sub-$L(d)$ conversion is removable), but it asserts, and
does not derive, that the effect expansion respects the cell structure this way.

\begin{remark}[When the assumptions hold]\label{rem:when-assumptions}
\Cref{ass:decomp,ass:cleaning} are not vacuous, and we say where they come from. For an accepted family
whose fine transcripts are local stabilizer circuits under a local noise model, the accepted effect has a convergent
expansion into charged polymers by standard cluster-expansion methods~\cite{KoteckyPreiss1986,FernandezProcacci2007},
which supplies the structural content of \cref{ass:decomp}; and the cleaning lemma of topological codes, that an
operator supported in a correctable region acts trivially on the code, supplies the minimum-length content of
\cref{ass:cleaning}~\cite{BravyiKoenig2013,Dennis2002}. \Cref{app:localclass} carries this reduction out in full and
shows that, even for a local-noise stabilizer family, it terminates at a residual accepted-weight hypothesis instead of
discharging the assumptions outright. The assumptions isolate exactly the step where a broad post-selected instrument,
as opposed to a local circuit, might fail to admit such a local charged expansion. We state them as assumptions, not
lemmas, because the theorem is about the broad instrument class, where they are not automatic.
\end{remark}

\subsection{Assembly under subcriticality}\label{sec:appB-assembly}

Subcriticality (\cref{cond:subcrit}) is used in its unnormalised absolute-activity form, stated directly on the
operator norms of the bad terms so that the polymer sum controls $\|F_{\text{bad}}\|_c$. The polymers are objects in the
accepted \emph{spacetime} cell-adjacency graph $G_d$, the graph on which \cref{ass:cleaning} places them, so the count
and the connective constant are properties of that $(2{+}1)$-dimensional graph and not of a spatial slice. We work in a
\emph{bounded spacetime model}: $G_d$ has bounded degree $\Delta$ and polynomial volume $|V(G_d)|\le C_{\mathrm{vol}}\,d^{\,r}$
for fixed $\Delta,r$ independent of $d$. This is where the protocol depth enters, since $|V(G_d)|$ counts all spacetime
cells; a family whose depth grows so that $|V(G_d)|$ is superpolynomial in $d$ is outside the model, and the bound below
does not apply to it. The content of subcriticality is a \emph{renormalized-activity} bound in two steps. First, for
each fixed spine $\Gamma$, the total accepted weight of all bad terms carrying that spine is controlled,
\begin{equation}\label{eq:c5-spine}
  \sum_{\gamma:\,\Gamma_\gamma=\Gamma}\|B_\gamma\|_c\le C_0\,z^{|\Gamma|},
\end{equation}
with $z$ the \emph{renormalized per-cell spine activity} and $C_0$ a constant independent of $d$ and of $|\Gamma|$. This
first bound is where the exponentially many branches and decorations attached to a fixed spine are summed: their sum
converges into $C_0$ and the renormalized $z$ precisely when that decoration gas is itself subcritical, and it is a
genuine hypothesis, not a consequence of positivity $z\le1$ (a bounded-degree graph has exponentially many decorations,
so $z\le1$ alone would not converge). Second, summing over the $N_\ell(G_d)$ spines of length $\ell$,
\begin{equation}\label{eq:c5-absolute}
  \sum_{\gamma:\,|\Gamma_\gamma|=\ell}\|B_\gamma\|_c\le C_0\,N_\ell(G_d)\,z^\ell,\qquad
  N_\ell(G_d)\le C_1\,|V(G_d)|\,\mu_G^{\,\ell},\qquad \mu_G z<1,
\end{equation}
with $\mu_G$ the connective constant of the spacetime graph family and line tension $\tau=-\log(\mu_G z)>0$. Stating the
bound on $\|B_\gamma\|_c$, not on abstract scalar weights, is what makes the sum below a bound on $\|F_{\text{bad}}\|_c$.
We take this operator form, including the renormalized per-spine bound \cref{eq:c5-spine}, as the content of
\cref{cond:subcrit}.

The polymer count in \cref{eq:c5-absolute} is the entropy of cleaned conversion strings in $G_d$, and we bound it
explicitly.

\begin{lemma}[Counting cleaned conversion polymers]\label{lem:polymer-counting}
Let $\{G_d\}$ be the family of accepted spacetime cell-adjacency graphs, of bounded degree $\Delta$ and volume
$|V(G_d)|\le C_{\mathrm{vol}}d^{\,r}$, and let $N_\ell(G_d)$ be the number of cleaned conversion spines
(\cref{ass:cleaning}), that is self-avoiding paths, of cleaned length $\ell$ in $G_d$. Because the degree is bounded,
there is a fixed constant $\mu_G$, uniform over the family, and a
fixed $C_{\mathrm{SAW}}$, with $w_{d,\ell}(x)\le C_{\mathrm{SAW}}\mu_G^{\,\ell}$ for every $d$, every start cell $x$, and
every $\ell$, where $w_{d,\ell}(x)$ counts self-avoiding paths of length $\ell$ from $x$ in $G_d$; one may take
$\mu_G=\Delta-1$. Then $N_\ell(G_d)\le C_1\,|V(G_d)|\,\mu_G^{\,\ell}$, so the prefactor in \cref{eq:c5-absolute} is the
spacetime volume $|V(G_d)|\le C_{\mathrm{vol}}d^{\,r}$.
\end{lemma}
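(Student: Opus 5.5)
The plan is to bound the number of self-avoiding paths by a branching count in a bounded-degree graph, and then sum over starting cells.

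\textbf{Step 1: paths from a fixed start cell.} Fix $d$, a start cell $x\in V(G_d)$, and a length $\ell\ge1$. A self-avoiding path $(x=x_0,x_1,\dots,x_\ell)$ is determined by its successive choices of neighbour. There are at most $\Delta$ choices for $x_1$. For each $i\ge1$, the next vertex $x_{i+1}$ must be a neighbour of $x_i$ distinct from $x_{i-1}$, since self-avoidance forbids returning to the previous vertex. This leaves at most $\Delta-1$ choices. Hence
\begin{equation*}
  w_{d,\ell}(x)\le\Delta(\Delta-1)^{\ell-1}=\tfrac{\Delta}{\Delta-1}(\Delta-1)^{\ell}.
\end{equation*}
So $\mu_G=\Delta-1$ and $C_{\mathrm{SAW}}=\Delta/(\Delta-1)$ work uniformly in $d$, $x$ and $\ell$, provided $\Delta\ge2$. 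The degenerate case $\Delta\le1$ allows only paths of length at most $1$, so there it suffices to take $\mu_G=1$ and $C_{\mathrm{SAW}}=\Delta+1$. The constants depend only on $\Delta$, and $\Delta$ is fixed across the family by the bounded spacetime model. This gives the uniformity the statement asks for.

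\textbf{Step 2: sum over start cells.} Every cleaned conversion spine of \cref{ass:cleaning} is a self-avoiding path of length $\ell$ in $G_d$ and has a first cell. The map sending a spine to the pair (first cell, path) is injective; it may overcount by a factor of $2$ through orientation, which only helps. Therefore
\begin{equation*}
  N_\ell(G_d)\le\sum_{x\in V(G_d)}w_{d,\ell}(x)\le C_{\mathrm{SAW}}\,|V(G_d)|\,\mu_G^{\,\ell},
\end{equation*}
which is the claim with $C_1=C_{\mathrm{SAW}}$. Inserting $|V(G_d)|\le C_{\mathrm{vol}}d^{\,r}$ identifies the prefactor in \cref{eq:c5-absolute} as the polynomial spacetime volume.

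\textbf{Main obstacle.} The only delicate point is bookkeeping. The cleaned length in \cref{ass:cleaning} is measured after local neutral bubbles are removed, so I must make sure that the cleaned spine is itself a self-avoiding path in $G_d$ of that length. The fixed-choice rule of \cref{ass:cleaning} defines the spine to be such a path, so the count applies directly. The decorations and the many raw polymers sharing a spine are deliberately not counted here; \cref{eq:c5-spine} absorbs them. I would also remark that $\mu_G=\Delta-1$ is crude, and that the true connective constant may be smaller, but the lemma needs only some uniform $\mu_G$.
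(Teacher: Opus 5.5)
Your proposal is correct and follows the paper's proof: both bound the rooted self-avoiding paths by the non-backtracking count $\Delta(\Delta-1)^{\ell-1}$, then sum over the $|V(G_d)|$ start cells, noting that the orientation overcount is harmless for an upper bound. You also give the explicit constant $C_{\mathrm{SAW}}=\Delta/(\Delta-1)$ and treat $\Delta\le1$ separately, where $\mu_G=\Delta-1$ would fail at $\ell=1$; these are small refinements that the paper leaves implicit.
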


\begin{proof}
A cleaned conversion polymer is a connected self-avoiding path in $G_d$. From a fixed start cell there are at most
$\Delta(\Delta-1)^{\ell-1}$ such paths of length $\ell$, and by the definition of the connective constant this is
$\le C_{\mathrm{SAW}}\mu_G^{\,\ell}$ for a fixed $C_{\mathrm{SAW}}$ and all $\ell$. Summing over the $|V(G_d)|$ possible
start cells, and noting that rooted oriented paths overcount unrooted polymers (harmless for an upper bound), gives
$N_\ell(G_d)\le C_1\,|V(G_d)|\,\mu_G^{\,\ell}$ with $C_1=C_{\mathrm{SAW}}$. The bound holds for any connected
bounded-degree spacetime graph. The value of $\mu_G$ is a property of the spacetime graph family; it is not the
connective constant of the two-dimensional spatial annulus, which \cref{app:numerics} calibrates only as a
lower-dimensional proxy for the entropy scale.
\end{proof}

\begin{remark}[Sharper family constant]\label{rem:mu-upper-bound}
The crude bound $\mu_G=\Delta-1$ can be replaced by the sharper family growth rate
$\mu_G^{\star}=\limsup_{d,\ell}(\sup_x w_{d,\ell}(x))^{1/\ell}$ whenever it is finite, using its defining property that
for every $\eta>0$ there is $C_\eta<\infty$ with $w_{d,\ell}(x)\le C_\eta(\mu_G^{\star}+\eta)^\ell$ uniformly in $d$ and
$x$. One then replaces $\mu_G$ by $\mu_G^{\star}+\eta$ throughout, which only improves the line tension $\tau$.
\end{remark}

\begin{proof}[Proof of \cref{thm:tradeoff}]
By \cref{ass:decomp}, $F_{\text{bad}}=\sum_\gamma B_\gamma$ converges absolutely in $\|\cdot\|_c$, so
$\|F_{\text{bad}}\|_c\le\sum_\gamma\|B_\gamma\|_c$ by the triangle inequality. By \cref{ass:cleaning} every $B_\gamma$
carries a conversion polymer of cleaned length $\ge L(d)$. Grouping terms by polymer length and using
\cref{eq:c5-absolute},
\begin{equation}
  \|F_{\text{bad}}\|_c\le\sum_{\ell\ge L(d)}\sum_{\gamma:\,|\Gamma_\gamma|=\ell}\|B_\gamma\|_c
  \le C_0C_1\,|V(G_d)|\sum_{\ell\ge L(d)}(\mu_G z)^\ell
  =\frac{C_0C_1}{1-e^{-\tau}}\,|V(G_d)|\,e^{-\tau L(d)} .
\end{equation}
Combining with \cref{eq:norm-bound} of \cref{lem:norm} gives, with $C=C_0C_1/(1-e^{-\tau})$,
\begin{equation}
  \Dstab(\Ehat)\,\pacc\le C\,|V(G_d)|\,e^{-\tau L(d)} .
\end{equation}
Since $|V(G_d)|\le C_{\mathrm{vol}}d^{\,r}$ in the bounded spacetime model, this is \cref{eq:tradeoff}. If
$L(d)=\Omega(d)$ and $\Dstab(\Ehat)\ge c>0$, then $\pacc\le (C/c)\,|V(G_d)|\,e^{-\tau L(d)}=e^{-\Omega(d)}$, the
polynomial volume prefactor being beaten by the exponential. Without the polynomial-volume hypothesis, that is for an
unbounded-depth family, the prefactor need not be subexponential and the conclusion can fail.
\end{proof}

\begin{remark}[Why the absolute form]\label{rem:absolute}
Subcriticality must be stated in the unnormalised form \cref{eq:c5-absolute}. If instead one assumed the accepted
weights obeyed $\sum_\Gamma|w_A(\Gamma)|\le C d^{\,r}e^{-\tau L}\pacc$, with a factor $\pacc$ on the right, the same
steps would give the stronger normalised bound $\Dstab(\Ehat)\le C d^{\,r}e^{-\tau L}$, with no acceptance trade-off at
all. That statement is much closer to assuming the conclusion, so it is not the right hypothesis; the acceptance
trade-off of \cref{thm:tradeoff} is a consequence of the absolute-activity form and \cref{lem:norm}, not of a
normalised input.
\end{remark}

\begin{remark}[A worked illustration]\label{rem:worked-illustration}
The assembly is transparent on a toy model that keeps only the conversion sector on a single annular ring of $n$ cells.
Let each cell carry an independent conversion amplitude of coefficient-norm at most $z$, and let a bad term $B_\Gamma$
be a product of conversions along a connected arc $\Gamma$, so $\|B_\Gamma\|_c\le z^{|\Gamma|}$. The arcs of length
$\ell$ number at most $n$ (one for each starting cell, in one orientation), and the spanning arcs are those with
$\ell\ge L$. Then
\begin{equation}
  \|F_{\text{bad}}\|_c\le\sum_{\ell\ge L}\sum_{|\Gamma|=\ell}\|B_\Gamma\|_c
  \le n\sum_{\ell\ge L}z^\ell
  =\frac{n\,z^L}{1-z},
\end{equation}
which for $z<1$ is exponentially small in $L$ with rate $\tau=-\log z$ and a prefactor linear in the system size. The
general bound replaces the arc count $n$ by the spacetime count $N_\ell(G_d)\le C_1|V(G_d)|\mu_G^{\,\ell}$ of
\cref{lem:polymer-counting} and the rate $-\log z$ by $-\log(\mu_G z)$, which is positive exactly when the gas is
subcritical. The illustration shows that the only place the argument can fail is the sign of $\mu_G z-1$, which is
\cref{cond:subcrit}, and that everything else is bookkeeping.
\end{remark}

We record the honest label. \Cref{thm:tradeoff} is a theorem under
\cref{ass:decomp,ass:cleaning}, \cref{cond:subcrit}, and \cref{lem:norm}. Of these, only \cref{lem:norm} is proved
here; \cref{ass:decomp,ass:cleaning} are structural assumptions on the accepted-effect expansion, and \cref{cond:subcrit}
is the physical assumption whose status is the subject of \cref{sec:open} and \cref{app:linetension}. The
unconditional core of the no-go is \cref{thm:iso} together with the normalisation identity \cref{lem:norm}; the
structural assumptions and \cref{cond:subcrit} enter only for the polymer extension to the coarse-grained family.

\section{The accepted conversion line-tension conjecture}\label{app:linetension}

Subcriticality (\cref{cond:subcrit}) is the one hypothesis of \cref{thm:tradeoff} we cannot remove. This appendix
states precisely what it would take to remove it, shows why the natural argument fails, and describes the construction
that comes closest to violating it. The outcome is that subcriticality is neither derivable from the other conditions
nor refutable by any construction we have found: it is a genuine open input, and the honest form of the result keeps
it as a stated conjecture.

\subsection{The conjecture}\label{sec:appC-conjecture}

Removing \cref{cond:subcrit} from \cref{thm:tradeoff} means deriving it from the operational conditions. That is the
following statement.

\begin{conjecture}[Accepted conversion line tension]\label{conj:linetension}
Every resource-free (\cref{cond:free}), $\Omega(d)$-protected (\cref{cond:protect}) accepted family with
non-negligible acceptance (\cref{cond:accept}) satisfies subcriticality (\cref{cond:subcrit}): its accepted per-cell
conversion activity $z_A$ obeys $\mu_G z_A<1$ uniformly in $d$, equivalently the accepted transfer operator has a uniform
gap between the native and the conversion sector.
\end{conjecture}

If \cref{conj:linetension} held, \cref{cond:subcrit} would be a consequence of \cref{cond:free,cond:protect,cond:accept}.
The no-go would then rest on only the two structural assumptions \cref{ass:decomp,ass:cleaning}, whose charged-polymer
and cleaning content the local-noise reduction of \cref{app:localclass} supplies for local-noise stabilizer families,
so for any family in which those assumptions hold
\cref{cond:magic,cond:free,cond:protect,cond:accept} alone would be incompatible and any resource-free protected block
that measured the magic axis sharply would be forced to exponentially small acceptance. This is the strongest form of
the resource-necessity statement we could hope for, and it is the form we do not have.

\subsection{Why distance does not give line tension}\label{sec:appC-obstruction}

The reason the conjecture is hard is a mismatch of type. Distance is a minimum-length statement, and subcriticality is
a free-energy statement. The accepted weight of conversion polymers of length $\ell$ scales as
$\sum_{|\Gamma|=\ell}|w_A(\Gamma)|\sim(\mu_G z_A)^\ell$, the number $\mu_G^{\,\ell}$ of strings times the per-string weight
$z_A^\ell$. Protection controls only the lower limit $\ell\ge L(d)$ of the sum, and says nothing about whether the
summand grows or decays, which is the sign of $\mu_G z_A-1$. Local positivity, which for a valid effect $0\le F\le I$
gives the per-cell bound $z_A\le1$, is also silent, since with $\mu_G>1$ the product $\mu_G z_A$ can still exceed one. The
three inputs we have therefore constrain different features of the same sum: distance sets where it starts, positivity
caps each term, and neither touches the ratio that decides convergence. To force convergence one needs a bound of the
form $z_A<1/\mu_G$, a constant strictly below the positivity cap, and nothing in the hypotheses supplies a constant gap
below one.

\begin{proposition}[Acceptance gives only subexponential reweighting]\label{prop:obstruction}
The acceptance handle bounds the per-length reweighting of a conversion polymer only subexponentially: for a spanning
polymer $\Gamma$ with $|\Gamma|=\ell=\Omega(d)$, non-negligible acceptance $\pacc\ge d^{-q}$ gives
\begin{equation}\label{eq:acceptance-subexp}
  \frac{\Pr(A\mid\Gamma)}{\Pr(A)}\le\frac{1}{\pacc}\le d^{\,q}=e^{\,o(\ell)} ,
\end{equation}
whereas turning the positivity worst case $z_A=1$ into a subcritical activity would require a constant per-cell penalty
$\Pr(A\mid\Gamma)\le\pacc\,e^{-a\ell}$ with $a>\log\mu_G$. Neither positivity ($z_A\le1$), protection (which fixes only
$\ell\ge L(d)$), nor acceptance supplies that penalty by this argument.
\end{proposition}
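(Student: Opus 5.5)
The inequality \cref{eq:acceptance-subexp} comes from Bayes' rule, and the rest of the statement is a comparison of exponential rates. Fix a spanning polymer $\Gamma$ with $|\Gamma|=\ell\ge L(d)=\Omega(d)$, and treat ``the accepted history contains $\Gamma$'' as an event in the unconditioned (pre-post-selection) measure on fine transcripts. The reweighting that acceptance imposes on $\Gamma$ is the likelihood ratio $\Pr(A\mid\Gamma)/\Pr(A)$.

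\textbf{Step 1: the upper bound.} Use only $\Pr(A\mid\Gamma)\le1$ and $\Pr(A)=\pacc$. This gives $\Pr(A\mid\Gamma)/\Pr(A)\le 1/\pacc$. Under \cref{cond:accept} with $\pacc\ge d^{-q}$, the ratio is at most $d^{q}=e^{q\log d}$.

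\textbf{Step 2: the rate.} Since $\ell\ge L(d)\ge c'd$, we have $q\log d\le (q/c')\,(\log\ell)\cdot O(1)=o(\ell)$. So the reweighting factor is $e^{o(\ell)}$, which is the first claim.

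\textbf{Step 3: the accepted activity.} Equivalently, by Bayes' rule, $\Pr(\Gamma\mid A)\le \Pr(\Gamma)/\pacc$. The accepted per-cell activity therefore satisfies $z_A\le z\,e^{o(1)}$ per unit length. Here $z$ is the unconditioned activity, which positivity caps only at $z\le 1$.

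\textbf{Step 4: the required penalty.} Subcriticality in the form \cref{eq:c5-absolute} needs $\sum_{|\Gamma|=\ell}\Pr(\Gamma\mid A)\le C\,(\mu_G z_A)^\ell$ with $\mu_G z_A<1$. Starting from the worst case $z=1$, this requires $\Pr(\Gamma\mid A)\le e^{-a\ell}$ with $\mu_G e^{-a}<1$, i.e.\ $a>\log\mu_G$. By Bayes' rule again, $\Pr(\Gamma\mid A)\le \Pr(\Gamma)\cdot\Pr(A\mid\Gamma)/\pacc$, so with $\Pr(\Gamma)$ of order one this amounts to the stated penalty $\Pr(A\mid\Gamma)\le\pacc e^{-a\ell}$.

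\textbf{Step 5: no input supplies it.} Positivity gives $z_A\le1$, with no gap. Protection only moves the lower summation limit. Acceptance gives a factor $e^{o(\ell)}$, which cannot produce $e^{-a\ell}$ with $a>0$: a penalty $e^{-a\ell}$ would bound the likelihood ratio by $e^{-a\ell}$, which is strictly stronger than anything derivable from $\pacc\ge d^{-q}$ alone. To see this, exhibit a family in which $\Pr(A\mid\Gamma)=\Pr(A)$, for instance an acceptance rule independent of $\Gamma$. There Steps 1--2 are consistent and no per-length penalty arises.

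The main obstacle is making ``by this argument'' precise. The proposition claims only that this particular handle fails, not that subcriticality is false. So the last step must be phrased as saying that the bound of Step 1 is tight up to $e^{o(\ell)}$, witnessed by the $\Gamma$-independent acceptance rule, rather than as a general impossibility.
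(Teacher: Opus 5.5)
Your proposal is correct and follows essentially the same route as the paper's proof: Bayes' rule with $\Pr(A\mid\Gamma)\le1$ and $\pacc\ge d^{-q}$, then $q\log d=o(\ell)$ once $\ell=\Omega(d)$, then the comparison with the per-length penalty $e^{-a\ell}$, $a>\log\mu_G$, needed to push the worst case $z_A=1$ below $1/\mu_G$. Your $\Gamma$-independent acceptance rule is a small addition that makes ``does not supply'' precise, where the paper only asserts that a subexponential factor cannot give an exponential penalty; the paper's proof also discusses the duality and Dobrushin routes, which go beyond the stated claim.
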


\begin{proof}
The reweighting bound \cref{eq:acceptance-subexp} is Bayes' rule: $w_A(\Gamma)\sim w_0(\Gamma)\,\Pr(A\mid\Gamma)/\Pr(A)$,
and $\Pr(A\mid\Gamma)\le1$ while $\Pr(A)=\pacc\ge d^{-q}$. Since $d^q=e^{q\log d}$ and $\ell=\Omega(d)$, the exponent
$q\log d$ is $o(\ell)$, so the reweighting is subexponential in the polymer length. Subcriticality needs the accepted
activity below $z_c=1/\mu_G<1$ (a constant strictly below one), which by
$z_A^\ell\sim z_0^\ell\,\Pr(A\mid\Gamma)/\Pr(A)$ requires the reweighting to supply a factor $e^{-a\ell}$ with
$a>\log\mu_G$ per unit length. A subexponential factor cannot supply a constant-per-length penalty, so \cref{cond:accept}
does not close the gap; neither does positivity ($z_A\le1$) nor protection (which fixes only $\ell\ge L(d)$). Two
routes that appear to help both reduce to the conjecture. A Kramers--Wannier or $\ee\!\leftrightarrow\!\mm$ duality can
identify the critical surface but not on which side the accepted coupling lies, and at the self-dual point the transfer
gap closes; asserting the high-temperature side is asserting $\mu_G z_A<1$. A Dobrushin uniqueness condition~\cite{Dobrushin1968}
$\sup_x\sum_y D_{xy}<1$ for the accepted conditional measure is, for this gas, the statement $\mu_G z_A<1$ itself. In
both cases the input equals the conclusion.
\end{proof}

Standard cluster-expansion, Pirogov--Sinai, and percolation methods prove exponential decay \emph{from} a subcritical
activity~\cite{KoteckyPreiss1986,FernandezProcacci2007,PirogovSinai1975,AizenmanBarsky1987,Grimmett1999}; none derives
the subcritical activity from distance.
A proof of \cref{conj:linetension} would need a new monotonicity or duality argument specific to accepted stabilizer
cobordisms, one that shows post-selection cannot reweight the conversion gas onto the critical side without an excluded
resource.

\subsection{Why the duality and Dobrushin routes reduce to the conjecture}\label{sec:appC-routes}

We spell out the two routes that appear to supply the missing suppression, since each is a natural first idea and each
turns out to assume its own conclusion. Write $Z_{\mathrm{native}}(L,w;z_A)$ and $Z_{\mathrm{conv}}(L,w;z_A)$ for the
accepted polymer partition functions on a length-$L$, width-$w$ annulus in the native and conversion sectors, so that
$Z_\bullet=\langle\bullet|T_\bullet(z_A)^L|\bullet\rangle$ for transfer operators $T_{\mathrm{native}}$,
$T_{\mathrm{conv}}$. The conversion line tension is the free-energy gap
\begin{equation}
  \tau(z_A)=-\lim_{L\to\infty}\frac1L\log\frac{Z_{\mathrm{conv}}}{Z_{\mathrm{native}}}
  =\log\frac{\lambda_{\mathrm{native}}(z_A)}{\lambda_{\mathrm{conv}}(z_A)},
\end{equation}
with $\lambda_\bullet$ the leading transfer eigenvalues, so \cref{cond:subcrit} is exactly
$\lambda_{\mathrm{conv}}(z_A)<\lambda_{\mathrm{native}}(z_A)$, that is $\tau(z_A)>0$.

A Kramers--Wannier or $\ee\!\leftrightarrow\!\mm$ duality identifies the dual coupling and the critical surface, but it
does not fix which side of that surface the accepted ensemble occupies. A high-temperature expansion of the dual model
gives $Z_{\mathrm{conv}}/Z_{\mathrm{native}}\le C\,|V(G_d)|\sum_{\ell\ge L}(\mu_G z_A^*)^\ell$ only under the high-temperature
input $\mu_G z_A^*<1$, and at the self-dual point the two eigenvalues meet, $\lambda_{\mathrm{conv}}=\lambda_{\mathrm{native}}$,
so $\tau=0$ and the exponential suppression is false. Asserting that duality maps the accepted conversion sector to a
native high-temperature expansion therefore already assumes the accepted coupling is on the high-temperature side,
which is \cref{cond:subcrit}, not a derivation of it.

The Dobrushin route is the same input in a different form. Assume for it that the accepted absolute conversion weights
define a positive finite-range Gibbs specification $\nu_A$, and define the influence matrix
\begin{equation}
  D_{xy}=\sup_{\eta,\eta':\,\eta|_{\Lambda\setminus\{y\}}=\eta'|_{\Lambda\setminus\{y\}}}
  \bigl\|\nu_A(\omega_x\in\cdot\mid\eta)-\nu_A(\omega_x\in\cdot\mid\eta')\bigr\|_{\mathrm{TV}} .
\end{equation}
Dobrushin uniqueness~\cite{Dobrushin1968} requires $\alpha_D=\sup_x\sum_y D_{xy}<1$. Since changing the boundary at $y$
influences $x$ only through a chain of conversion polymers from $y$ to $x$, the disagreement-path estimate gives
$D_{xy}\le C_D\sum_{\Gamma:\,x\leftrightarrow y}z_A^{|\Gamma|}$, and summing over $y$ with the self-avoiding-walk count
of \cref{lem:polymer-counting},
\begin{equation}
  \sum_y D_{xy}\le C_D C_{\mathrm{SAW}}\sum_{\ell\ge1}\ell\,(\mu_G z_A)^\ell
  =C_D C_{\mathrm{SAW}}\frac{\mu_G z_A}{(1-\mu_G z_A)^2},
\end{equation}
which is finite only for $\mu_G z_A<1$. So the Dobrushin condition first needs the same subcritical path-summability as
\cref{cond:subcrit}. Both routes prove decay from a high-temperature hypothesis, and neither derives $\mu_G z_A<1$ from
resource-freeness, protection, or acceptance.

\subsection{The closest would-be violation}\label{sec:appC-attack}

The most serious attempt to refute \cref{conj:linetension} post-selects on a spanning gas of conversion strings. Model
the annular conversion cells as a graph $\mathcal A_d$, open each cell independently with probability $q$, and let
$\Phi_d$ be the maximum number of pairwise edge-disjoint open paths spanning the annulus.

\begin{proposition}[Supercritical conversion gas as a would-be refutation]\label{prop:percolation-attack-full}
For fixed $q>p_c$ there are constants $\nu(q)>0$ and $c(q)>0$ such that, for every $0<\kappa<\nu(q)$, the event
$A_d=\{\Phi_d\ge\kappa d\}$ satisfies $\pacc=\Pr_q(A_d)\ge1-e^{-c(q)d}$.
\end{proposition}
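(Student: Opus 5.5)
The plan is the standard supercritical-percolation argument: Menger's theorem turns the count of disjoint open spanning paths into a minimum open cut, and a Peierls-type bound on closed dual surfaces controls that cut. I take $\mathcal A_d$ to be the annular conversion-cell graph, a bounded-degree subgraph of a $(2{+}1)$-dimensional lattice region of width of order $d$ across the spanning direction and comparable cross-section. The cells are opened independently with probability $q>p_c$, where $p_c$ is the bond (or site) percolation threshold of the ambient lattice.

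\emph{Step 1 (Menger).} By Menger's theorem, $\Phi_d$ equals the minimum number of open edges in an edge cut separating the two spanning boundaries of the annulus. Hence the event $\{\Phi_d<\kappa d\}$ means the following: some separating surface $\Sigma$ in the dual lattice, transverse to the spanning direction, has fewer than $\kappa d$ open edges. In the planar strip $\Sigma$ is a dual path of length at least of order $d$. In spacetime it is a dual surface whose area is at least the cross-section.

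\emph{Step 2 (counting cut surfaces).} Any separating surface has size $|\Sigma|\ge a_0 d$. The number of connected dual separating surfaces of size $n$, anchored somewhere in the region, is at most $\mathrm{poly}(d)\,\lambda^n$ for a lattice constant $\lambda$.

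\emph{Step 3 (large deviation).} For a fixed $\Sigma$ the number of open edges is $\mathrm{Bin}(|\Sigma|,q)$. A Chernoff bound shows that having at most $\kappa d\le(\kappa/a_0)|\Sigma|$ open edges has probability at most $e^{-I|\Sigma|}$ once $\kappa/a_0<q$.

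The main obstacle is that the crude union bound $\lambda^n e^{-In}$ closes only when $q$ is close to $1$, not for all $q>p_c$. To get every $q>p_c$ I would use renormalization, the Grimmett--Marstrand / Kesten style argument in Grimmett1999, Ch.~7. Block the lattice into boxes of side $K$ and call a block good if it contains a crossing open cluster connecting to neighbours. For $q>p_c$, the probability that a block is good tends to $1$ as $K\to\infty$. The good blocks dominate a highly supercritical independent field (Liggett--Schonmann--Stacey), so Steps 1--3 apply to the block process. Each good block contributes a disjoint open path segment, which gives $\nu(q)$ of order $K^{-1}$ times the constant from the coarse argument. This yields $\Pr(\Phi_d<\kappa d)\le\mathrm{poly}(d)\sum_{n\ge a_0d/K}\lambda^n e^{-I'n}\le e^{-c(q)d}$ for every $\kappa<\nu(q)$, after absorbing the polynomial prefactor into $c(q)$. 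This is exactly the claimed bound $\pacc=\Pr_q(A_d)\ge1-e^{-c(q)d}$.

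A lighter alternative in the planar case is the known linear lower bound on the number of disjoint crossings in supercritical percolation (Grimmett1999, Kesten's result), which I would cite directly.
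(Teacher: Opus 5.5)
Your argument is correct and has the same skeleton as the paper's proof. Menger's theorem turns $\Phi_d$ into the minimum open capacity of a cut separating the two annular boundaries. A supercritical estimate then makes that capacity linear in $d$ outside an event of probability $e^{-c(q)d}$. The two proofs differ only in how that estimate is obtained. The paper treats $\mathcal A_d$ as a planar annulus and simply cites the classical result, a positive flow constant with exponentially small lower tail; this is your ``lighter alternative''. Your main route rederives it instead. You correctly note that a bare Peierls--Chernoff count over cut surfaces closes only for $q$ near $1$, and you reach all $q>p_c$ by block renormalization with Liggett--Schonmann--Stacey domination.

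In the planar bond case this is more machinery than needed. By duality, the open capacity of a cut is the number of dual-closed edges on a dual circuit around the annulus. That is a $0/1$ first-passage time whose zero-weight edges have density $1-q<1-p_c$, the dual critical value, so Kesten's linear lower bound for first-passage percolation with subcritical zero set gives the estimate directly.

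What your route buys is generality and a clean quantifier order:
\begin{itemize}
  \item It covers site or bond cells and also the $(2{+}1)$-dimensional reading. There it rests on Grimmett--Marstrand and needs the time extent to grow with $d$, because for a slab of bounded thickness the relevant threshold is the slab's, not $p_c(\mathbb Z^3)$.
  \item It produces one concrete value $\nu(q)$ at which the exponential bound is proved. Every $\kappa<\nu(q)$ then follows with the same $c(q)$, because $A_d$ is decreasing in $\kappa$. State that monotonicity explicitly: identifying $\nu(q)$ with the limiting flow constant, as the paper does, naturally leaves the rate depending on $\nu(q)-\kappa$.
\end{itemize}

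One routine detail is missing. Disjoint good-block paths give edge-disjoint microscopic open paths only if the block paths are separated by at least one block, because the clusters certifying goodness live in overlapping neighbourhoods. Running Menger on a thinned block lattice fixes this at the cost of constants.
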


\begin{proof}
This is the standard supercritical crossing estimate for planar percolation~\cite{Grimmett1999}. By Menger's theorem
$\Phi_d$ equals the value of the unit-capacity open-edge maximum flow between the two annular boundary components, and
the minimum open capacity of a cut separating them. For $q>p_c$ the supercritical phase has a positive flow constant
$\nu(q)>0$ across annuli of fixed aspect ratio, so $\Phi_d/d\to\nu(q)$ with an exponentially small lower tail. Choosing
$\kappa<\nu(q)$ gives $\Pr_q(\Phi_d<\kappa d)\le e^{-c(q)d}$.
\end{proof}

The selector is $\Omega(d)$-robust. If $\omega\in A_d$, choose $\kappa d$ edge-disjoint open spanning paths; changing
one open edge to closed destroys at most one of them, so removing all spanning open paths needs at least $\kappa d$
edge changes.

\begin{remark}[Selector-distance convention]\label{rem:selector-distance-convention}
The Boolean event $A_d=\{\Phi_d\ge\kappa d\}$ can be Hamming distance one from its complement at configurations with
$\Phi_d=\lceil\kappa d\rceil$. For a linear decision margin, buffer the event to $\{\Phi_d\ge2\kappa d\}$ and treat
fewer than $\kappa d$ changes as protected; if the microscopic bits are cells, replace edge-disjoint by cell-disjoint.
\end{remark}

This still does not refute the no-go, for two distinct reasons. First, if an open cell applies a local
charge-converting insertion, its microscopic instrument has a nonzero matrix element $P_\eps K_{\mathrm{cell}} P_\mm\ne0$
or $P_\mm K_{\mathrm{cell}} P_\eps\ne0$, where $P_a$ is the local charge-sector projector. This is a local
$\mm\!\leftrightarrow\!\eps$ converter, forbidden by \cref{def:resourcefree}(iii), so the family gives up
\cref{cond:free}. Second, if every cell is charge-non-converting, each local Kraus operator is block diagonal in the
local charge sectors, $P_a K_{\mathrm{cell}} P_b=0$ for $a\ne b$, and a product of such cells creates no coherent
$\mm\!\leftrightarrow\!\eps$ matrix element by itself. In the product-filter version one manufactures the alignment by
post-selecting a favourable outcome on $N_d=\Theta(d^2)$ cells. If each favourable local outcome has probability at
most $\rho<1$, independence gives $\pacc\le\rho^{N_d}=e^{-\Theta(d^2)}$, failing \cref{cond:accept}. Accepting with high
probability instead leaves the coarse-grained effect a native mixture with no protected coherent conversion of order
$\pacc$, failing \cref{cond:magic}.

Finally, even a supercritical gas proves only a large absolute conversion sum $\sum_\Gamma|w_A(\Gamma)|$, not coherent
alignment. Projected onto the logical $\Hxy$ direction the magic coefficient is $M_A=\sum_\Gamma\eta_\Gamma w_A(\Gamma)$
with $|\eta_\Gamma|=1$, and a sharp magic effect needs $|M_A|=\Omega(\pacc)$, while supercriticality bounds only the
absolute sum and random transcript phases can cancel $M_A$. The missing step is a protected, charge-non-converting
mechanism forcing the terms $\eta_\Gamma w_A(\Gamma)$ to align, and no such mechanism follows from the percolation
estimate. It is exactly what \cref{thm:iso} forbids for the fine branches the family coarse-grains.

The construction is therefore the best available \emph{attack surface}, not a refutation. Together with
\cref{prop:obstruction}, it leaves \cref{conj:linetension} open in both directions. The enumeration of \cref{prop:zc}
gives a spatial-strip proxy $z_c^{\perp}(d)=1/\mu_{\perp}(d)\approx0.40$--$0.52$ for the entropy scale. This is an
illustrative target, not the operative threshold: the object the theorem counts is the spacetime connective constant
$\mu_G$, and the two-dimensional enumeration calibrates only a lower-dimensional proxy for it.

\subsection{What a proof would have to establish}\label{sec:appC-necessary}

The two failure modes of the percolation attack say something positive about any eventual proof, so we record it. A
proof of \cref{conj:linetension} does not need to control the full accepted instrument. By \cref{lem:norm} and the
assembly of \cref{app:thm2}, it needs only a single scalar bound on the accepted conversion activity, namely a constant
$\delta>0$, independent of $d$, with
\begin{equation}\label{eq:activity-target}
  z_A\le\frac{1}{\mu_G}-\delta .
\end{equation}
Everything upstream of this is already in place: the decomposition and cleaning assumptions supply the polymer
expansion, and the counting lemma supplies $\mu_G$. So the entire open problem is to bound one effective coupling of the
accepted ensemble below the critical value.

The percolation attack shows what such a bound must survive. It must hold for an ensemble whose acceptance is chosen
adaptively on the syndrome, since that is the only freedom a decoder has, and it must remain true when the acceptance
event is a complicated decoded region and not a simple product filter. A bound that used only the marginal
per-cell conversion probability would not survive, because the attack tunes exactly that marginal toward criticality;
what is needed is a bound that uses protection or resource-freeness to constrain the \emph{correlations} of the
accepted ensemble, not just its marginals. Concretely, one wants a statement of the form
\begin{equation}
  \Pr(A\mid \Gamma)\le \Pr(A)\,e^{-a|\Gamma|},\qquad a>\log\mu_G,
\end{equation}
which \cref{prop:obstruction} shows does not follow from acceptance alone, but which a monotonicity of the accepted
measure under lengthening a conversion string would give. The natural candidate is a stochastic-domination or
disagreement-percolation argument that compares the accepted conversion measure to a genuinely subcritical reference
measure supported away from the excluded resources. We have not found such a comparison, and we suspect it needs an
input specific to stabilizer cobordisms, some form of the fact that a resource-free spacetime has no place to store the
long-range correlation a critical ensemble requires. That input, if it exists, is the missing lemma.

We close with the honest bottom line on the conjecture, stated as three facts. It is not proved: none of protection,
positivity, or acceptance forces the accepted activity below the critical value, and the two arguments that look as
though they should, duality and Dobrushin uniqueness, each assume the subcritical side they are supposed to establish.
It is not refuted: the construction that would build a supercritical accepted gas either uses the forbidden local
converter, or pays an acceptance cost exponential in the area, or produces incoherent weight that the fine-grained
theorem forbids from aligning. And it is sharp: the whole question is a single scalar inequality
$z_A<1/\mu_G$ for the accepted spacetime gas; the spatial-strip proxy of \cref{app:numerics} gives an illustrative
entropy scale that decreases with the code distance, while the exact spacetime value is left to a proof. A conditional no-go that isolates its one open assumption to this degree is, in our view, the
right form for the result to take until the statistical-mechanics input is supplied, and we have tried throughout to
keep
the boundary between what is proved and what is assumed exactly where the mathematics puts it, not where a cleaner
headline would prefer.

\section{Spatial-strip calibration of the entropy scale}\label{app:numerics}

Subcriticality (\cref{cond:subcrit}) asks whether the accepted spacetime charge-conversion gas has $\mu_G z<1$, with
$\mu_G$ the connective constant of the accepted spacetime cell graph $G_d$ of \cref{app:thm2}. Neither $\mu_G$ nor the
activity $z$ is settled analytically. This appendix does not compute $\mu_G$. It computes a lower-dimensional
\emph{proxy}: the connective constant $\mu_{\perp}(d)$ of self-avoiding walks on the two-dimensional annular strip, which
fixes an illustrative entropy scale $z_c^{\perp}(d)=1/\mu_{\perp}(d)$. The proxy is a calibration of the open problem,
not a bound on it: it gives a sense of the scale the missing lemma must reach and confirms that local positivity alone
does not reach it, but the operative threshold is $1/\mu_G$ for the full spacetime gas, which the two-dimensional
enumeration does not determine.

\paragraph{Method.} On the spatial slice, the conversion polymers restrict to connected charge-conversion strings on
the annulus that wraps the logical operator, whose entropy is the number of self-avoiding walks on the corresponding
lattice strip, with growth rate the connective constant $\mu_{\perp}$~\cite{MadrasSlade1993}. We take a width-$w(d)$
strip of the square lattice, with $w(d)$ the annulus cross-section of the distance-$d$ patch (we report $w=d$), and
count self-avoiding walks from a fixed origin by deterministic depth-first enumeration up to length $n_{\max}$,
estimating $\mu_{\perp}(d)$ from the ratio $c_{n+1}/c_n$ of walk counts. The enumeration is exact and seed-free; the
full script is released (see the code availability statement). The spacetime graph $G_d$ adds the time direction and has
a larger connective constant, so $\mu_{\perp}(d)$ is a proxy for the entropy scale and not the value of $\mu_G$.

\begin{estimate}[Numerical, spatial-strip proxy]\label{prop:zc}
On the width-$w=d$ annulus strips, the ratio estimate $\mu_{\perp}(d)$ of the spatial conversion-polymer connective
constant, taken as the mean of the last three finite-length ratios $c_{n+1}/c_n$, and the induced proxy activity
$z_c^{\perp}(d)=1/\mu_{\perp}(d)$, are
\begin{center}
\begin{tabular}{cccccc}
\toprule
$d$ & $3$ & $5$ & $7$ & $9$ & ($w\!\to\!\infty$)\\
\midrule
$n_{\max}$       & $20$ & $18$ & $17$ & $16$ & --- \\
$\mu_{\perp}(d)$         & $1.918$ & $2.232$ & $2.407$ & $2.477$ & $2.638$ \\
$z_c^{\perp}(d)=1/\mu_{\perp}(d)$ & $0.521$ & $0.448$ & $0.415$ & $0.404$ & $0.379$ \\
\bottomrule
\end{tabular}
\end{center}
The ratio estimates rise monotonically toward the square-lattice value $\mu_{\mathrm{SAW}}\approx2.638$; the residual
finite-$n_{\max}$ bias is of order the last ratio increment, $\lesssim0.05$, and is an overestimate of $z_c^{\perp}$
(the true $\mu_{\perp}$ is slightly larger), so it does not weaken the reading below. These are numerical estimates on a
finite two-dimensional lattice, not analytic bounds, and they proxy the spatial entropy scale only, not the spacetime
$\mu_G$.
\end{estimate}

\paragraph{Reading.} Three facts follow, read as statements about the spatial proxy. First, $\mu_{\perp}(d)>1$ for
every $d$, so $z_c^{\perp}<1$ strictly: on the spatial strip the entropy already puts the critical activity below
$z_c^{\perp}\approx0.40$--$0.52$, well inside the interval that local positivity leaves open. Second, local positivity
(\cref{app:thm2}) gives only $z\le1$; at the worst case $z=1$ the line tension against this proxy is
$\tau=-\log\mu_{\perp}<0$, supercritical. Positivity therefore does not deliver subcriticality even against the
lower-dimensional proxy, and a fortiori not against $\mu_G$. Third, $z_c^{\perp}(d)$ decreases with $d$, so the
constraint tightens for larger codes. Whether the accepted, post-selected activity $z_A$ of a resource-free protocol
falls below the operative threshold $1/\mu_G$ is the instrument-level input that \cref{app:linetension} shows is not
fixed by resource-freeness and protection alone.

The measurement is deliberately modest, and its role is to calibrate the scale of the open problem, not to settle it. It
computes the entropy of the two-dimensional spatial slice, a proxy that a finite-strip enumeration converges to; it does
not compute the spacetime connective constant $\mu_G$, which carries the time direction and is larger. Two things
therefore remain for a proof: the operative entropy $\mu_G$ of the accepted spacetime graph, and the energy side, the
accepted activity $z_A$ that depends on the instrument and is what the missing lemma must control. We do not claim the
connective constant is settled; the spatial proxy only says that the window local positivity leaves open is already
nonempty on the slice, and that it narrows with $d$. At $d=9$ the proxy threshold is near the square-lattice value, so
even the slice demands a constant gap under the positivity bound of one. That is the sense, a modest one, in which the
numerics sharpen the conjecture.

\paragraph{Deterministic enumeration.} For the finite-width probe, replace the annulus locally by the square-lattice
strip $S_w=\mathbb Z\times\{0,1,\ldots,w-1\}$. A self-avoiding-walk state at depth $n$ is a pair $(v_n,V_n)$, the
current endpoint $v_n\in S_w$ and the visited set $V_n$ with $|V_n|=n+1$. The walk starts at $v_0=(0,\lfloor w/2\rfloor)$,
and the allowed moves are the four nearest-neighbour steps that stay in the strip and avoid $V_n$. Depth-first search
enumerates all extensions and increments the count $c_w(n)$ at each depth. The procedure is deterministic and seed-free.
The finite-$n$ ratio estimator is $\widehat\mu_w(n)=c_w(n)/c_w(n-1)$. For fixed $w$ the strip has a transfer-matrix
description, so $c_w(n)=A_w\mu_w^n+O(\mu_{w,2}^n)$ with $\mu_{w,2}<\mu_w$, and $\widehat\mu_w(n)\to\mu_w$.

\paragraph{Transfer-matrix route.} The same connective constant follows from a transfer matrix on the width-$w$ strip,
which is a second, independent estimate of the line tension. A transfer state records the connectivity pattern of
occupied conversion-polymer edges crossing a vertical cut, together with the sector label, native for trivial annular
charge and conversion for one protected $\mm\!\leftrightarrow\!\eps$ defect threading the annulus. Adding a column gives
transfer operators $T_{\mathrm{native}}(z)$ and $T_{\mathrm{conv}}(z)$, each occupied conversion cell contributing
activity $z$, with leading eigenvalues $\lambda_{\mathrm{native}}(z)$ and $\lambda_{\mathrm{conv}}(z)$. The transfer-gap
estimator for the line tension is $\widehat\tau_w(z)=\log[\lambda_{\mathrm{native}}(z)/\lambda_{\mathrm{conv}}(z)]$, and
the critical activity is where the gap closes, $\lambda_{\mathrm{conv}}(z_c)=\lambda_{\mathrm{native}}(z_c)$. For the
pure conversion gas the generating function $C_w(z)=\sum_n c_w(n)z^n$ has radius of convergence
$z_c(w)=1/\lim_n c_w(n)^{1/n}=1/\mu_w$ by Cauchy--Hadamard, and the transfer matrix computes the same growth rate, so
the two routes agree: $\widehat\tau_w(z)=-\log(\mu_w z)$ and $z_c(w)=1/\mu_w$. In the full accepted model the native
background renormalises both eigenvalues, but the criterion is unchanged: subcriticality is the strict transfer gap
$\lambda_{\mathrm{conv}}(z_A)<\lambda_{\mathrm{native}}(z_A)$.

\paragraph{A width-two check.} The narrowest nontrivial strip gives a closed-form sanity check on the transfer-matrix
machinery. A directed non-reversing walk on the width-two strip has a transfer matrix whose growth rate is the golden
mean, since a step either advances along the strip or switches row, giving the recurrence $a_{n+1}=a_n+a_{n-1}$ and
\begin{equation}
  \mu_2^{\mathrm{dir}}=\frac{1+\sqrt5}{2}\approx1.618,\qquad 1/\mu_2^{\mathrm{dir}}=\frac{2}{1+\sqrt5}\approx0.618 .
\end{equation}
This is a distinct walk class from the self-avoiding walk of \cref{prop:zc}, and the two need not share a growth rate at
finite width, so we use it only to check that the transfer-matrix setup reproduces a hand-solvable case, not as a
validation of the self-avoiding enumeration. The width-three and wider strips have larger transfer matrices without a
closed form, and there the ratio estimator of \cref{prop:zc} is the practical route.

\paragraph{Finite-size and dimensional caveat.} The enumeration values are finite-$w$, finite-$n_{\max}$ estimates, not
analytic bounds. Finite width suppresses entropy relative to the plane, and finite depth leaves ratio-estimator
corrections. More importantly, the enumeration is two-dimensional, whereas the polymers of \cref{thm:tradeoff} live in
the $(2{+}1)$-dimensional spacetime graph $G_d$; the spacetime connective constant $\mu_G$ is larger than the spatial
$\mu_{\perp}$, so $z_c^{\perp}$ overestimates the operative threshold $1/\mu_G$. The reported $z_c^{\perp}(d)$ should
therefore be read as a spatial-slice calibration of the entropy scale, not as the universal threshold for the accepted
spacetime gas and not as a proof of \cref{cond:subcrit}. Even so, the trend is informative: the proxy falls from
$0.618$ at $w=2$ toward the square-lattice $0.379$, so the wider the code the smaller the activity a subcriticality
proof must reach, and the narrower the window that local positivity leaves open.

\section{A worked example: the distance-3 patch}\label{app:example}

We make the objects of the theorem concrete on the smallest nontrivial patch, the distance-3 rotated surface code
with nine data qubits. This is the setting a first cultivation attempt uses, and it shows both why the resource-free
check fails and what the fold supplies.

\paragraph{The patch and its logical strings.} The nine data qubits sit on a $3\times3$ grid. The stabilizer group is
generated by the weight-4 bulk plaquettes and vertices together with the weight-2 boundary generators, eight
independent generators in all, leaving one logical qubit. The boundaries are rough top and bottom and smooth left and
right, so an $\mm$-string can end on the smooth edges and an $\ee$-string on the rough edges. The logical $\Xbar$ is a
horizontal weight-3 $X$-string between the smooth boundaries. As an anyon string it carries an $\mm$ charge across the
patch, so $d_b(\Xbar)=\mm$. The logical $\Zbar$ is a vertical weight-3 $Z$-string between the rough boundaries, with
$d_b(\Zbar)=\ee$. Their product $\Ybar=i\Xbar\Zbar$ is the diagonal $\eps$-string, $d_b(\Ybar)=\eps$. The three strings
realise the braiding $\Braid(\mm,\ee)=\Braid(\mm,\eps)=-1$ through the single anticommutation $\Xbar\Zbar=-\Zbar\Xbar$.
Cleaning to a canonical representative, the analysis of \cref{app:thm1}, means pushing an $X$-string off the code
stabilizers to the weight-3 horizontal line and reading its homology class across the annulus, which on this patch is
simply whether it connects the two smooth boundaries.

\paragraph{Why the resource-free check fails.} Suppose we try to check $\Hxy$ without any added structure. A single
transcript of resource-free stabilizer measurements can measure $\Xbar$, or measure $\Ybar$, or interleave the two,
but by \cref{lem:collapse} its accepted effect is a stabilizer projector $\Pi_S$ with $S$ abelian. On the $d=3$ patch
one checks directly that $\Xbar$ and $\Ybar$ cannot both lie in a commuting group with a nonzero joint eigenspace:
they are conjugate logical operators, $\Xbar\Ybar=-\Ybar\Xbar$, so any $\Pi_S$ containing cleaned representatives of
both charges $\mm$ and $\eps$ annihilates the code space, exactly the $P_{\text{code}}\,p\,P_{\text{code}}=0$
mechanism of \cref{prop:isotropy}. One can see the collapse in a single line. If the transcript first measures $\Xbar$ with outcome $+$ and then $\Ybar$,
the accepted Kraus operator is $K=P_{\Ybar,+}P_{\Xbar,+}$, and its effect is
\begin{equation}
  K^\dagger K=P_{\Xbar,+}P_{\Ybar,+}P_{\Xbar,+}
  =P_{\Xbar,+}\,\frac{I+\Ybar}{2}\,P_{\Xbar,+}
  =\tfrac12\,P_{\Xbar,+},
\end{equation}
using $P_{\Xbar,+}\Ybar P_{\Xbar,+}=0$ since $\Ybar$ anticommutes with $\Xbar$. The second measurement adds nothing to
the charge content; the effect is proportional to the single stabilizer projector $P_{\Xbar,+}$, whose charge is the
isotropic $\{1,\mm\}$. The best a resource-free family can do is coarse-grain, running the $\Xbar$ measurement with
probability $\tfrac12$ and the $\Ybar$ measurement with probability $\tfrac12$ and accepting the $+$ outcome. The
accepted effect is
\begin{equation}
  \Eacc=\tfrac12 P_{\Xbar,+}+\tfrac12 P_{\Ybar,+}=\tfrac12 I+\tfrac14(\Xbar+\Ybar),
\end{equation}
with normalised Bloch vector $\bm v=(\tfrac12,\tfrac12,0)$, so
\begin{equation}
  \|\bm v\|_1=\tfrac12+\tfrac12=1,\qquad \Dstab(\Ehat)=\max(0,\|\bm v\|_1-1)=0 .
\end{equation}
The effect sits on the octahedron boundary. It is not the coherent check but a coin flip between two Pauli
measurements, and it retains which one was performed. This is the classical-mixture loophole closed by the magic
witness, made explicit on the patch. The sharp check would instead need $\bm v=(1/\sqrt2,1/\sqrt2,0)$ with
$\|\bm v\|_1=\sqrt2$, a point the coin flip can never reach because classical mixing only shrinks the $\ell_1$ norm.

\paragraph{What the fold supplies.} A fold-transversal Hadamard folds the patch along its diagonal and applies a
transversal Hadamard, implementing the boundary automorphism $\ee\leftrightarrow\mm$~\cite{Sahay2025,KobayashiZhu2023}.
On the folded lattice the two boundary types are identified, so the code becomes self-dual, and the transversal
Hadamard acts on the logical qubit as the logical Hadamard $\overline H$, exchanging $\Xbar$ and $\Zbar$. At the anyon
level this is the automorphism of $D(\Zt)$ that swaps $\ee$ and $\mm$ and fixes $\eps$.

We are careful about what this measures, because the point is easy to misstate. A single-qubit Clifford conjugation
sends a Pauli axis to a Pauli axis, and $\Hxy=(\Xbar+\Ybar)/\sqrt2$ is not a Pauli operator, so no Clifford brings
$\Hxy$ to a Pauli axis and no ordinary boundary readout of the folded code measures it. What the self-dual structure
supplies is different: it makes a \emph{non-Pauli Hermitian logical Clifford observable} measurable as a fold-supported
logical observable. The natural one is the logical Hadamard axis $H_{XZ}=(\Xbar+\Zbar)/\sqrt2$, whose two Pauli labels
carry the charges $\mm$ and $\ee$, so measuring it coherently needs the $\ee\!\leftrightarrow\!\mm$ relation that the
self-dual wall supplies and that \cref{fig:fold} draws. A single-qubit logical Clifford then relabels the Pauli pair
$(\Xbar,\Zbar)$ to $(\Xbar,\Ybar)$, carrying $H_{XZ}$ to $\Hxy$. We are precise about the status of this last step: it
is a logical Clifford-frame relabeling of the Pauli labels of the observable, sending the $\Zbar\sim\ee$ label to
$\Ybar\sim\eps$, and it is not a braided autoequivalence of the bare $D(\Zt)$ phase, which cannot exchange the boson
$\ee$ and the fermion $\eps$. The topological resource is the self-dual $\ee\!\leftrightarrow\!\mm$ wall; the subsequent
$\Zbar\mapsto\Ybar$ step is a logical Clifford-frame conversion used only to express the cultivated axis as $\Hxy$. So
the resource is the self-dual identification that permits measuring the logical Hadamard-type observable itself, not a
Pauli readout in a rotated frame, and it is what no resource-free cell can supply. Cultivation protocols realise the check at constant acceptance this
way~\cite{GidneyShuttyJones2024,Claes2025,Sahay2025,Vaknin2025}.

\begin{figure}[t]
\centering
\begin{tikzpicture}[scale=1.0,>=Latex]
  \draw[thick] (0,0) rectangle (4,4);
  \draw[very thick,dash pattern=on 3pt off 2pt,purple] (0,0)--(4,4);
  \node[purple,rotate=45,anchor=south] at (2.85,2.8) {\footnotesize self-dual wall (fold)};
  \draw[red,very thick] (0,1)--(1,1);
  \node[red,anchor=east] at (-0.08,1) {\footnotesize $\mm$};
  \draw[blue,very thick] (1,1)--(1,0);
  \node[blue,anchor=north] at (1,-0.08) {\footnotesize $\ee$};
  \fill (1,1) circle (0.04);
  \node[anchor=west] at (1.35,0.72) {\footnotesize $\ee\!\leftrightarrow\!\mm$};
\end{tikzpicture}
\caption{The self-dual wall of the fold. It exchanges the two boundary types, so an $\mm$-string crossing it continues
as an $\ee$-string, the automorphism $\ee\leftrightarrow\mm$. This is the coherence that a fold-supported measurement of
the non-Pauli logical Hadamard observable $H_{XZ}=(\Xbar+\Zbar)/\sqrt2$ needs; a single-qubit logical Clifford-frame
relabeling of the Pauli labels ($\Zbar\mapsto\Ybar$) then expresses the cultivated axis as $\Hxy$. A resource-free patch
has no such wall.}
\label{fig:fold}
\end{figure}

The fold is the anyon-permuting structure that \cref{def:resourcefree}(i) excludes, drawn in \cref{fig:fold}. On the
$d=3$ patch it is a concrete, weight-$O(d)$ operation, not a local gate, which is why it is a resource and not a free
cell. Its cost also
matches the theorem: the conversion string it supplies has weight of order $d$, exactly the spanning length that
\cref{app:thm2} shows any accepted conversion must have. So the anyon-permuting, self-dual resource that the fold
supplies is necessary under the assumptions of \cref{thm:main}, and the fold construction exhibits one sufficient way to
supply it; this closes the tightness picture of \cref{sec:results-roles}. The necessity is of the resource type, not of
the fold specifically: a cross-cap, a code switch, or another anyon-permuting defect supplies the same resource, and the
theorem says this necessity is not an accident of the $d=3$ construction. Within the resource-free class, and under the
assumptions of \cref{thm:main}, some such resource is needed at every distance.

\paragraph{The general lesson from the small case.} The distance-3 patch is small enough to hold in one's head, and it
already shows the whole argument. The magic axis needs two charges that anticommute as logical operators, a single
stabilizer transcript can carry only one of them because a commuting group cannot contain an anticommuting pair, and a
classical mixture of the two collapses onto the stabilizer boundary because mixing shrinks the $\ell_1$ norm. Every
resource-free move on the patch is one of these three, and none reaches the sharp check. The only way out on the small
patch is the fold, which physically identifies the two boundary types and so lets a single string carry both charges
in turn. Nothing in this picture used $d=3$ except the convenience of drawing it. At larger distance the strings are
longer, the stabilizer group is bigger, and the fold is a wider operation, but the algebra is the same, and the general
theorem is what promotes the small-case observation into a statement at every distance, with the one caveat that the
coarse-grained decoded family needs the assumptions of \cref{app:thm2} that a single transcript does not.

\section{The local-noise reduction and its residual hypothesis}\label{app:localclass}

It is natural to ask whether the two structural assumptions (\cref{ass:decomp,ass:cleaning}) and subcriticality
(\cref{cond:subcrit}) can be \emph{derived} instead of assumed, once the noise is local and the operations are
stabilizer. This appendix carries that reduction out honestly. Local stochastic noise does supply the polymer structure
that \cref{ass:decomp,ass:cleaning} describe. It does not supply subcriticality: under post-selection the reduction
terminates at a single explicit hypothesis on the decoder, and that hypothesis is exactly \cref{cond:subcrit} in local
form. We state what is gained and where it stops, because naming the residual precisely is more useful than absorbing it
into a class definition.

\subsection{The local-noise stabilizer setting}\label{app:localclass-setting}

Fix constants $R,\Delta,M,\lambda<\infty$, independent of $d$. Consider a protocol whose spacetime cell graph
$\mathcal G_d$ has degree at most $\Delta$ and range at most $R$, with fixed rough and smooth boundaries, built from
product stabilizer ancillas, charge-preserving local Clifford and Pauli operations, native Pauli--Wilson measurements,
and classical feed-forward, with no twist, fold, cross-cap, code switch, self-dual patch, or local coherent
$\mm\!\leftrightarrow\!\eps$ converter. This is the local, resource-free (\cref{def:resourcefree}) skeleton of a
check-and-grow protocol. Noise is local stochastic Pauli noise at rate $p$: after Pauli-frame propagation through the
ideal stabilizer circuit, the set $C(\omega)\subset V(\mathcal G_d)$ of cells carrying a conversion label obeys
$\Pr_p[\,S\subset C(\omega)\,]\le(\lambda Mp)^{|S|}$ for every finite $S$, with $M$ bounding the conversion labels per
cell. Finite-range correlated Pauli noise with the same domination is allowed.

\subsection{What local noise supplies}\label{app:localclass-gains}

For a fixed noise configuration and measurement transcript the circuit is a stabilizer circuit with Pauli faults, so
its code-space effect is a stabilizer effect and the local Pauli-noise expansion is charge-resolved cell by cell,
\[
   \mathcal E_x=\mathcal E_x^{\mathrm{nat}}+\sum_{\alpha\in\mathrm{conv}(x)}\lambda_{x,\alpha}(p)\,\mathcal E_{x,\alpha},
   \qquad
   \sum_{\alpha\in\mathrm{conv}(x)}|\lambda_{x,\alpha}(p)|\le c_{\mathrm{op}}\lambda Mp .
\]
Terms whose cleaned charge content is isotropic are positive combinations of one-qubit stabilizer effects; grouping
them gives a candidate $F_{\mathrm{native}}$ and a remainder organised over connected conversion polymers, the
candidate $F_{\mathrm{bad}}=\sum_\gamma B_\gamma$. This is the structural content of \cref{ass:decomp}: local noise
produces a genuine charged-polymer organisation with per-cell activity $O(p)$. The minimum-length content of
\cref{ass:cleaning} follows in the same setting from the cleaning lemma of \cref{app:thm1}: because the spacetime is a
product $D(\Zt)$ cobordism with fixed boundaries and no wall, a nontrivial annular $\mm\!\leftrightarrow\!\eps$ label
must be carried by a connected chain of conversion cells, and a chain of cleaned length below $L(d)$ lies in a
correctable region, so its code action is native and cannot change the annular charge. A genuine conversion term
therefore has cleaned length at least $L(d)$.

Two caveats keep this honest. The split $F_{\mathrm{native}}\in\mathcal N$ is a positivity statement, and a signed
cluster expansion can spoil the positivity of the native remainder unless the accepted weights are controlled; and the
polymer organisation is canonical only before post-selection reweights the transcripts. Both caveats point to the same
place: the accepted weight of a polymer, not its bare noise weight, is what \cref{cond:subcrit} must bound.

\subsection{Where the reduction stops}\label{app:localclass-residual}

The activity that \cref{cond:subcrit} counts is the \emph{accepted} activity. Writing $a_d(\omega)\in[0,1]$ for the
acceptance weight, subcriticality asks that
\begin{equation}\label{eq:cbl}
   \Pr_p[\,S\subset C(\omega)\mid A\,]
   =\frac{1}{\pacc}\sum_{\omega:\,S\subset C(\omega)}\Pr_p(\omega)\,a_d(\omega)
   \;\le\;(\Lambda\lambda Mp)^{|S|}
\end{equation}
for a constant $\Lambda$ independent of $d$: conditioning on acceptance may reweight a fixed conversion set by at most a
constant per cell. This \emph{conditional bounded-likelihood} property is what makes the accepted activity $z_A=O(p)$
and hence $\mu_G z_A<1$ at small $p$, closing the reduction. It is also exactly the point that local noise does not give
for free. The bare bound $\Pr_p[S\subset C(\omega)]\le(\lambda Mp)^{|S|}$ is unconditional and holds automatically, but
it does not control the conditional ensemble \cref{eq:cbl}: since $a_d\le 1$,
\[
   \sum_{\omega:\,S\subset C}\Pr_p(\omega)\,a_d(\omega)\le\Pr_p[S\subset C],
\]
which is only an unconditional numerator bound; after division by $\pacc$ the conditional probability can be larger by
any factor up to $1/\pacc$. A
decoder that accepts almost only when a conversion polymer occurs realises this: with one conversion cell of
probability $p$ and acceptance equal to its indicator, $\Pr_p[C\mid A]=1$, an enhancement of order $1/p$ over the bare
$p$. Such a decoder drives $z_A$ to order one and defeats subcriticality while keeping the noise strictly local. It is
the same escape route the general obstruction of \cref{sec:open} describes, now visible at the level of a single cell.

So the local-noise reduction is genuine but partial. It derives the charged-polymer structure of
\cref{ass:decomp,ass:cleaning} from locality and cleaning, and it reduces the entire remaining question to the single
condition \cref{eq:cbl} on the decoder: that post-selection does not reweight the conversion sector by more than a
constant per unit length. Condition \cref{eq:cbl} is \cref{cond:subcrit} for local noise, stated as a checkable property
of the acceptance rule. A complementary-gap decoder whose accept region is conversion-blind, in the precise sense that
its indicator has bounded likelihood ratio against the conversion labels, satisfies \cref{eq:cbl}; a decoder tuned to
accept on conversion does not. Which of these describes the decoder of a real cultivation protocol is open, and it is
the concrete form the line-tension conjecture (\cref{conj:linetension}) takes in the local-noise class. We record it as
a residual hypothesis, isolated and physically interpretable, not buried as a hidden clause of a class definition. The
one part of the obstruction that needs none of this is the fine-grained isotropy theorem (\cref{thm:iso}), which forbids
the coherent alignment on every single stabilizer transcript unconditionally.

\end{document}